\newtheorem{thm}{Theorem}[section]
\newtheorem{crl}[thm]{Corollary}
\newtheorem{lmm}[thm]{Lemma}
\newtheorem{prp}[thm]{Proposition}
\theoremstyle{definition}
\newtheorem{dfn}[thm]{Definition}
\newtheorem{exa}[thm]{Example}
\theoremstyle{remark}
\newtheorem*{rem}{Remark}
\let\pdfoutput=\undefined\fi
\chardef\@x10\chardef\@xv60
\def\tcitime{
\def\@time{%
  \@minute\time\@hour\@minute\divide\@hour\@xv
  \ifnum\@hour<\@x 0\fi\the\@hour:%
  \multiply\@hour\@xv\advance\@minute-\@hour
  \ifnum\@minute<\@x 0\fi\the\@minute
  }}%
\def\x@hyperref#1#2#3{%
   \catcode`\~ = 12
   \catcode`\$ = 12
   \catcode`\_ = 12
   \catcode`\# = 12
   \catcode`\& = 12
   \y@hyperref{#1}{#2}{#3}%
}
\def\y@hyperref#1#2#3#4{%
   #2\ref{#4}#3
   \catcode`\~ = 13
   \catcode`\$ = 3
   \catcode`\_ = 8
   \catcode`\# = 6
   \catcode`\& = 4
}
\def\QCTOpt[#1]#2{%
  \def\QCTOptB{#1}
  \def\QCTOptA{#2}
}
\def\QCTNOpt#1{%
  \def\QCTOptA{#1}
  \let\QCTOptB\empty
}
\def\Qct{%
  \@ifnextchar[{%
    \QCTOpt}{\QCTNOpt}
}
\def\QCBOpt[#1]#2{%
  \def\QCBOptB{#1}%
  \def\QCBOptA{#2}%
}
\def\QCBNOpt#1{%
  \def\QCBOptA{#1}%
  \let\QCBOptB\empty
}
\def\Qcb{%
  \@ifnextchar[{%
    \QCBOpt}{\QCBNOpt}%
}
\def\PrepCapArgs{%
  \ifx\QCBOptA\empty
    \ifx\QCTOptA\empty
      {}%
    \else
      \ifx\QCTOptB\empty
        {\QCTOptA}%
      \else
        [\QCTOptB]{\QCTOptA}%
      \fi
    \fi
  \else
    \ifx\QCBOptA\empty
      {}%
    \else
      \ifx\QCBOptB\empty
        {\QCBOptA}%
      \else
        [\QCBOptB]{\QCBOptA}%
      \fi
    \fi
  \fi
}
\def\GRAPHICSPS#1{%
 \ifcase\GRAPHICSTYPE
   \special{ps: #1}%
 \or
   \special{language "PS", include "#1"}%
 \fi
}%
\def\graffile#1#2#3#4{%
    \bgroup
	   \@inlabelfalse
       \leavevmode
       \@ifundefined{bbl@deactivate}{\def~{\string~}}{\activesoff}%
        \raise -#4 \BOXTHEFRAME{%
           \hbox to #2{\raise #3\hbox to #2{\null #1\hfil}}}%
    \egroup
}%
\def\draftbox#1#2#3#4{%
 \leavevmode\raise -#4 \hbox{%
  \frame{\rlap{\protect\tiny #1}\hbox to #2%
   {\vrule height#3 width\z@ depth\z@\hfil}%
  }%
 }%
}%
\let\nographics=\@msidraft
\newif\ifwasdraft
\def\GRAPHIC#1#2#3#4#5{%
   \ifnum\@msidraft=\@ne\draftbox{#2}{#3}{#4}{#5}%
   \else\graffile{#1}{#3}{#4}{#5}%
   \fi
}
\def\addtoLaTeXparams#1{%
    \edef\LaTeXparams{\LaTeXparams #1}}%
\newif\ifBoxFrame \BoxFramefalse
\newif\ifOverFrame \OverFramefalse
\newif\ifUnderFrame \UnderFramefalse
\def\BOXTHEFRAME#1{%
   \hbox{%
      \ifBoxFrame
         \frame{#1}%
      \else
         {#1}%
      \fi
   }%
}
\def\doFRAMEparams#1{\BoxFramefalse\OverFramefalse\UnderFramefalse\readFRAMEparams#1\end}%
\def\readFRAMEparams#1{%
 \ifx#1\end%
  \let\next=\relax
  \else
  \ifx#1i\dispkind=\z@\fi
  \ifx#1d\dispkind=\@ne\fi
  \ifx#1f\dispkind=\tw@\fi
  \ifx#1t\addtoLaTeXparams{t}\fi
  \ifx#1b\addtoLaTeXparams{b}\fi
  \ifx#1p\addtoLaTeXparams{p}\fi
  \ifx#1h\addtoLaTeXparams{h}\fi
  \ifx#1X\BoxFrametrue\fi
  \ifx#1O\OverFrametrue\fi
  \ifx#1U\UnderFrametrue\fi
  \ifx#1w
    \ifnum\@msidraft=1\wasdrafttrue\else\wasdraftfalse\fi
    \@msidraft=\@ne
  \fi
  \let\next=\readFRAMEparams
  \fi
 \next
 }%
\def\IFRAME#1#2#3#4#5#6{%
      \bgroup
      \let\QCTOptA\empty
      \let\QCTOptB\empty
      \let\QCBOptA\empty
      \let\QCBOptB\empty
      #6%
      \parindent=0pt
      \leftskip=0pt
      \rightskip=0pt
      \setbox0=\hbox{\QCBOptA}%
      \@tempdima=#1\relax
      \ifOverFrame
          \typeout{This is not implemented yet}%
          \show\HELP
      \else
         \ifdim\wd0>\@tempdima
            \advance\@tempdima by \@tempdima
            \ifdim\wd0 >\@tempdima
               \setbox1 =\vbox{%
                  \unskip\hbox to \@tempdima{\hfill\GRAPHIC{#5}{#4}{#1}{#2}{#3}\hfill}%
                  \unskip\hbox to \@tempdima{\parbox[b]{\@tempdima}{\QCBOptA}}%
               }%
               \wd1=\@tempdima
            \else
               \textwidth=\wd0
               \setbox1 =\vbox{%
                 \noindent\hbox to \wd0{\hfill\GRAPHIC{#5}{#4}{#1}{#2}{#3}\hfill}\\%
                 \noindent\hbox{\QCBOptA}%
               }%
               \wd1=\wd0
            \fi
         \else
            \ifdim\wd0>0pt
              \hsize=\@tempdima
              \setbox1=\vbox{%
                \unskip\GRAPHIC{#5}{#4}{#1}{#2}{0pt}%
                \break
                \unskip\hbox to \@tempdima{\hfill \QCBOptA\hfill}%
              }%
              \wd1=\@tempdima
           \else
              \hsize=\@tempdima
              \setbox1=\vbox{%
                \unskip\GRAPHIC{#5}{#4}{#1}{#2}{0pt}%
              }%
              \wd1=\@tempdima
           \fi
         \fi
         \@tempdimb=\ht1
         \advance\@tempdimb by -#2
         \advance\@tempdimb by #3
         \leavevmode
         \raise -\@tempdimb \hbox{\box1}%
      \fi
      \egroup%
}%
\def\DFRAME#1#2#3#4#5{%
  \vspace\topsep
  \hfil\break
  \bgroup
     \leftskip\@flushglue
	 \rightskip\@flushglue
	 \parindent\z@
	 \parfillskip\z@skip
     \let\QCTOptA\empty
     \let\QCTOptB\empty
     \let\QCBOptA\empty
     \let\QCBOptB\empty
	 \vbox\bgroup
        \ifOverFrame 
           #5\QCTOptA\par
        \fi
        \GRAPHIC{#4}{#3}{#1}{#2}{\z@}%
        \ifUnderFrame 
           \break#5\QCBOptA
        \fi
	 \egroup
  \egroup
  \vspace\topsep
  \break
}%
\def\FFRAME#1#2#3#4#5#6#7{%
  \@ifundefined{floatstyle}
    {
     \begin{figure}[#1]%
    }
    {
	 \ifx#1h
      \begin{figure}[H]%
	 \else
      \begin{figure}[#1]%
	 \fi
	}
  \let\QCTOptA\empty
  \let\QCTOptB\empty
  \let\QCBOptA\empty
  \let\QCBOptB\empty
  \ifOverFrame
    #4
    \ifx\QCTOptA\empty
    \else
      \ifx\QCTOptB\empty
        \caption{\QCTOptA}%
      \else
        \caption[\QCTOptB]{\QCTOptA}%
      \fi
    \fi
    \ifUnderFrame\else
      \label{#5}%
    \fi
  \else
    \UnderFrametrue%
  \fi
  \begin{center}\GRAPHIC{#7}{#6}{#2}{#3}{\z@}\end{center}%
  \ifUnderFrame
    #4
    \ifx\QCBOptA\empty
      \caption{}%
    \else
      \ifx\QCBOptB\empty
        \caption{\QCBOptA}%
      \else
        \caption[\QCBOptB]{\QCBOptA}%
      \fi
    \fi
    \label{#5}%
  \fi
  \end{figure}%
 }%
\def\makeactives{
  \catcode`\"=\active
  \catcode`\;=\active
  \catcode`\:=\active
  \catcode`\'=\active
  \catcode`\~=\active
}
   \gdef\activesoff{%
      \def"{\string"}%
      \def;{\string;}%
      \def:{\string:}%
      \def'{\string'}%
      \def~{\string~}%
    }
\def\FRAME#1#2#3#4#5#6#7#8{%
 \bgroup
 \ifnum\@msidraft=\@ne
   \wasdrafttrue
 \else
   \wasdraftfalse%
 \fi
 \def\LaTeXparams{}%
 \dispkind=\z@
 \def\LaTeXparams{}%
 \doFRAMEparams{#1}%
 \ifnum\dispkind=\z@\IFRAME{#2}{#3}{#4}{#7}{#8}{#5}\else
  \ifnum\dispkind=\@ne\DFRAME{#2}{#3}{#7}{#8}{#5}\else
   \ifnum\dispkind=\tw@
    \edef\@tempa{\noexpand\FFRAME{\LaTeXparams}}%
    \@tempa{#2}{#3}{#5}{#6}{#7}{#8}%
    \fi
   \fi
  \fi
  \ifwasdraft\@msidraft=1\else\@msidraft=0\fi{}%
  \egroup
 }%
\def\TEXUX#1{"texux"}
\def\limfunc#1{\mathop{\rm #1}}%
\def\func#1{\mathop{\rm #1}\nolimits}%
\long\def\QQQ#1#2{%
     \long\expandafter\def\csname#1\endcsname{#2}}%
\long\def\QQA#1#2{}%
\def\QTR#1#2{{\csname#1\endcsname {#2}}}%
\def\EXPAND#1[#2]#3{}%
\def\NOEXPAND#1[#2]#3{}%
\def\LaTeXparent#1{}%
\def\ChildStyles#1{}%
\def\ChildDefaults#1{}%
\def\QTagDef#1#2#3{}%
  \providecommand{\UNICODE}[2][]{\protect\rule{.1in}{.1in}}
  \providecommand{\U}[1]{\protect\rule{.1in}{.1in}}
\def\QQfnmark#1{\footnotemark}
 \def\abstract{%
  \if@twocolumn
   \section*{Abstract (Not appropriate in this style!)}%
   \else \small 
   \begin{center}{\bf Abstract\vspace{-.5em}\vspace{\z@}}\end{center}%
   \quotation 
   \fi
  }%
   \def\registered{\relax\ifmmode{}\r@gistered
                    \else$\m@th\r@gistered$\fi}%
 \def\r@gistered{^{\ooalign
  {\hfil\raise.07ex\hbox{$\scriptstyle\rm\text{R}$}\hfil\crcr
  \mathhexbox20D}}}}{}%
\def\TEXTsymbol#1{\mbox{$#1$}}%
\newdimen\theight
\def\newfmtname{LaTeX2e}
  \DeclareOldFontCommand{\rm}{\normalfont\rmfamily}{\mathrm}
  \DeclareOldFontCommand{\sf}{\normalfont\sffamily}{\mathsf}
  \DeclareOldFontCommand{\tt}{\normalfont\ttfamily}{\mathtt}
  \DeclareOldFontCommand{\bf}{\normalfont\bfseries}{\mathbf}
  \DeclareOldFontCommand{\it}{\normalfont\itshape}{\mathit}
  \DeclareOldFontCommand{\sl}{\normalfont\slshape}{\@nomath\sl}
  \DeclareOldFontCommand{\sc}{\normalfont\scshape}{\@nomath\sc}
\def\alpha{{\Greekmath 010B}}%
\def\beta{{\Greekmath 010C}}%
\def\gamma{{\Greekmath 010D}}%
\def\delta{{\Greekmath 010E}}%
\def\epsilon{{\Greekmath 010F}}%
\def\zeta{{\Greekmath 0110}}%
\def\eta{{\Greekmath 0111}}%
\def\theta{{\Greekmath 0112}}%
\def\iota{{\Greekmath 0113}}%
\def\kappa{{\Greekmath 0114}}%
\def\lambda{{\Greekmath 0115}}%
\def\mu{{\Greekmath 0116}}%
\def\nu{{\Greekmath 0117}}%
\def\xi{{\Greekmath 0118}}%
\def\pi{{\Greekmath 0119}}%
\def\rho{{\Greekmath 011A}}%
\def\sigma{{\Greekmath 011B}}%
\def\tau{{\Greekmath 011C}}%
\def\upsilon{{\Greekmath 011D}}%
\def\phi{{\Greekmath 011E}}%
\def\chi{{\Greekmath 011F}}%
\def\psi{{\Greekmath 0120}}%
\def\omega{{\Greekmath 0121}}%
\def\varepsilon{{\Greekmath 0122}}%
\def\vartheta{{\Greekmath 0123}}%
\def\varpi{{\Greekmath 0124}}%
\def\varrho{{\Greekmath 0125}}%
\def\varsigma{{\Greekmath 0126}}%
\def\varphi{{\Greekmath 0127}}%
\def\nabla{{\Greekmath 0272}}
\def\FindBoldGroup{%
   {\setbox0=\hbox{$\mathbf{x\global\edef\theboldgroup{\the\mathgroup}}$}}%
}
\def\Greekmath#1#2#3#4{%
    \if@compatibility
        \ifnum\mathgroup=\symbold
           \mathchoice{\mbox{\boldmath$\displaystyle\mathchar"#1#2#3#4$}}%
                      {\mbox{\boldmath$\textstyle\mathchar"#1#2#3#4$}}%
                      {\mbox{\boldmath$\scriptstyle\mathchar"#1#2#3#4$}}%
                      {\mbox{\boldmath$\scriptscriptstyle\mathchar"#1#2#3#4$}}%
        \else
           \mathchar"#1#2#3#4%
        \fi 
    \else 
        \FindBoldGroup
        \ifnum\mathgroup=\theboldgroup 
           \mathchoice{\mbox{\boldmath$\displaystyle\mathchar"#1#2#3#4$}}%
                      {\mbox{\boldmath$\textstyle\mathchar"#1#2#3#4$}}%
                      {\mbox{\boldmath$\scriptstyle\mathchar"#1#2#3#4$}}%
                      {\mbox{\boldmath$\scriptscriptstyle\mathchar"#1#2#3#4$}}%
        \else
           \mathchar"#1#2#3#4%
        \fi     	    
	  \fi}
\newif\ifGreekBold  \GreekBoldfalse
\let\SAVEPBF=\pbf
\def\pbf{\GreekBoldtrue\SAVEPBF}%
  \newcounter{equationnumber}  
  \def\mathletters{%
     \addtocounter{equation}{1}
     \edef\@currentlabel{\theequation}%
     \setcounter{equationnumber}{\c@equation}
     \setcounter{equation}{0}%
     \edef\theequation{\@currentlabel\noexpand\alph{equation}}%
  }
    \def\BibTeX{{\rm B\kern-.05em{\sc i\kern-.025em b}\kern-.08em
                 T\kern-.1667em\lower.7ex\hbox{E}\kern-.125emX}}}{}%
\def\AmS{{\protect\usefont{OMS}{cmsy}{m}{n}%
                A\kern-.1667em\lower.5ex\hbox{M}\kern-.125emS}}}{}%
\def\@@eqncr{\let\@tempa\relax
    \ifcase\@eqcnt \def\@tempa{& & &}\or \def\@tempa{& &}%
      \else \def\@tempa{&}\fi
     \@tempa
     \if@eqnsw
        \iftag@
           \@taggnum
        \else
           \@eqnnum\stepcounter{equation}%
        \fi
     \fi
     \global\tag@false
     \global\@eqnswtrue
     \global\@eqcnt\z@\cr}
\def\TCItag{\@ifnextchar*{\@TCItagstar}{\@TCItag}}
\def\@TCItag#1{%
    \global\tag@true
    \global\def\@taggnum{(#1)}}
\def\@TCItagstar*#1{%
    \global\tag@true
    \global\def\@taggnum{#1}}
\def\tsum{\mathop{\textstyle \sum }}%
\def\tbigoplus{\mathop{\textstyle \bigoplus }}%
\def\ExitTCILatex{\makeatother }
\if@compatibility\message{amsmath already loaded}\fi\aftergroup\ExitTCILatex}
\if@compatibility\message{amstex already loaded}\fi\aftergroup\ExitTCILatex}
\if@compatibility\message{amsgen already loaded}\fi\aftergroup\ExitTCILatex}
\let\DOTSI\relax
\def\RIfM@{\relax\ifmmode}%
\def\FN@{\futurelet\next}%
\def\iint{\DOTSI\intno@\tw@\FN@\ints@}%
\def\iiint{\DOTSI\intno@\thr@@\FN@\ints@}%
\def\iiiint{\DOTSI\intno@4 \FN@\ints@}%
\def\idotsint{\DOTSI\intno@\z@\FN@\ints@}%
\def\ints@{\findlimits@\ints@@}%
\newif\iflimtoken@
\newif\iflimits@
\def\findlimits@{\limtoken@true\ifx\next\limits\limits@true
 \else\ifx\next\nolimits\limits@false\else
 \limtoken@false\ifx\ilimits@\nolimits\limits@false\else
 \ifinner\limits@false\else\limits@true\fi\fi\fi\fi}%
\def\multint@{\int\ifnum\intno@=\z@\intdots@                          
 \else\intkern@\fi                                                    
 \ifnum\intno@>\tw@\int\intkern@\fi                                   
 \ifnum\intno@>\thr@@\int\intkern@\fi                                 
 \int}
\def\multintlimits@{\intop\ifnum\intno@=\z@\intdots@\else\intkern@\fi
 \ifnum\intno@>\tw@\intop\intkern@\fi
 \ifnum\intno@>\thr@@\intop\intkern@\fi\intop}%
\def\intic@{%
    \mathchoice{\hskip.5em}{\hskip.4em}{\hskip.4em}{\hskip.4em}}%
\def\negintic@{\mathchoice
 {\hskip-.5em}{\hskip-.4em}{\hskip-.4em}{\hskip-.4em}}%
\def\ints@@{\iflimtoken@                                              
 \def\ints@@@{\iflimits@\negintic@
   \mathop{\intic@\multintlimits@}\limits                             
  \else\multint@\nolimits\fi                                          
  \eat@}
 \else                                                                
 \def\ints@@@{\iflimits@\negintic@
  \mathop{\intic@\multintlimits@}\limits\else
  \multint@\nolimits\fi}\fi\ints@@@}%
\def\intkern@{\mathchoice{\!\!\!}{\!\!}{\!\!}{\!\!}}%
\def\plaincdots@{\mathinner{\cdotp\cdotp\cdotp}}%
\def\intdots@{\mathchoice{\plaincdots@}%
 {{\cdotp}\mkern1.5mu{\cdotp}\mkern1.5mu{\cdotp}}%
 {{\cdotp}\mkern1mu{\cdotp}\mkern1mu{\cdotp}}%
 {{\cdotp}\mkern1mu{\cdotp}\mkern1mu{\cdotp}}}%
\def\RIfM@{\relax\protect\ifmmode}
\def\text{\RIfM@\expandafter\text@\else\expandafter\mbox\fi}
\let\nfss@text\text
\def\text@#1{\mathchoice
   {\textdef@\displaystyle\f@size{#1}}%
   {\textdef@\textstyle\tf@size{\firstchoice@false #1}}%
   {\textdef@\textstyle\sf@size{\firstchoice@false #1}}%
   {\textdef@\textstyle \ssf@size{\firstchoice@false #1}}%
   \glb@settings}
\def\textdef@#1#2#3{\hbox{{%
                    \everymath{#1}%
                    \let\f@size#2\selectfont
                    #3}}}
\newif\iffirstchoice@
\def\Let@{\relax\iffalse{\fi\let\\=\cr\iffalse}\fi}%
\def\vspace@{\def\vspace##1{\crcr\noalign{\vskip##1\relax}}}%
\def\multilimits@{\bgroup\vspace@\Let@
 \baselineskip\fontdimen10 \scriptfont\tw@
 \advance\baselineskip\fontdimen12 \scriptfont\tw@
 \lineskip\thr@@\fontdimen8 \scriptfont\thr@@
 \lineskiplimit\lineskip
 \vbox\bgroup\ialign\bgroup\hfil$\m@th\scriptstyle{##}$\hfil\crcr}%
\def\Sb{_\multilimits@}%
\def\endSb{\crcr\egroup\egroup\egroup}%
\def\Sp{^\multilimits@}%
\newdimen\ex@
\def\rightarrowfill@#1{$#1\m@th\mathord-\mkern-6mu\cleaders
 \hbox{$#1\mkern-2mu\mathord-\mkern-2mu$}\hfill
 \mkern-6mu\mathord\rightarrow$}%
\def\leftarrowfill@#1{$#1\m@th\mathord\leftarrow\mkern-6mu\cleaders
 \hbox{$#1\mkern-2mu\mathord-\mkern-2mu$}\hfill\mkern-6mu\mathord-$}%
\def\leftrightarrowfill@#1{$#1\m@th\mathord\leftarrow
\mkern-6mu\cleaders
 \hbox{$#1\mkern-2mu\mathord-\mkern-2mu$}\hfill
 \mkern-6mu\mathord\rightarrow$}%
\def\overrightarrow{\mathpalette\overrightarrow@}%
\def\overrightarrow@#1#2{\vbox{\ialign{##\crcr\rightarrowfill@#1\crcr
 \noalign{\kern-\ex@\nointerlineskip}$\m@th\hfil#1#2\hfil$\crcr}}}%
\def\overleftarrow{\mathpalette\overleftarrow@}%
\def\overleftarrow@#1#2{\vbox{\ialign{##\crcr\leftarrowfill@#1\crcr
 \noalign{\kern-\ex@\nointerlineskip}$\m@th\hfil#1#2\hfil$\crcr}}}%
\def\overleftrightarrow{\mathpalette\overleftrightarrow@}%
\def\overleftrightarrow@#1#2{\vbox{\ialign{##\crcr
   \leftrightarrowfill@#1\crcr
 \noalign{\kern-\ex@\nointerlineskip}$\m@th\hfil#1#2\hfil$\crcr}}}%
\def\underrightarrow{\mathpalette\underrightarrow@}%
\def\underrightarrow@#1#2{\vtop{\ialign{##\crcr$\m@th\hfil#1#2\hfil
  $\crcr\noalign{\nointerlineskip}\rightarrowfill@#1\crcr}}}%
\def\underleftarrow{\mathpalette\underleftarrow@}%
\def\underleftarrow@#1#2{\vtop{\ialign{##\crcr$\m@th\hfil#1#2\hfil
  $\crcr\noalign{\nointerlineskip}\leftarrowfill@#1\crcr}}}%
\def\underleftrightarrow{\mathpalette\underleftrightarrow@}%
\def\underleftrightarrow@#1#2{\vtop{\ialign{##\crcr$\m@th
  \hfil#1#2\hfil$\crcr
 \noalign{\nointerlineskip}\leftrightarrowfill@#1\crcr}}}%
\def\qopnamewl@#1{\mathop{\operator@font#1}\nlimits@}
\let\nlimits@\displaylimits
\def\setboxz@h{\setbox\z@\hbox}
\def\varlim@#1#2{\mathop{\vtop{\ialign{##\crcr
 \hfil$#1\m@th\operator@font lim$\hfil\crcr
 \noalign{\nointerlineskip}#2#1\crcr
 \noalign{\nointerlineskip\kern-\ex@}\crcr}}}}
 \def\rightarrowfill@#1{\m@th\setboxz@h{$#1-$}\ht\z@\z@
  $#1\copy\z@\mkern-6mu\cleaders
  \hbox{$#1\mkern-2mu\box\z@\mkern-2mu$}\hfill
  \mkern-6mu\mathord\rightarrow$}
\def\leftarrowfill@#1{\m@th\setboxz@h{$#1-$}\ht\z@\z@
  $#1\mathord\leftarrow\mkern-6mu\cleaders
  \hbox{$#1\mkern-2mu\copy\z@\mkern-2mu$}\hfill
  \mkern-6mu\box\z@$}
\def\projlim{\qopnamewl@{proj\,lim}}
\def\injlim{\qopnamewl@{inj\,lim}}
\def\varinjlim{\mathpalette\varlim@\rightarrowfill@}
\def\varprojlim{\mathpalette\varlim@\leftarrowfill@}
\def\varliminf{\mathpalette\varliminf@{}}
\def\varliminf@#1{\mathop{\underline{\vrule\@depth.2\ex@\@width\z@
   \hbox{$#1\m@th\operator@font lim$}}}}
\def\varlimsup{\mathpalette\varlimsup@{}}
\def\varlimsup@#1{\mathop{\overline
  {\hbox{$#1\m@th\operator@font lim$}}}}
\def\align{\@verbatim \frenchspacing\@vobeyspaces \@alignverbatim
You are using the "align" environment in a style in which it is not defined.}
\let\csname endalign*\endcsname =\endtrivlist
\def\alignat{\@verbatim \frenchspacing\@vobeyspaces \@alignatverbatim
You are using the "alignat" environment in a style in which it is not defined.}
\let\csname endalignat*\endcsname =\endtrivlist
\def\xalignat{\@verbatim \frenchspacing\@vobeyspaces \@xalignatverbatim
You are using the "xalignat" environment in a style in which it is not defined.}
\let\csname endxalignat*\endcsname =\endtrivlist
\def\gather{\@verbatim \frenchspacing\@vobeyspaces \@gatherverbatim
You are using the "gather" environment in a style in which it is not defined.}
\let\csname endgather*\endcsname =\endtrivlist
\def\multiline{\@verbatim \frenchspacing\@vobeyspaces \@multilineverbatim
You are using the "multiline" environment in a style in which it is not defined.}
\let\csname endmultiline*\endcsname =\endtrivlist
\def\arrax{\@verbatim \frenchspacing\@vobeyspaces \@arraxverbatim
You are using a type of "array" construct that is only allowed in AmS-LaTeX.}
\def\tabulax{\@verbatim \frenchspacing\@vobeyspaces \@tabulaxverbatim
You are using a type of "tabular" construct that is only allowed in AmS-LaTeX.}
\let\csname endarrax*\endcsname =\endtrivlist
\let\csname endtabulax*\endcsname =\endtrivlist
 \def\endequation{%
     \ifmmode\ifinner 
      \iftag@
        \addtocounter{equation}{-1} 
        $\hfil
           \displaywidth\linewidth\@taggnum\egroup \endtrivlist
        \global\tag@false
        \global\@ignoretrue   
      \else
        $\hfil
           \displaywidth\linewidth\@eqnnum\egroup \endtrivlist
        \global\tag@false
        \global\@ignoretrue 
      \fi
     \else   
      \iftag@
        \addtocounter{equation}{-1} 
        \eqno \hbox{\@taggnum}
        \global\tag@false%
        $$\global\@ignoretrue
      \else
        \eqno \hbox{\@eqnnum}
        $$\global\@ignoretrue
      \fi
     \fi\fi
 } 
 \newif\iftag@ \tag@false
 \def\TCItag{\@ifnextchar*{\@TCItagstar}{\@TCItag}}
 \def\@TCItag#1{%
     \global\tag@true
     \global\def\@taggnum{(#1)}}
 \def\@TCItagstar*#1{%
     \global\tag@true
     \global\def\@taggnum{#1}}
     \def\tag{\@ifnextchar*{\@tagstar}{\@tag}}
     \def\@tag#1{%
         \global\tag@true
         \global\def\@taggnum{(#1)}}
     \def\@tagstar*#1{%
         \global\tag@true
         \global\def\@taggnum{#1}}
\def\tfrac#1#2{{\textstyle {#1 \over #2}}}%
\def\binom#1#2{{#1 \choose #2}}%
\keywords{Quantum integrable models; Plactic algebra; Bethe Ansatz; Fusion ring; Verlinde algebra; Symmetric functions}
\begin{document}

\title{The su($n$) WZNW fusion ring as integrable model: a new algorithm to
compute fusion coefficients}
\author{\textsc{Christian Korff}\thanks{%
School of Mathematics \& Statistics, University of Glasgow, Scotland, UK.%
\newline
e-mail: \texttt{christian.korff@glasgow.ac.uk}}}
\maketitle

\begin{abstract}
This is a proceedings article reviewing a recent combinatorial construction
of the $\widehat{\mathfrak{su}}(n)_k$ WZNW fusion ring by C. Stroppel and
the author. It contains one novel aspect: the explicit derivation of an
algorithm for the computation of fusion coefficients different from the
Kac-Walton formula. The discussion is presented from the point of view of a
vertex model in statistical mechanics whose partition function generates the
fusion coefficients. The statistical model can be shown to be integrable by
linking its transfer matrix to a particular solution of the Yang-Baxter
equation. This transfer matrix can be identified with the generating
function of an (infinite) set of polynomials in a noncommutative alphabet:
the generators of the local affine plactic algebra. The latter is a
generalisation of the plactic algebra occurring in the context of the
Robinson-Schensted correspondence. One can define analogues of Schur
polynomials in this noncommutative alphabet which become identical to the
fusion matrices when represented as endomorphisms over the state space of
the integrable model. Crucial is the construction of an eigenbasis, the
Bethe vectors, which are the idempotents of the fusion algebra.
\end{abstract}

%


\section{Introduction}

Wess-Zumino-Novikov-Witten (WZNW) models are an important class of conformal
field theories (CFT) distinguished by their Lie algebraic symmetry. Due to
this symmetry the primary fields of WZNW theories are in one-to-one
correspondence with the integrable highest weight representations of an
affine Lie algebra; see e.g. the text book \cite{CFTbook} for details and
references. Consider the $\widehat{\mathfrak{su}}(n)_{k}$ WZNW model, then
the set of all \emph{dominant integral weights of level} $k\in \mathbb{Z}%
_{\geq 0}$ is given by
\begin{equation}
P_{k}^{+}=\left\{ \hat{\lambda}=\sum_{i=1}^{n}m_{i}\hat{\omega}%
_{i}\;\left\vert \;\sum_{i=1}^{n}m_{i}=k\right. \!,\;m_{i}\in \mathbb{Z}%
_{\geq 0}\right\}  \label{domweights}
\end{equation}%
where the $\hat{\omega}_{i}$'s denote the fundamental affine weights of the
affine Lie algebra $\widehat{\mathfrak{su}}(n)$; see e.g. \cite{Kac} for
details. Note that we use the label $n$ instead of $0$ for the affine node.
In what follows it will be convenient to identify elements in the set $%
P_{k}^{+}$ with the partitions $\mathcal{P}_{\leq n-1,k}$ whose Young
diagram fits into a bounding box of height $n-1$ and width $k$. Namely,
define a bijection $P_{k}^{+}\rightarrow \mathcal{P}_{\leq n-1,k}$ by
setting
\begin{equation}
\hat{\lambda}\mapsto \lambda =(\lambda _{1},\ldots ,\lambda _{n-1})\quad
\text{with}\quad \lambda _{i}-\lambda _{i+1}=m_{i}~,  \label{weight2part}
\end{equation}%
where $m_{i}$ is the so-called Dynkin label, i.e. the coefficient of the $i^{%
\text{th}}$ fundamental weight in \eqref{domweights}. Vice versa, given a
partition $\lambda \in \mathcal{P}_{\leq n-1,k}$ we shall denote by $\hat{%
\lambda}$ the corresponding affine weight in $P_{k}^{+}$.

Since the set of dominant integral weights at fixed level $k$ has
cardinality $|P_{k}^{+}|=\binom{n+k-1}{k}$, WZNW models are so-called \emph{%
rational} conformal field theories, i.e. they have a finite set of primary
fields from which all other fields can be generated. An important ingredient
in the description of rational conformal field theories is the concept of
fusion: in physics terminology one considers the operator product expansion
of two primary fields. While this can be made mathematically precise in the
context of vertex operator algebras and the fusion process can be identified
with the product in the Grothendieck ring of an abelian braided monoidal
category in the context of tilting modules of quantum groups, we will not
use this mathematical framework here.

Consider the free abelian group (with respect to addition) generated by $%
P_{k}^{+}$ and introduce the so-called fusion product%
\begin{equation}
\hat{\lambda}\ast \hat{\mu}=\sum_{\hat{\nu}\in P_{k}^{+}}\mathcal{N}_{\hat{%
\lambda}\hat{\mu}}^{(k)\hat{\nu}}\hat{\nu},  \label{fusiondef}
\end{equation}%
by defining the structure constants $\mathcal{N}_{\hat{\lambda}\hat{\mu}%
}^{(k)\hat{\nu}}\in \mathbb{Z}_{\geq 0}$, called \emph{fusion coefficients},
via the celebrated Verlinde formula \cite{Verlinde}%
\begin{equation}
\mathcal{N}_{\hat{\lambda}\hat{\mu}}^{(k)\hat{\nu}}=\sum_{\hat{\sigma}\in
P_{k}^{+}}\frac{S_{\hat{\lambda}\hat{\sigma}}S_{\hat{\mu}\hat{\sigma}}S_{%
\hat{\nu}\hat{\sigma}}^{-1}}{S_{\hat{\emptyset}\hat{\sigma}}}\;.
\label{Verlinde}
\end{equation}%
Here $\hat{\emptyset}$ denotes the weight corresponding to the empty
partition and $S$ is the modular $S$-matrix describing the modular
transformation $\tau \rightarrow -\tau ^{-1}$ of affine characters. Among
other properties it enjoys unitarity $\bar{S}_{\hat{\nu}\hat{\sigma}}=S_{%
\hat{\nu}\hat{\sigma}}^{-1}$ and crossing symmetry $\bar{S}_{\hat{\lambda}%
\hat{\sigma}}=S_{\hat{\lambda}^{\ast }\hat{\sigma}},$ where $\hat{\lambda}%
^{\ast }$ is the dual weight obtained by taking the complement of $\lambda $
in the $(n-1)\times k$ bounding box and then deleting all $n$-columns, $%
\lambda ^{\ast }=(\lambda _{1},\lambda _{1}-\lambda _{n-1},\lambda
_{1}-\lambda _{n-2},\ldots ,\lambda _{1}-\lambda _{2})$.

For WZNW models the explicit expression for $S$ is known: the Kac-Peterson
formula \cite{KacPeterson} states $S$ in terms of the Weyl group $W$ and
specialises for $\widehat{\mathfrak{su}}(n)$ to the expression%
\begin{equation}
S_{\hat{\lambda}\hat{\sigma}}=\frac{e^{i\pi n(n-1)/4}}{\sqrt{n(k+n)^{n-1}}}%
\sum_{w\in W}(-1)^{\ell (w)}e^{-\frac{2\pi i}{k+n}(\sigma +\rho ,w(\lambda
+\rho ))}\;,  \label{modS}
\end{equation}%
where $\rho $ is the Weyl vector and $\lambda ,\sigma $ denote the finite,
non-affine weights corresponding to $\hat{\lambda},\hat{\sigma}$. From this
formula it is by no means obvious that the fusion coefficients (\ref%
{Verlinde}) are non-negative integers, however they have been identified
with certain dimensions or multiplicities in various different contexts as
e.g. discussed in \cite{GoodmanNakanishi} (for references see \emph{loc. cit.%
}): dimensions of spaces of conformal blocks of 3-point functions, so-called
moduli spaces of generalised $\theta $-functions; outer multiplicities of
truncated tensor products of tilting modules of quantum groups at roots of
unity; Littlewood-Richardson coefficients of Hecke algebras at roots of
unity; dimensions of local states in restricted-solid-on-solid models. In
fact, (\ref{fusiondef}) defines a unital, commutative ring over the integers
$\mathbb{Z}$, which we shall refer to as the $\widehat{\mathfrak{su}}(n)$
\emph{fusion ring at level k}, denoted by $\mathcal{F}_{n,k}$, and to the
corresponding unital, commutative and associative algebra $\mathcal{F}%
_{n,k}^{\mathbb{C}}\mathbb{=\mathcal{F}}_{n,k}\mathbb{\otimes }_{\mathbb{Z}}%
\mathbb{C}$ as \emph{fusion} or \emph{Verlinde} \emph{algebra}.

This article aims to give a non-technical account of the main findings in
\cite{KS}\ and \cite{Korff} regarding the $\widehat{\mathfrak{su}}(n)$
fusion ring. For proofs the reader is referred to the mentioned papers.
Sections 2 and 3 are largely a summary of previous results reviewing the
definition of an integrable statistical mechanics model which generates the
fusion ring. It is convenient to describe the statistical model and its
lattice configurations using non-intersecting paths, since this allows for
instance a non-technical definition of the transfer matrix. However, it
needs to be stressed that at the moment the path picture is not used to give
combinatorial proofs, instead the discussion is algebraic and employs the
solution to the Yang-Baxter equation given in \cite{Korff}. However, we
present one result, Corollary \ref{path_count}, which relates the counting
of non-intersecting paths on the cylinder to a sum over fusion coefficients.
We also make contact with the phase model of Bogoliubov, Izergin and
Kitanine \cite{BIK} where closely related algebraic structures have been
discussed. Section 4 states a detailed derivation of the new algorithm to
compute fusion coefficients. First a review of the Bethe ansatz equations is
given by highlighting how they are connected to a fusion potential. The
latter differs from the familiar fusion potential of Gepner \cite{Gepner}
and the algorithm therefore yields expressions for fusion coefficients in
terms of Littlewood-Richardson coefficients which differ from the ones
obtained via the celebrated Kac-Walton formula \cite{Kac} \cite{Walton};
compare also with the work of Goodman and Wenzl \cite{GoodmanWenzl}. Section
5 stresses that the Bethe vectors constructed via the quantum inverse
scattering method can be identified with the idempotents of the fusion ring.
The proof of this result has been given before \cite{KS} but their role has
not been emphasized. The modular S-matrix is the transition matrix from the
basis of integrable weights to the basis of Bethe vectors and, hence, can be
expressed in terms of the generators of a Yang-Baxter algebra. We also
present a new perspective on the affine plactic Schur polynomials which are
defined in terms of the transfer matrix via a determinant formula: they
constitute the set of conserved quantities of the integrable model and hence
should be seen as the quantum analogue of a spectral curve. In the present
model this quantum spectral curve coincides with the collection of $\widehat{%
\mathfrak{su}}(n)_{k}$ fusion rings for $k\in \mathbb{Z}_{\geq 0}$. We
conclude with some practical applications, recursion formulae for fusion
coefficients at different level. The last section discusses how the findings
summarised here might generalise to a wider class of integrable models.

\section{Fusion coefficients from statistical mechanics}

We start by defining a statistical vertex model which is obtained in the
crystal limit of $U_{q}\widehat{\mathfrak{su}}(2)$; see \cite{Korff}.
Consider a $n\times (n-1)$ square lattice with quasi-periodic boundary
conditions in the horizontal direction, i.e. a square lattice on a cylinder
with $n-1$ rows. On the edges of the square lattice live statistical
variables $m\in \mathbb{Z}_{\mathbb{\geq }0}$, which we will identify with
the Dynkin labels of dominant integrable weights below. To each lattice
configuration we assign a \textquotedblleft Boltzmann
weight\textquotedblright\ in $\mathbb{C}[z][x_{1},\ldots ,x_{n-1}]$ by
taking the product over (local) vertex configurations.

Label the statistical variables $a,b,c,d\in \mathbb{Z}_{\mathbb{\geq }0}$
sitting on the edges of a vertex in the $i^{\text{th}}$ lattice row as shown
in Figure \ref{fig:boltzmann}, then we assign to it the weight (compare with
\cite{Korff})
\begin{equation}
\mathcal{R}_{c,d}^{a,b}(x_{i})=\left\{
\begin{array}{cc}
x_{i}^{a}, & d=a+b-c,\;b\geq c \\
0, & \text{else}%
\end{array}%
\right. \;.  \label{Boltzmannweight}
\end{equation}%
It is convenient to describe the vertex configurations in terms of
non-intersecting paths, so-called $\infty $-friendly walkers; compare with
\cite{GesselKrattenthaler,Guttmann2,Guttmann3}. In Figure \ref{fig:boltzmann}
$b$ walkers are entering the vertex from above turning to the right, at
which point a contingent of them of size $b-c$ chooses to defect. The
defectors then join another group of $a$ walkers coming from the left. The
Boltzmann weight of the $i^{\text{th}}$ lattice row is then given by $%
z^{a_{n}}\prod_{1\leq j\leq n}\mathcal{R}_{c_{j},d_{j}}^{a_{j},b_{j}}(x_{i})$%
, where we have introduced a parameter $z$ which keeps track of how many
walkers pass the boundary. Figure \ref{fig:Qrowconfig} shows on a simple
example that the weight of a single row is easily computed by simply
counting the number of horizontal edges.

\begin{figure}[tbp]
\begin{equation*}
\includegraphics[scale=0.35]{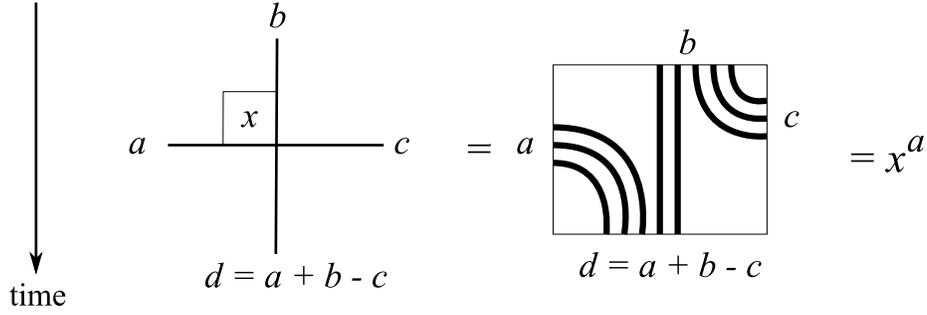}
\end{equation*}%
\caption{Graphical depiction of a vertex configuration with Boltzmann weight
(\protect\ref{Boltzmannweight}). The statistical variables $a,b,c,d\in
\mathbb{Z}_{\geq 0}$ obey the constraints $a+b=c+d$ and $b\geq c$.}
\label{fig:boltzmann}
\end{figure}
Given two arbitrary but fixed affine weights $\hat{\mu},\hat{\nu}\in
P_{k}^{+}$, denote by $m(\hat{\mu}),m(\hat{\nu})$ their $n$-tuples of Dynkin
labels in (\ref{domweights}). Denote by $\Gamma _{\hat{\nu}}^{\hat{\mu}}$
the lattice configurations\ where the outer vertical edges at the bottom and
top of the cylinder take the values $m(\hat{\mu}),m(\hat{\nu})$,
respectively. Figure \ref{fig:lattice_config} shows on an example that each
lattice configuration corresponds to $k$ nonintersecting paths some of which
are closely bunched together. The corresponding partition function of the
vertex model, i.e. the weighted sum over all lattice configurations is
defined as%
\begin{equation}
Z_{\hat{\nu}}^{\hat{\mu}}(x_{1},\ldots
,x_{n-1};z):=\sum_{\{(a_{ij},b_{ij},c_{ij},d_{ij})\}\in \Gamma _{\hat{\mu}}^{%
\hat{\nu}}}\prod_{1\leq i\leq n-1}z^{a_{in}}\prod_{1\leq j\leq n}\mathcal{R}%
_{c_{ij},d_{ij}}^{a_{ij},b_{ij}}(x_{i})\;  \label{Z}
\end{equation}%
with $(a_{ij},b_{ij},c_{ij},d_{ij})$ denoting the vertex configuration in
the $i^{th}$ row and $j^{th}$ column. As we will see below the partition
function is symmetric in the variables $x_{i}$ and, therefore, can be
expanded into a suitable basis in the ring of symmetric functions $\mathbb{C}%
[z][x_{1},\ldots ,x_{n-1}]^{\mathbb{S}_{n-1}}$. We choose the basis of Schur
functions $\{s_{\lambda }\}$ where $\lambda $ is a partition with length $%
\ell (\lambda )<n$.

We remind the reader that the Schur function $s_{\lambda
}(x_{1},...,x_{n-1}) $ can be defined as weighted sum of Young tableaux.
Given a partition $\lambda $, a Young tableau $t$ of shape $|t|=\lambda $ is
a filling of the Young diagram with integers in the set $\{1,\ldots ,n-1\}$
such that the numbers are weakly increasing in each row from left to right
and are strictly increasing in each column from top to bottom. To each
tableau we assign the weight vector $\alpha =(\alpha _{1},\ldots ,\alpha
_{n-1})$ where $\alpha _{i}$ is the multiplicity of $i$ occuring in $t$. The
Schur function is then given as $s_{\lambda
}(x_{1},...,x_{n-1})=\sum_{|t|=\lambda }x_{1}^{\alpha _{1}(t)}\cdots
x_{n-1}^{\alpha _{n-1}(t)}$. We state an explicit example.

\begin{exa}
\textrm{Let $n=3$ and $\lambda =(2,1)$. Then the list of possible tableaux $%
t $ reads,%
\begin{equation*}
t=\Yvcentermath1{\young(11,2),\;\young(11,3),\;\young(12,2),\;\young(13,2),\;%
\young(12,3),\;\young(13,3),\;\young(22,3),\;\young(23,3)\;.}
\end{equation*}%
Thus, the Schur function is the following polynomial%
\begin{equation*}
s_{(2,1)}=x_{1}^{2}x_{2}+x_{1}^{2}x_{3}+x_{1}x_{2}^{2}+2x_{1}x_{2}x_{3}+x_{1}x_{3}^{2}+x_{2}^{2}x_{3}+x_{2}x_{3}^{2}\;.
\end{equation*}
}
\end{exa}

Expanding the partition function (\ref{Z}) with respect to Schur functions
we obtain a relation between the statistical mechanics model defined via (%
\ref{Boltzmannweight}) and the fusion algebra of the $\widehat{\mathfrak{su}}%
(n)_{k}$-WZNW model \cite{Korff}.

\begin{prp}[generating function for fusion coefficients]
\label{Zfusion} The partition function (\ref{Z}) has the expansion%
\begin{equation}
Z_{\hat{\nu}}^{\hat{\mu}}(x_{1},\ldots ,x_{n-1};z)=\sum_{\hat{\lambda}\in
P_{k}^{+}}z^{d}\mathcal{N}_{\hat{\lambda}\hat{\mu}}^{(k),\hat{\nu}%
}s_{\lambda }(x_{1},...,x_{n-1})\;,  \label{Zvertex}
\end{equation}%
where $\mathcal{N}_{\hat{\lambda}\hat{\mu}}^{(k),\hat{\nu}}$ are the fusion
coefficients and the degree $d$ is given by $d:=\frac{|\lambda |+|\mu |-|\nu
|}{n}+\nu _{1}-\mu _{1}$.
\end{prp}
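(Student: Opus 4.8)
The plan is to establish the expansion $Z_{\hat{\nu}}^{\hat{\mu}} = \sum_{\hat\lambda} z^{d} \mathcal{N}^{(k),\hat\nu}_{\hat\lambda\hat\mu}\, s_\lambda$ by first showing that the partition function is symmetric in $x_1,\dots,x_{n-1}$, so that it admits a Schur expansion, and then identifying the Schur coefficients as fusion numbers. Because the paper is a review and the proof is deferred to \cite{Korff}, I would sketch the structure via the transfer-matrix formalism the introduction promises. First I would read off from the Boltzmann weight \eqref{Boltzmannweight} that a single lattice row defines a linear operator on the free $\mathbb{Z}[z]$-module spanned by the weights $P_k^+$ (equivalently, by the partitions $\mathcal{P}_{\le n-1,k}$): the row transfer matrix $T(x_i)$ acts on a bottom configuration $m(\hat\mu)$ and produces a weighted sum of top configurations, with each matrix element given by the row weight $z^{a_{in}}\prod_j \mathcal{R}^{a_{ij},b_{ij}}_{c_{ij},d_{ij}}(x_i)$. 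Iterating over the $n-1$ rows, the partition function becomes the single matrix element
\begin{equation*}
Z_{\hat\nu}^{\hat\mu}(x_1,\dots,x_{n-1};z) = \langle \hat\nu \mid T(x_1)\cdots T(x_{n-1}) \mid \hat\mu\rangle .
\end{equation*}

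The symmetry in the $x_i$ is then the statement that the transfer matrices commute, $[T(x_i),T(x_j)]=0$, which is exactly what the Yang-Baxter equation of \cite{Korff} delivers; I would invoke that solution to the YBE as the earlier result that makes $Z_{\hat\nu}^{\hat\mu}$ a symmetric polynomial in $x_1,\dots,x_{n-1}$, hence expandable in the Schur basis $\{s_\lambda\}_{\ell(\lambda)<n}$. The central identification is that the coefficient of $s_\lambda$ in this matrix element equals $\mathcal{N}^{(k),\hat\nu}_{\hat\lambda\hat\mu}$. The natural route is to recognise the operator that generates $s_\lambda$ out of the transfer matrices — the affine plactic Schur polynomial built from $T$ via the Jacobi--Trudi/determinant formula mentioned in the introduction — as the representation of the fusion element $\hat\lambda\ast$ acting on the state space. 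Concretely, the action of $\hat\lambda\ast\hat\mu = \sum_{\hat\nu}\mathcal{N}^{(k),\hat\nu}_{\hat\lambda\hat\mu}\,\hat\nu$ is precisely the fusion matrix, so matching the operator-valued Schur polynomials to the fusion matrices yields the claimed coefficients. This is where one appeals to the fact (established in \cite{KS,Korff}) that the plactic Schur polynomials represent the fusion ring $\mathcal{F}_{n,k}$ as endomorphisms of the state space.

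It remains to pin down the power $z^{d}$ with $d = \tfrac{|\lambda|+|\mu|-|\nu|}{n}+\nu_1-\mu_1$. This is a bookkeeping argument: $z$ counts, row by row, the number of walkers $a_{in}$ crossing the quasi-periodic seam, so the total $z$-degree is $\sum_{i=1}^{n-1} a_{in}$. I would convert this column-crossing count into the stated combinatorial expression by conservation of walkers (the constraint $a+b=c+d$ at every vertex) together with the boundary data: the total box count $\sum a_{in}$ relates the number of cells added, $|\lambda|=|\nu|-|\mu|$ up to multiples of $n$ coming from the wrap-around, and the correction $\nu_1-\mu_1$ records how the top and bottom rows differ in their first part. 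The main obstacle, in my view, is not the symmetry (which is handed to us by the YBE) nor the Schur expansion, but the clean identification of the Schur coefficients with the Verlinde numbers \eqref{Verlinde}: one must verify that the determinantal/plactic Schur operators really do reproduce the fusion product, i.e. that the level-$k$ truncation built into the Boltzmann weights (via the affine, rather than finite, plactic algebra and the bounding-box constraint $\lambda\in\mathcal{P}_{\le n-1,k}$) matches the quantum-dimension truncation implicit in the Verlinde formula. Establishing that equivalence is the genuine content and is precisely what \cite{KS,Korff} prove; here I would content myself with citing it and checking the degree formula by the walker-counting argument above.
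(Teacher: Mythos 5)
Your proposal is correct and follows essentially the same route as the paper, which itself defers the proof to \cite{Korff} and notes (in the remark after Corollary \ref{path_count}) that it rests on the Yang-Baxter equation \eqref{RQQ} and the quantum inverse scattering method: write $Z$ as a matrix element of a product of commuting row transfer matrices, expand the resulting symmetric polynomial in Schur functions, and identify the coefficient operators --- the affine plactic Schur polynomials of \eqref{ncschur} --- with the fusion matrices, with the $z$-degree fixed by counting walkers crossing the seam. The only caveat is notational: what you call $T(x_i)$ is the paper's transfer matrix $Q(x_i)$ of \eqref{Qdef}, whereas the paper reserves $T$ for the auxiliary matrix $A+zD$ built from the weights in Figure \ref{fig:Tweights}.
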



Given a square $s$ at position $(i,j)$ in the Young diagram of a partition $%
\lambda $, recall that the hook length is defined as $h(s)=\lambda
_{i}+\lambda _{j}^{t}-i-j+1$. That is, $h(s)$ is the number of squares to
right in the same row and the number of squares in the column below it plus
one (for the square itself). The content of the same square is simply
defined as $c(s)=j-i$. Denote by $\Gamma _{\hat{\nu}}^{\hat{\mu}}(d)\subset
\Gamma _{\hat{\nu}}^{\hat{\mu}}$ the subset of lattice path configurations
which have a fixed number of $2d$ outer horizontal edges. The following
result on the number of possible lattice configurations on the cylinder
appears to be new.

\begin{crl}[lattice configurations and fusion coefficients]
\label{path_count} Specialising to $z=x_{1}=\cdots =x_{n-1}=1$ in (\ref%
{Zvertex}) we obtain the identity%
\begin{equation}
|\Gamma _{\hat{\nu}}^{\hat{\mu}}(d)|=\sum_{\hat{\lambda}\in P_{k}^{+}}%
\mathcal{N}_{\hat{\lambda}\hat{\mu}}^{(k),\hat{\nu}}\prod_{s\in \lambda }%
\frac{n-1+c(s)}{h(s)}\;,
\end{equation}%
where the sum can be restricted to those weights $\hat{\lambda}$ for which $%
d=\frac{|\lambda |+|\mu |-|\nu |}{n}+\nu _{1}-\mu _{1}$.
\end{crl}

\begin{proof}
The assertion follows from the well-known formula \cite[Chapter I, Section 3, Example 4]{MacDonald}, $s_\lambda(1,\ldots,1)=\prod_{s\in\lambda}(n-1+c(s))/h(s)$ and the previous expansion (\ref{Zvertex}) of the partition function.
\end{proof}

\begin{rem}
\textrm{While in \cite{GesselKrattenthaler,Guttmann2,Guttmann3} the
Gessel-Viennot method has been used to obtain analogous results for
different type of boundary conditions, the proof of Proposition \ref{Zfusion}
rests on the Yang-Baxter equation, see (\ref{RQQ}) below, and the quantum
inverse scattering method, thus it is of algebraic nature.}
\end{rem}


\begin{figure}[tbp]
\begin{equation*}
\includegraphics[scale=0.65]{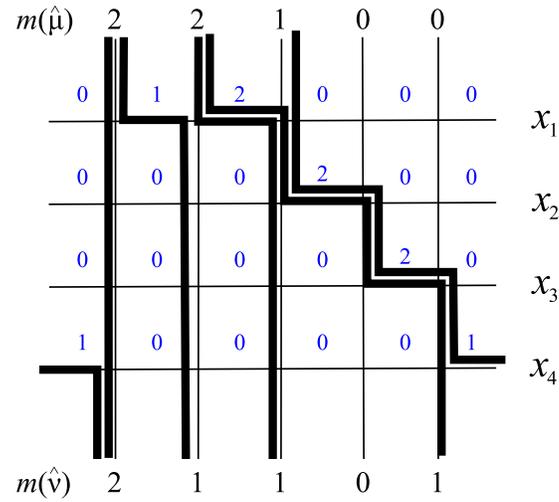}
\end{equation*}%
\caption{An example of a lattice configuration in terms of non-intersecting
paths for $n=k=5$ and $\protect\mu=(5,3,1)$, $\protect\nu =(4,2,1)$.}
\label{fig:lattice_config}
\end{figure}

\section{Transfer matrix and Yang-Baxter algebras}

\begin{figure}[tbp]
\begin{equation*}
\includegraphics[scale=0.75]{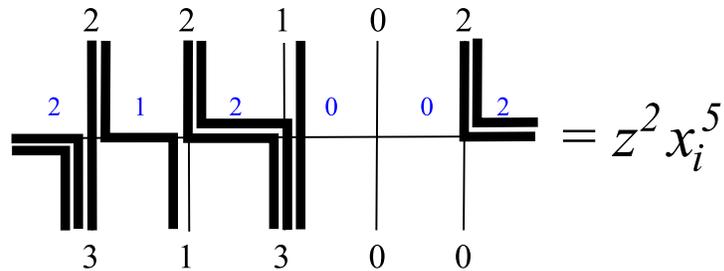}
\end{equation*}%
\caption{An example for $n=5$ and $k=7$ of a row configuration and its
"statistical weight", which is obtained by counting the horizontal edges of
paths. Outer edges are identified and for each a power of $z$ is added.}
\label{fig:Qrowconfig}
\end{figure}

We now introduce the row-to-row transfer matrix of the vertex model (\ref%
{Boltzmannweight}) as the partition function of a single lattice row and
then identify it below as generating function of certain polynomials in a
noncommutative alphabet.

\begin{dfn}[transfer matrix]
Given any two $n$-tuples $\boldsymbol{m}=(m_{1},\ldots ,m_{n})$ and $%
\boldsymbol{m}^{\prime }=(m_{1}^{\prime },\ldots ,m_{n}^{\prime })\in
\mathbb{Z}_{\geq 0}^{n}$, the transfer matrix $Q(x_{i})$ of the $i^{\text{th}%
}$ row is defined via the elements
\begin{equation}
Q(x_{i})_{m,m^{\prime }}:=\sum_{\substack{ allowed\,row\,  \\ configurations
}}z^{\frac{\text{\# of outer edges}}{2}}x_{i}^{\text{\# of horizontal edges}%
},  \label{Qdef}
\end{equation}%
where the factor 1/2 in the power of the variable $z$ takes into account
that the outer horizontal edges need to be identified, since we are on the
cylinder.
\end{dfn}

As it is common in the discussion of vertex models we wish to identify the
transfer matrix as an endomorphism of a vector space. For this purpose we
now interpret the statistical variables at the lattice edges as labels of
basis vectors in the vector space $\mathcal{M}=\tbigoplus_{m\in \mathbb{Z}_{%
\mathbb{\geq }0}}\mathbb{C}v_{m}$. Then a row configuration in the lattice,
i.e. an assignment of statistical variables $\boldsymbol{m}=(m_{1},\ldots
,m_{n})$ along one row of vertical edges, fixes a vector $v_{m_{1}}\otimes
\cdots \otimes v_{m_{n}}\in \mathcal{M}^{\otimes n}$. Henceforth, we
identify the tensor product $\mathcal{M}^{\otimes n}$ with $\mathbb{C}P^{+}$%
, the complex linear span of \emph{all} the integral dominant weights of the
affine Lie algebra $\widehat{\mathfrak{su}}(n)$, $P^{+}:=\left\{ \left. \hat{%
\lambda}=\tsum_{i=1}^{n}m_{i}\hat{\omega}_{i}\right\vert ~m_{i}\in \mathbb{Z}%
_{\geq 0}\right\} ,$ via the map $\boldsymbol{m}\mapsto \tsum_{i=1}^{n}m_{i}%
\hat{\omega}_{i}$. That is, we interpret the statistical variables $%
\boldsymbol{m}=(m_{1},\ldots ,m_{n})$ in one row of our vertex model as
Dynkin labels of an affine weight in $P^{+}$. For convenience we will
sometimes denote this $n$-tuple $\boldsymbol{m}$ and the associated vector
in $\mathcal{M}^{\otimes n}$ by the same symbol. By construction the
row-to-row transfer matrix and the partition function are then related via
\begin{equation}
Z_{\hat{\mu}}^{\hat{\nu}}(x_{1},\ldots ,x_{n-1};z)=\langle \boldsymbol{m}(%
\hat{\nu}),Q(x_{n-1})\cdots Q(x_{1})\boldsymbol{m}(\hat{\mu})\rangle \;,
\label{Z=QQQQ}
\end{equation}%
where we have introduced the inner product $\langle \boldsymbol{m,m}^{\prime
}\rangle :=\prod_{i=1}^{n}\delta _{m_{i},m_{i}^{\prime }}$ which we assume
to be antilinear in the first factor. Thus, the transfer matrix (\ref{Qdef})
can be interpreted as discrete time evolution operator which successively
generates the paths on the cylindric square lattice. Note that for any pair
of configurations $\boldsymbol{m},\boldsymbol{m}^{\prime }\in \mathcal{M}%
^{\otimes n}$ only a finite number of the terms making up the matrix element
$\langle \boldsymbol{m},Q(x_{i})\boldsymbol{m}^{\prime }\rangle $ is
non-zero. The operator $Q\in \limfunc{End}\mathcal{M}^{\otimes n}$ is
therefore well-defined. We now reformulate the transfer matrix in terms of a
set of more elementary, local operators which respectively increase and
decrease a single Dynkin label $m_{i}$ only.

\subsection{The local affine plactic algebra}

For $i=1,\ldots ,n$ define maps $\varphi _{i},\varphi _{i}^{\ast }\in
\limfunc{End}(\mathcal{M}^{\otimes n})$ by setting%
\begin{equation}
\varphi _{i}^{\ast }\boldsymbol{m}=(m_{1},\ldots ,m_{i}+1,\ldots ,m_{n})
\label{phi*}
\end{equation}%
and%
\begin{equation}
\varphi _{i}\boldsymbol{m}=\left\{
\begin{array}{cc}
(m_{1},\ldots ,m_{i}-1,\ldots ,m_{n}), & m_{i}>0 \\
0, & \text{else}%
\end{array}%
\right. \;.  \label{phi}
\end{equation}%
In addition let $N_{i}\boldsymbol{m}=m_{i}~\boldsymbol{m}$ for all $1\leq
i\leq n$. These maps can be identified with the Chevalley generators of the $%
U_{q}\mathfrak{sl}(2)$ Verma module in the crystal limit; see \cite{Korff}.
They have first appeared in the context of the phase model; see \cite{BIK}
and references therein. In \cite{KS} the following statement has been proven
by constructing an explicit basis for the phase algebra $\Phi $.

\begin{prp}[phase algebra]
The $\varphi _{i},\varphi _{i}^{\ast }$ and $N_{i}$ generate a subalgebra $%
\hat{\Phi}$ of $\func{End}(\mathcal{M}^{\otimes n})$ which can be realized
as the algebra $\Phi $ with the following generators and relations for $%
1\leq i,j\leq n$:
\begin{gather}
\varphi _{i}\varphi _{j}=\varphi _{j}\varphi _{i},\quad \varphi _{i}^{\ast
}\varphi _{j}^{\ast }=\varphi _{j}^{\ast }\varphi _{i}^{\ast },\quad
N_{i}N_{j}=N_{j}N_{i}  \label{comm} \\
N_{i}\varphi _{j}-\varphi _{j}N_{i}=-\delta _{ij}\varphi _{i},\quad
N_{i}\varphi _{j}^{\ast }-\varphi _{j}^{\ast }N_{i}=\delta _{ij}\varphi
_{i}^{\ast },  \label{comm2} \\
\varphi _{i}\varphi _{i}^{\ast }=1,\quad \varphi _{i}\varphi _{j}^{\ast
}=\varphi _{j}^{\ast }\varphi _{i}\;\text{ if }\;i\neq j,  \label{comm3} \\
N_{i}(1-\varphi _{i}^{\ast }\varphi _{i})=(1-\varphi _{i}^{\ast }\varphi
_{i})N_{i}=0~.  \label{Npi}
\end{gather}
\end{prp}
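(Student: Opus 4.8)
The plan is to prove that the evident algebra homomorphism $\pi\colon\Phi\twoheadrightarrow\hat\Phi$, determined by sending the abstract generators to the concrete operators (\ref{phi*})--(\ref{phi}) and to $N_i$, is an isomorphism. Surjectivity is automatic once the map is shown to be well defined, so the first step is simply to check that the operators on $\mathcal M^{\otimes n}$ obey all of (\ref{comm})--(\ref{Npi}). This is a direct computation: operators labelled by distinct indices act on distinct tensor factors and therefore commute, which accounts for (\ref{comm}), the second relation of (\ref{comm3}), and the $i\neq j$ cases of (\ref{comm2}); within a single factor one reads off $\varphi_i\varphi_i^\ast=\mathrm{id}$, the Heisenberg-type commutators with $N_i$, and the fact that $\varphi_i^\ast\varphi_i$ is the projector onto $\{m_i>0\}$, whence $N_i(1-\varphi_i^\ast\varphi_i)=0$ because $N_i$ annihilates $v_0$ in that factor. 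Thus $\pi$ is well defined and surjective, and everything reduces to injectivity.

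For injectivity I would first observe that the cross-index relations in (\ref{comm})--(\ref{comm3}) say precisely that $\Phi$ is the tensor product of $n$ commuting copies of the single-mode algebra generated by $a=\varphi_i$, $a^\ast=\varphi_i^\ast$, $N=N_i$; hence it suffices to produce a basis of one copy and then tensor. Inside a single mode, (\ref{comm2}) lets me transport every $N$ to the centre of a word at the cost of integer shifts ($Na=a(N-1)$, $Na^\ast=a^\ast(N+1)$), while $aa^\ast=1$ collapses every subword of the form $aa^\ast$; this brings an arbitrary element into a combination of the monomials $(a^\ast)^pN^qa^r$. These are, however, not independent: normal-ordering the consequence $N=Na^\ast a$ of (\ref{Npi}) yields the reduction relation $a^\ast Na=N-a^\ast a$, and its higher analogues express any monomial with $q\ge1$ and $\min(p,r)\ge1$ through monomials of strictly smaller $\min(p,r)$ or strictly smaller $q$. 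A double induction then reduces the spanning set to the family consisting of all $(a^\ast)^pa^r$ together with those $(a^\ast)^pN^qa^r$ having $q\ge1$ and $\min(p,r)=0$.

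The third step is to show this reduced family maps to linearly independent operators, which simultaneously proves it is a basis and that no hidden relations remain. On the basis vectors the monomial acts as a weighted shift, $v_m\mapsto[m\ge r]\,(m-r)^q\,v_{m+p-r}$, so two monomials with different shift $p-r$ have disjoint matrix supports and cannot cancel one another. Fixing the shift $s$, a vanishing combination becomes a vanishing combination of the coefficient functions $m\mapsto[m\ge r](m-r)^q$ on $\mathbb Z_{\ge0}$; the $q=0$ terms contribute step functions with jumps at distinct sites while the $q\ge1$ terms contribute genuine polynomial growth, and evaluating at successively larger $m$ isolates the coefficients by a triangular recursion, forcing them all to vanish. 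Tensoring the resulting single-mode basis over the $n$ modes gives a basis of $\Phi$ whose image is linearly independent in $\hat\Phi$, so $\pi$ is injective and hence an isomorphism.

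I expect the genuine obstacle to be the middle step: because $a^\ast a\neq1$ this is a Toeplitz-type rather than a Weyl-type algebra, so the naive monomials $(a^\ast)^pN^qa^r$ are over-complete, and the whole argument hinges on extracting from (\ref{Npi}) exactly the right reduced family --- one small enough to be independent yet large enough to span. Pinning down this normal form so that its members correspond bijectively to the coefficient functions appearing in the independence argument is where the real work lies; by comparison the verification of the relations and the final triangularity estimate are routine.
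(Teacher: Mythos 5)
Your proposal is correct and takes essentially the same route as the paper, which itself defers the proof to \cite{KS} with the remark that it proceeds ``by constructing an explicit basis for the phase algebra $\Phi$'': you verify the relations directly for the concrete operators and then establish faithfulness by exhibiting a normal form (here, the reduced monomials $(a^{\ast})^{p}N^{q}a^{r}$ with $q=0$ or $\min(p,r)=0$ in each tensor factor) whose images in $\limfunc{End}(\mathcal{M}^{\otimes n})$ are linearly independent. The single-mode factorization and the shift-plus-growth independence argument are sound, so no gap remains.
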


Note that with respect to the scalar product introduced above we have $%
\langle \varphi _{i}^{\ast }\boldsymbol{m},\boldsymbol{m}^{\prime }\rangle
=\langle \boldsymbol{m},\varphi _{i}\boldsymbol{m}^{\prime }\rangle \;$for
any $\boldsymbol{m},\boldsymbol{m}^{\prime }\in \mathcal{M}^{\otimes n}$.

\begin{dfn}[local\ affine plactic algebra]
Let $\func{Pl}=\func{Pl}(\mathcal{A})$ be the free algebra generated by the
elements of $\mathcal{A}=\{a_{1},a_{2},\ldots a_{n}\}$ modulo the relations
\begin{eqnarray}
a_{i}a_{j}-a_{j}a_{i}=0, &&\text{ if $|i-j|\neq 1\mod n$},  \label{PL1} \\
a_{i+1}a_{i}^{2}=a_{i}a_{i+1}a_{i}, &&a_{i+1}^{2}a_{i}=a_{i+1}a_{i}a_{i+1},
\label{PL2}
\end{eqnarray}%
where \eqref{PL2} are the plactic relations on the circle, i.e. all indices
are defined modulo $n$. Denote by $\func{Pl}_{\func{fin}}=\func{Pl}_{\func{%
fin}}(\mathcal{A}^{\prime })$ the \emph{local finite plactic algebra}
generated from $\mathcal{A}^{\prime }=\{a_{1},a_{2},\ldots a_{n-1}\}$.
\end{dfn}

We recall the following result from \cite[Prop 5.8]{KS}:

\begin{prp}
\label{faithfulness} There is a homomorphism of algebras $\func{Pl}_{\func{%
fin}}\rightarrow \Phi $ such that
\begin{equation}
a_{j}\mapsto \varphi _{j+1}^{\ast }\varphi _{j},\quad j=1,...,n-1,
\label{placticrep}
\end{equation}%
where the representation of the phase algebra $\Phi $ given by (\ref{phi*})
and (\ref{phi}) lifts to a representation of the local plactic algebra $%
\func{Pl}_{\func{fin}}$. Mapping $a_{0}=a_{n}$ to $z\varphi _{1}^{\ast
}\varphi _{n}$ it lifts in addition to a representation of $\func{Pl}$ on $%
\mathcal{M}[z]^{\otimes n}$ with $\mathcal{M}[z]=\mathbb{C}[z]\otimes _{%
\mathbb{C}}\mathcal{M}$ and $z$ an indeterminate. Both representations are
faithful.
\end{prp}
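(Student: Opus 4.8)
The plan is to prove the statement in two stages: first that the assignment \eqref{placticrep}, together with $a_n\mapsto z\varphi_1^\ast\varphi_n$, respects the defining relations and so extends to an algebra map; and second that this map is injective. Throughout I would argue directly on the distinguished basis $\{\boldsymbol m\}$ of $\mathcal M^{\otimes n}$, since two endomorphisms agree exactly when they agree on every $\boldsymbol m$. Reading off \eqref{phi*}--\eqref{phi}, the operator $A_j:=\varphi_{j+1}^\ast\varphi_j$ acts as a single particle hop, $(m_1,\dots,m_n)\mapsto(\dots,m_j-1,m_{j+1}+1,\dots)$ if $m_j>0$ and $A_j\boldsymbol m=0$ if $m_j=0$; the affine generator $z\varphi_1^\ast\varphi_n$ hops a particle from site $n$ to site $1$ and records a factor $z$.

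For the relations, \eqref{PL1} is immediate: when $|i-j|\neq 1\bmod n$ the index sets $\{i,i+1\}$ and $\{j,j+1\}$ are disjoint, so the four factors carry distinct labels and commute pairwise by \eqref{comm} and \eqref{comm3}, giving $A_iA_j=A_jA_i$. For the plactic relations \eqref{PL2} it suffices to restrict to three consecutive sites with occupation $(p,q,r)$. A direct computation shows that both $A_{i+1}A_i^2$ and $A_iA_{i+1}A_i$ send $(p,q,r)$ to $(p-2,q+1,r+1)$ when $p\geq 2$ and to $0$ otherwise, while both $A_{i+1}^2A_i$ and $A_{i+1}A_iA_{i+1}$ send $(p,q,r)$ to $(p-1,q-1,r+2)$ when $p,q\geq 1$ and to $0$ otherwise; hence each relation holds on every basis vector. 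In the affine case the same computation applies to the cyclic triple of sites, and one checks that the two sides of each relation involve the same number of factors $a_n$, so the powers of $z$ balance. Thus both assignments are homomorphisms.

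For faithfulness I would first observe that every relation preserves the content $\beta=(\beta_1,\dots,\beta_n)$ (the multiset of generators), so $\func{Pl}$ is content-graded and it suffices to separate monomials of fixed content; in the affine case $\beta_n$ is exactly the $z$-degree. Within a content class I would invoke the plactic normal form: orienting \eqref{PL1}--\eqref{PL2} into a confluent terminating rewriting system, each monomial reduces to the reading word of a unique semistandard Young tableau, so the tableau words form a $\mathbb C$-basis of $\func{Pl}_{\func{fin}}$ (and, with $z$-weights, of $\func{Pl}$). It then remains to show that the operators $\rho(a_{w(T)})$ for distinct tableaux $T$ are linearly independent, which I would derive from the Robinson--Schensted--Knuth correspondence: evaluating $\rho(u)$ on a suitable source configuration reproduces Schensted insertion, so that the resulting basis vector recovers the insertion tableau $P(u)$; since $P(u)=P(v)$ precisely when $u,v$ are plactically equivalent, inequivalent normal forms act as distinct, linearly independent operators.

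The main obstacle is exactly this last separation step. On interior configurations with all $m_j$ large the hops never annihilate and mutually commute, so any two words of equal content act identically there; the representation can distinguish plactically inequivalent words only through the annihilation $\varphi_j\boldsymbol m=0$ at $m_j=0$. Making the RSK argument rigorous therefore amounts to choosing source vectors whose boundary zeros enforce precisely the column-strictness constraints of semistandard tableaux, so that $P(T)$ can be read off from $\rho(a_{w(T)})\boldsymbol m$. This is the technical heart of the argument in \cite{KS} and the step I expect to demand the most care.
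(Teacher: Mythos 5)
Your first half—checking that the assignment respects \eqref{PL1} and \eqref{PL2} by computing the action on occupation triples $(p,q,r)$, and noting that the $z$-powers balance because each relation uses the same multiset of generators—is correct, as is the reduction of faithfulness to separating monomials of fixed content (with $\beta_n$ read off as the $z$-degree in the affine case). The paper itself offers no proof, deferring to \cite[Prop 5.8]{KS}, so the real question is whether your faithfulness argument stands on its own. It does not.

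The gap is that you have conflated the \emph{local} plactic algebra with the plactic algebra of Lascoux--Sch\"utzenberger, and the two combinatorial claims your separation step rests on are false for $\func{Pl}_{\func{fin}}$. Concretely, take $n\geq 4$: by relation \eqref{PL1} one has $a_{1}a_{3}=a_{3}a_{1}$ in $\func{Pl}_{\func{fin}}$, yet the words $a_{1}a_{3}$ and $a_{3}a_{1}$ are \emph{not} Knuth equivalent—their Schensted insertion tableaux are the one-row tableau and the one-column tableau, respectively, and both of these are tableau reading words of content $(1,0,1)$. So (i) distinct semistandard tableau words can coincide as elements of $\func{Pl}_{\func{fin}}$, hence tableau words cannot form a basis and no orientation of \eqref{PL1}--\eqref{PL2} can have them as normal forms (there are strictly fewer local plactic classes than tableaux already in this content class); and (ii) the criterion ``$P(u)=P(v)$ iff $u\equiv v$'' characterises \emph{plactic} equivalence, which is strictly finer than the defining equivalence of $\func{Pl}_{\func{fin}}$. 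Indeed the correct relationship is the opposite of the one you use: every Knuth relation is a consequence of \eqref{PL1}--\eqref{PL2} (the non-commuting Knuth triples involve only two adjacent letters, the rest follow from the commutations), so $\func{Pl}_{\func{fin}}$ is a \emph{proper quotient} of the plactic algebra. This also shows your proposed RSK device cannot possibly work: since $\rho(a_{1}a_{3})=\rho(a_{3}a_{1})$ (all four $\varphi$-indices are distinct, so the factors commute in $\Phi$), no procedure can recover the insertion tableau $P(u)$ from the operator $\rho(u)$ and any invariant extracted from $\rho(u)$ must agree on these two words, whereas their insertion tableaux differ. A correct proof must therefore use a normal form adapted to the local relations themselves (this is what is done in \cite[Prop 5.8]{KS}, via an explicit basis compatible with the phase algebra realisation), not the tableau/RSK normal form of the full plactic monoid.
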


\begin{rem}
\textrm{The finite plactic algebra first appeared in the context of the
Robinson-Schensted correspondence: given a word in a noncommutative alphabet
it can be mapped onto a pair of Young tableaux, usually called $(P,Q)$, by
using the bumping algorithm. $Q$ is the recording tableau encoding the
sequence of bumping processes; see e.g. \cite{FultonYT} for an explanation.
Lascoux and Sch\"{u}tzenberger showed that identifying words which only
differ in their recording tableaux is equivalent to a set of identities of
which (\ref{PL2}) are special cases. The \emph{local} finite plactic algebra
was first considered by Fomin and Greene in \cite{FG}. We recover their case
when specialising to $z=0$. }
\end{rem}

\begin{rem}
\textrm{Note that the action of the affine plactic algebra is blockdiagonal
with respect to the decomposition $\mathbb{C}P^{+}=\tbigoplus_{k\geq 0}%
\mathbb{C}P_{k}^{+}$. In fact each subspace can be represented as a directed
coloured graph where the elements in $P_{k}^{+}$ are the vertices and a
directed edge of colour $i$ between two vertices, $\hat{\mu}\overset{i}{%
\longrightarrow }\hat{\lambda}$, is introduced if $\hat{\lambda}=a_{i}\hat{%
\mu}$. This yields the Kirillov-Reshetikhin crystal graph $\mathcal{B}_{1,k}$
of type A. Setting $z=0$ all edges related to the affine generator $a_{n}$
are removed from the graph and we obtain the $\mathfrak{su}(n)$ crystal
graph of highest weight $k\omega _{1}$, where $\omega _{1}$ is the first
fundamental weight. }
\end{rem}


\subsection{Yang-Baxter algebras}

For $r,s\in \mathbb{Z}_{\geq 0}$ arbitrary but fixed and $u$ a formal,
invertible variable define $\boldsymbol{Q}(u):\mathcal{M}(u)\otimes \mathcal{%
M}(u)^{\otimes n}\rightarrow \mathcal{M}(u)\otimes \mathcal{M}(u)^{\otimes
n} $ by setting $\boldsymbol{Q}(u)v_{r}\otimes \boldsymbol{m}:=\sum_{s\geq
0}v_{s}\otimes \boldsymbol{Q}_{s,r}(u)\boldsymbol{m}$ where%
\begin{equation}
\boldsymbol{Q}_{s,r}(u):=\sum_{\varepsilon }u^{|\varepsilon |+r}(\varphi
_{1}^{\ast })^{r}a_{1}^{\varepsilon _{1}}\cdots a_{n-1}^{\varepsilon
_{n-1}}\varphi _{n}^{s}  \label{momQ}
\end{equation}%
with the sum running over all compositions $\varepsilon =(\varepsilon
_{1},\ldots ,\varepsilon _{n-1})$. Despite the sums in the definition of $%
\boldsymbol{Q}(u)$ being infinite, only a finite number of terms survive
when acting on a vector in $\mathcal{M}(u)^{\otimes n+1}$, thus the operator
is well-defined. In fact, we have the following \cite{Korff}:

\begin{lmm}
Let $Q$ be the transfer matrix defined in (\ref{Qdef}). Then%
\begin{equation}
Q(x_{i})=\sum_{r\geq 0}z^{r}\boldsymbol{Q}_{r,r}(x_{i})=\sum_{r\geq
0}(zx_{i})^{r}(\varphi _{1}^{\ast })^{r}\boldsymbol{Q}_{0,0}(x_{i})\varphi
_{n}^{r},  \label{Qdef2}
\end{equation}%
in other words for $z=1$ the transfer matrix is the formal trace of the
matrix (\ref{momQ}).
\end{lmm}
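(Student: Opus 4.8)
The plan is to compare both sides as endomorphisms of $\mathcal{M}^{\otimes n}$ by evaluating their matrix elements in the basis $\{\boldsymbol{m}\}$, and to read off the trace statement at the end. The second equality is the easy one and I would dispose of it first: in $\boldsymbol{Q}_{r,r}(x_i)=\sum_{\varepsilon}x_i^{|\varepsilon|+r}(\varphi_1^{\ast})^r a_1^{\varepsilon_1}\cdots a_{n-1}^{\varepsilon_{n-1}}\varphi_n^{r}$ the scalar $x_i^{r}$ and the operators $(\varphi_1^{\ast})^r$, $\varphi_n^{r}$ are independent of the summation index $\varepsilon$, so they factor out of the sum and leave $\boldsymbol{Q}_{0,0}(x_i)=\sum_{\varepsilon}x_i^{|\varepsilon|}a_1^{\varepsilon_1}\cdots a_{n-1}^{\varepsilon_{n-1}}$ in the middle. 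Hence $\boldsymbol{Q}_{r,r}(x_i)=x_i^{r}(\varphi_1^{\ast})^r\boldsymbol{Q}_{0,0}(x_i)\varphi_n^{r}$, and multiplying by $z^{r}$ and summing over $r$ gives the second equality; specialising $z=1$ turns $\sum_{r\ge0}\boldsymbol{Q}_{r,r}(x_i)$ into the formal trace of $\boldsymbol{Q}(x_i)$ over the auxiliary factor $\mathcal{M}(x_i)$, which is the closing remark.

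The real content is the first equality. The key algebraic step is to rewrite the noncommutative word defining $\boldsymbol{Q}_{r,r}$ as an ordered product of one-site operators. Substituting $a_j=\varphi_{j+1}^{\ast}\varphi_j$ and using that generators attached to different indices commute (relations (\ref{comm}) and (\ref{comm3})), I would move all factors carrying the label $j$ next to one another, keeping their relative order, so that $(\varphi_1^{\ast})^r a_1^{\varepsilon_1}\cdots a_{n-1}^{\varepsilon_{n-1}}\varphi_n^{r}$ becomes $\prod_{j=1}^{n}(\varphi_j^{\ast})^{\varepsilon_{j-1}}\varphi_j^{\varepsilon_j}$ with the seam conventions $\varepsilon_0=\varepsilon_n=r$. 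Acting on a basis vector the local factor at site $j$ first lowers the $j$-th label by $\varepsilon_j$---vanishing by (\ref{phi}) unless the label is at least $\varepsilon_j$---and then raises it by $\varepsilon_{j-1}$, so it carries an incoming horizontal edge $\varepsilon_{j-1}$ and an outgoing one $\varepsilon_j$. Matching $\varepsilon_{j-1}$ with the left edge $a$ and $\varepsilon_j$ with the right edge $d$ of the $j$-th vertex, and the label before and after with the two vertical edges, I would identify this local factor with the Boltzmann weight $\mathcal{R}^{a,b}_{c,d}(x_i)$ of (\ref{Boltzmannweight}): the change of the label reproduces the conservation law $a+b=c+d$, and the prefactor $x_i^{|\varepsilon|+r}=\prod_j x_i^{\,\varepsilon_{j-1}}$ reproduces the product of the local weights $x_i^{a}$.

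With this dictionary each pair $(r,\varepsilon)$ corresponds to one lattice configuration of a single row: $r$ is the common value of the two identified outer (seam) edges, weighted by $z^{r}$ in agreement with the exponent $\tfrac{1}{2}\#\{\text{outer edges}\}$ in (\ref{Qdef}), while the $\varepsilon_j$ are the occupation numbers of the internal horizontal edges, contributing $x_i^{|\varepsilon|+r}=x_i^{\#\{\text{horizontal edges}\}}$. For a fixed boundary $(\boldsymbol{m},\boldsymbol{m}')$ and a fixed seam value $r$ the conservation law pins down the $\varepsilon_j$ uniquely, namely $\varepsilon_j=r+|m'_{\le j}|-|m_{\le j}|$ with $|m_{\le j}|:=m_1+\cdots+m_j$; in particular the matrix element vanishes unless $|\boldsymbol{m}|=|\boldsymbol{m}'|$. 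Summing $\sum_{r\ge0}z^{r}\langle\boldsymbol{m},\boldsymbol{Q}_{r,r}(x_i)\boldsymbol{m}'\rangle$ then reproduces, configuration by configuration, the weighted sum over allowed row configurations that defines $Q(x_i)_{\boldsymbol{m},\boldsymbol{m}'}$ in (\ref{Qdef}).

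I expect the main obstacle to be precisely this last matching, i.e.\ proving that the compositions $\varepsilon$ for which the ordered product does not annihilate $\boldsymbol{m}'$ are exactly the allowed row configurations. This amounts to checking that the nonnegativity requirements generated by the successive lowering operators $\varphi_j$ (the ``$0$, else'' alternative in (\ref{phi})) are, vertex by vertex, equivalent to the inequality $b\ge c$ together with the nonnegativity of the horizontal edges, and it requires care with the order in which the $\varphi_j$ act and with the seam bookkeeping that ties the outgoing edge at site $n$ to the incoming edge at site $1$ and attaches the weight $z$. The faithful realisation of $\func{Pl}$ on $\mathcal{M}[z]^{\otimes n}$ in Proposition \ref{faithfulness}, together with the plactic relations, is what guarantees that the reorganisation of the word into local factors is consistent and independent of the chosen ordering.
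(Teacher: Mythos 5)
The paper itself contains no proof of this lemma (it is quoted from \cite{Korff}), so your argument has to stand on its own; its overall strategy is the natural one and, in outline, correct. The second equality is indeed just factoring the $\varepsilon$-independent scalars and operators out of (\ref{momQ}); the regrouping of $(\varphi_1^{\ast})^{r}a_1^{\varepsilon_1}\cdots a_{n-1}^{\varepsilon_{n-1}}\varphi_n^{r}$ into commuting one-site factors $(\varphi_j^{\ast})^{\varepsilon_{j-1}}\varphi_j^{\varepsilon_j}$ with $\varepsilon_0=\varepsilon_n=r$ is valid; and your bookkeeping of the weights ($z^{r}$ from the identified seam edges, $x_i^{|\varepsilon|+r}=\prod_j x_i^{\varepsilon_{j-1}}$ from the horizontal edges) and of conservation ($\varepsilon_j=r+\sum_{l\le j}(m_l'-m_l)$, forcing $|\boldsymbol{m}|=|\boldsymbol{m}'|$ and pinning down $\varepsilon$ uniquely for each $r$) is right. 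One side remark: faithfulness (Proposition \ref{faithfulness}) and the plactic relations play no role here; the regrouping uses only the phase-algebra relations (\ref{comm}) and (\ref{comm3}), which already hold as identities in $\limfunc{End}(\mathcal{M}^{\otimes n})$.

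However, the explicit dictionary you commit to fails at precisely the step you flag as the crux. You match $\varepsilon_{j-1}$ with $a$, $\varepsilon_j$ with $d$, and the two vertical labels with $b,c$. The index convention of the paper, visible in (\ref{Smatrixdef}) where $\mathcal{R}$ sends $v_a\otimes v_b$ to $v_c\otimes v_d$, is that $a\to c$ is the horizontal (auxiliary) channel and $b\to d$ the vertical (quantum) one: the outgoing horizontal edge is $c$, the outgoing vertical edge is $d$. Under your assignment the admissibility condition $b\ge c$ of (\ref{Boltzmannweight}) would read $m_j'\ge m_j$, equivalently $\varepsilon_j\ge\varepsilon_{j-1}$, at every vertex; going around the cylinder with $\varepsilon_n=\varepsilon_0$ and $|\boldsymbol{m}|=|\boldsymbol{m}'|$ this forces all $\varepsilon_j$ equal and $\boldsymbol{m}=\boldsymbol{m}'$, i.e.\ a diagonal transfer matrix, contradicting for instance $Q(u)\hat{\omega}_i=\hat{\omega}_i+u\hat{\omega}_{i+1}$ used in the paper's $k=1$ example. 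With the corrected matching $a=\varepsilon_{j-1}$, $b=m_j'$, $c=\varepsilon_j$, $d=m_j$, everything closes: $d=a+b-c$ is your conservation law $m_j=m_j'+\varepsilon_{j-1}-\varepsilon_j$, the local weight $x_i^{a}$ gives the same global prefactor, and $b\ge c$ is literally $m_j'\ge\varepsilon_j$, the nonvanishing condition for $\varphi_j^{\varepsilon_j}$ in (\ref{phi}), with nonnegativity of $m_j$ then automatic. So the gap is real but consists only of identifying the four edges correctly; once that is fixed, your argument is a complete proof.
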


Define another operator $\mathcal{R}(u/v):\mathcal{M}(u)\otimes \mathcal{M}%
(v)\rightarrow \mathcal{M}(u)\otimes \mathcal{M}(v)$ via the relation
\begin{equation}
\mathcal{R}(u)~v_{a}\otimes v_{b}=\sum_{c,d\geq 0}\mathcal{R}%
_{c,d}^{a,b}(u)~v_{c}\otimes v_{d},  \label{Smatrixdef}
\end{equation}%
setting%
\begin{equation}
\mathcal{R}_{c,d}^{a,b}(u)=\left\{
\begin{array}{cc}
u^{a}, & c=b,\;d=a \\
u^{a}(1-u), & d=a+b-c,\;b>c \\
0, & \text{else}%
\end{array}%
\right. \;.
\end{equation}

\begin{prp}
The $\mathcal{R}$-matrix (\ref{Smatrixdef}) and the monodromy matrix (\ref%
{momQ}) obey the Yang-Baxter equation,%
\begin{equation}
\mathcal{R}_{12}(u/v)\boldsymbol{Q}_{1}(u)\boldsymbol{Q}_{2}(v)=\boldsymbol{Q%
}_{2}(v)\boldsymbol{Q}_{1}(u)\mathcal{R}_{12}(u/v)\;.  \label{RQQ}
\end{equation}
\end{prp}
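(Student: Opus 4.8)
The plan is to read (\ref{RQQ}) as the $RLL$ (equivalently $RTT$) relation for the monodromy matrix $\boldsymbol{Q}(u)$ and to deduce it from a single \emph{local} Yang-Baxter identity together with the standard ``train'' (telescoping) argument of the quantum inverse scattering method. By construction $\boldsymbol{Q}(u)$ of (\ref{momQ}) is the monodromy matrix of the row of vertices whose local weight is the Boltzmann weight (\ref{Boltzmannweight}): writing $\mathsf{L}_{0,j}(u)$ for the operator on $\mathcal{M}(u)_{0}\otimes\mathcal{M}_{j}$ (auxiliary space $0$, quantum site $j$) whose matrix elements are $\mathcal{R}^{a,b}_{c,d}(u)$ of (\ref{Boltzmannweight}), one has $\boldsymbol{Q}(u)=\mathsf{L}_{0,n}(u)\cdots\mathsf{L}_{0,1}(u)$, the boundary factors $(\varphi_{1}^{\ast})^{r}$, $\varphi_{n}^{s}$ and the bulk string $a_{1}^{\varepsilon_{1}}\cdots a_{n-1}^{\varepsilon_{n-1}}$ in (\ref{momQ}) being exactly what this ordered product produces.

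First I would evaluate both sides of (\ref{RQQ}) on a basis vector $v_{a}\otimes v_{b}\otimes\boldsymbol{m}$ of the two auxiliary spaces and compare the coefficient of $v_{c}\otimes v_{d}$. Since only finitely many terms survive, this turns the operator identity into the family of exchange relations
\[
\sum_{s,t}\mathcal{R}^{s,t}_{c,d}(u/v)\,\boldsymbol{Q}_{s,a}(u)\,\boldsymbol{Q}_{t,b}(v)=\sum_{c',d'}\mathcal{R}^{a,b}_{c',d'}(u/v)\,\boldsymbol{Q}_{d,d'}(v)\,\boldsymbol{Q}_{c,c'}(u),
\]
one for each $a,b,c,d\geq 0$, to be checked as identities in $\func{End}(\mathcal{M}^{\otimes n})$. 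Because the $\mathcal{R}$-matrix of (\ref{Smatrixdef}) has only the two branches $c=b$ and $b>c$ (enforcing $a+b=c+d$ throughout), each such relation is a short, explicit sum.

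The heart of the argument is the corresponding local Yang-Baxter equation on $\mathcal{M}(u)_{1}\otimes\mathcal{M}(v)_{2}\otimes\mathcal{M}_{j}$,
\[
\mathcal{R}_{12}(u/v)\,\mathsf{L}_{1,j}(u)\,\mathsf{L}_{2,j}(v)=\mathsf{L}_{2,j}(v)\,\mathsf{L}_{1,j}(u)\,\mathcal{R}_{12}(u/v),
\]
which I would verify directly. Comparing matrix elements and using conservation $a+b=c+d$ reduces this to finitely many cases, in each of which the surviving single-site operators are words in $\varphi_{j},\varphi_{j}^{\ast},N_{j}$; the identity then follows from the phase relations (\ref{comm})--(\ref{Npi}), in particular $\varphi_{j}\varphi_{j}^{\ast}=1$ from (\ref{comm3}) and the projector relation (\ref{Npi}). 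Granting the local relation, (\ref{RQQ}) follows by the train argument: commuting $\mathsf{L}_{1,j}$ with $\mathsf{L}_{2,j'}$ for $j\neq j'$ so as to pair equal sites, and then pushing $\mathcal{R}_{12}(u/v)$ from the left of $\boldsymbol{Q}_{1}(u)\boldsymbol{Q}_{2}(v)$ through the product one site at a time, each step replacing $\mathsf{L}_{1,j}\mathsf{L}_{2,j}$ by $\mathsf{L}_{2,j}\mathsf{L}_{1,j}$, until $\mathcal{R}_{12}$ emerges on the right as $\boldsymbol{Q}_{2}(v)\boldsymbol{Q}_{1}(u)\mathcal{R}_{12}(u/v)$.

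I expect the main obstacle to be the local Yang-Baxter equation, sharpened by the fact that $a_{i}=\varphi_{i+1}^{\ast}\varphi_{i}$ is a \emph{bond} operator coupling neighbouring sites rather than a strictly single-site one; consequently the factorisation of $\boldsymbol{Q}(u)$ into local $L$-operators, and the commuting of $\mathsf{L}_{1,j}$ with $\mathsf{L}_{2,j'}$ needed for the telescoping, must be arranged carefully, and the braided (plactic) commutation of the bulk strings, governed by (\ref{PL1})--(\ref{PL2}), has to be matched precisely against the $(1-u/v)$ factors of $\mathcal{R}$. Tracking the boundary operators $(\varphi_{1}^{\ast})^{r},\varphi_{n}^{s}$ and the convergence of the formal sums (only finitely many terms act nontrivially, as noted after (\ref{momQ})) is then routine bookkeeping once the local identity is established.
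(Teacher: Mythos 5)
Your proposal is correct in outline, but note that this proceedings article never proves the proposition itself: it is quoted from \cite{Korff}, where (\ref{RQQ}) is obtained algebraically, as the crystal limit of the intertwining property of the $U_{q}\widehat{\mathfrak{sl}}(2)$ Verma-module vertex model (this is what the text around (\ref{momT}) alludes to when it says the monodromy matrix ``can be algebraically motivated by taking a special limit of the intertwiner''). Your route --- factorise $\boldsymbol{Q}(u)$ into local $L$-operators, verify a one-site Yang--Baxter identity, then telescope --- is the standard quantum-inverse-scattering argument and yields a self-contained, elementary proof that never invokes the quantum group; what it buys is independence from the $q\rightarrow 0$ limit, at the price of an explicit case check. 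Two remarks sharpen your plan. First, the ``bond operator'' worry you raise dissolves: since phase operators at distinct sites commute, by (\ref{comm}) and (\ref{comm3}), the monomial $(\varphi _{1}^{\ast })^{r}a_{1}^{\varepsilon _{1}}\cdots a_{n-1}^{\varepsilon _{n-1}}\varphi _{n}^{s}$ regroups site by site into $\prod_{j=1}^{n}(\varphi _{j}^{\ast })^{\varepsilon _{j-1}}\varphi _{j}^{\varepsilon _{j}}$ with $\varepsilon _{0}=r$, $\varepsilon _{n}=s$, which is exactly the auxiliary-space matrix product of the genuinely local operators $\left( \mathsf{L}_{0,j}(u)\right) _{c,a}=u^{a}(\varphi _{j}^{\ast })^{a}\varphi _{j}^{c}$; hence the factorisation $\boldsymbol{Q}(u)=\mathsf{L}_{0,n}(u)\cdots \mathsf{L}_{0,1}(u)$ is exact, and only the phase-algebra relations enter --- the plactic relations (\ref{PL1})--(\ref{PL2}) are consequences of the construction, not inputs to this proof, so nothing needs to be ``matched against the $(1-u/v)$ factors''. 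Second, your local Yang--Baxter equation is precisely the $n=1$ instance of (\ref{RQQ}), so the argument is an honest reduction to the single-site case; each matrix-element identity is a finite sum because $\mathcal{R}$ has only the two branches $c=b$ and $b>c$, and the agreement of the two sides rests on geometric-series cancellations of the type $x^{\beta }+(1-x)\left( 1+x+\cdots +x^{\beta -1}\right) =1$ with $x=u/v$. Finally, the train argument requires no invertibility of $\mathcal{R}_{12}$, only the commutativity of $\mathsf{L}_{1,j}$ with $\mathsf{L}_{2,j^{\prime }}$ for $j\neq j^{\prime }$ (they act on disjoint tensor factors), and the formal infinite sums are harmless since, as noted after (\ref{momQ}), only finitely many terms act nontrivially on any given vector.
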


\begin{figure}[tbp]
\begin{equation*}
\includegraphics[scale=0.4]{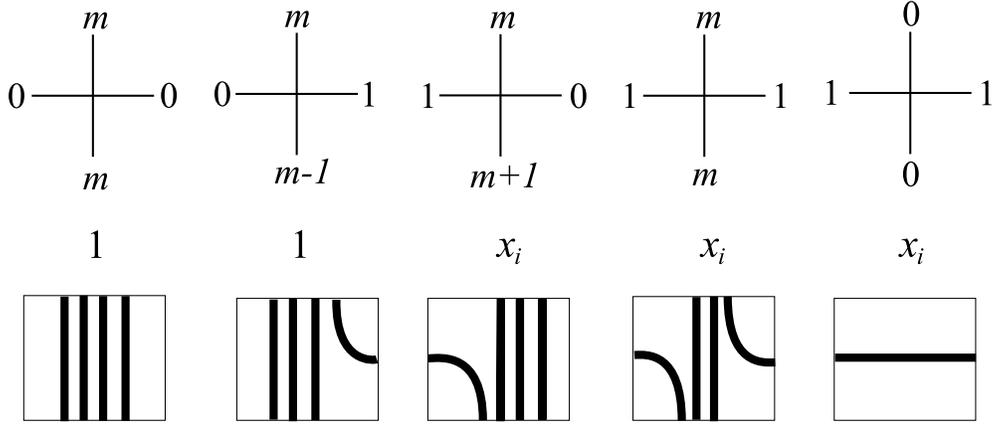}
\end{equation*}%
\caption{The (Boltzmann) weights of local vertex configuration for the
auxiliary matrix $T=A+zD$. Note that in contrast to the Boltzmann weights (%
\protect\ref{Boltzmannweight}) the walkers can now propagate horizontally
but only one is allowed on a horizontal edge at a time.}
\label{fig:Tweights}
\end{figure}
The definition of the monodromy matrix (\ref{momQ}) can be algebraically
motivated by taking a special limit of the intertwiner of a $U_q(\mathfrak{sl%
}(2))$ Verma module; see the discussion in \cite{Korff}. Replacing this
Verma module with the (two-dimensional) fundamental representation one
obtains in the analogous limit now a $2\times 2$ monodromy matrix, $%
\boldsymbol{T}\in \limfunc{End}(\mathbb{C}^{2}\otimes \mathcal{M}%
(u)^{\otimes n})$, where the ordering of the noncommutative alphabet is
reversed($r,s=0,1$),%
\begin{equation}
\boldsymbol{T}_{r,s}(u):=\sum_{\varepsilon _{i}=0,1}u^{|\varepsilon
|+s}\varphi _{n}^{s}a_{n-1}^{\varepsilon _{n-1}}\cdots a_{1}^{\varepsilon
_{1}}(\varphi _{1}^{\ast })^{r}=\left(
\begin{array}{cc}
A(u) & B(u) \\
C(u) & D(u)%
\end{array}%
\right) _{r,s}\;.  \label{momT}
\end{equation}%
Note that the sum now only runs over compositions $\varepsilon $ whose parts
are 0 or 1. This second monodromy matrix coincides with the matrix
introduced by Bogoliubov, Izergin and Kitanine in the context of the
so-called phase model \cite{BIK}.

\begin{prp}[\protect\cite{BIK}]
The $2\times 2$ matrix (\ref{momT}) with entries in $\limfunc{End}(\mathcal{M%
}(u)^{\otimes n})$ obeys the Yang-Baxter equation $R_{12}(u/v)\boldsymbol{T}%
_{1}(u)\boldsymbol{T}_{2}(v)=\boldsymbol{T}_{2}(v)\boldsymbol{T}%
_{1}(u)R_{12}(u/v)$ with%
\begin{equation}
R(u)=\left(
\begin{array}{cccc}
\frac{u}{u-1} & 0 & 0 & 0 \\
0 & 0 & \frac{u}{u-1} & 0 \\
0 & \frac{1}{u-1} & 1 & 0 \\
0 & 0 & 0 & \frac{u}{u-1}%
\end{array}%
\right) \; .  \label{RTT}
\end{equation}
\end{prp}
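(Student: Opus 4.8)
The natural route is the quantum inverse scattering method: I would realize $\boldsymbol{T}(u)$ as an ordered product of local L-operators, establish the Yang--Baxter (RLL) relation at a single lattice site, and then propagate it to the full monodromy by the standard telescoping argument. Since $R(u/v)$ acts only in the two two-dimensional auxiliary spaces, the whole statement reduces to a single-site computation together with the commutativity of operators attached to distinct sites.

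First I would factorize the monodromy matrix. Inserting $a_j=\varphi_{j+1}^{\ast}\varphi_j$ from (\ref{placticrep}) into (\ref{momT}) and using that phase generators carrying distinct site labels commute ((\ref{comm}) and (\ref{comm3})), each $(\varphi_{j+1}^{\ast}\varphi_j)^{\varepsilon_j}$ splits as $(\varphi_{j+1}^{\ast})^{\varepsilon_j}\varphi_j^{\varepsilon_j}$ and the whole word regroups site by site; writing $\varepsilon_0:=r$, $\varepsilon_n:=s$ and distributing $u^{|\varepsilon|+s}=\prod_{i=1}^{n}u^{\varepsilon_i}$, the $i$-th factor contributes $u^{\varepsilon_i}\varphi_i^{\varepsilon_i}(\varphi_i^{\ast})^{\varepsilon_{i-1}}$. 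Summing over the internal bonds $\varepsilon_1,\dots,\varepsilon_{n-1}\in\{0,1\}$ exhibits this as a matrix product in the auxiliary space, so that
\begin{equation*}
\boldsymbol{T}(u)=L_1(u)L_2(u)\cdots L_n(u),\qquad L_i(u)=\begin{pmatrix}1 & u\varphi_i\\ \varphi_i^{\ast} & u\end{pmatrix},
\end{equation*}
where $L_i(u)$ acts in $\mathbb{C}^2$ tensored with the $i$-th copy of $\mathcal{M}$ and I have used $\varphi_i\varphi_i^{\ast}=1$ from (\ref{comm3}) to simplify the lower-right entry (on $n=1$ the product collapses to this single matrix, which is a useful consistency check of the bookkeeping).

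The heart of the proof, and the step I expect to be the main obstacle, is the single-site relation
\begin{equation*}
R_{12}(u/v)\,L_{i}(u)_1\,L_{i}(v)_2=L_{i}(v)_2\,L_{i}(u)_1\,R_{12}(u/v)
\end{equation*}
in $\func{End}(\mathbb{C}^2\otimes\mathbb{C}^2\otimes\mathcal{M})$. Expanding $L_i(u)_1L_i(v)_2$ and $L_i(v)_2L_i(u)_1$ in the auxiliary basis $\{00,01,10,11\}$ produces two $4\times4$ matrices with entries in $\func{End}(\mathcal{M})$ that already agree everywhere except in the middle $\{01,10\}$-block, where they differ exactly by the vacuum defect $\varphi_i\varphi_i^{\ast}-\varphi_i^{\ast}\varphi_i=P_0^{(i)}$, the projection onto $v_0$; this is the point at which $\varphi_i\varphi_i^{\ast}=1$ together with $\varphi_i^{\ast}\varphi_i=1-P_0^{(i)}$ (i.e. (\ref{comm3}) and (\ref{Npi})) enters. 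The content of the relation is that the scalar entries of $R$ in (\ref{RTT}) are tuned precisely so that this projector defect cancels, leaving only matching scalar terms. Verifying this finite system of identities entrywise is the only genuine computation; it is delicate exactly because $\varphi_i$ is not invertible, and I would pay close attention to the orientation conventions of (\ref{RTT}) (the labelling of the middle block and the placement of $R_{12}$ relative to $\boldsymbol{T}_1(u)\boldsymbol{T}_2(v)$ versus $\boldsymbol{T}_2(v)\boldsymbol{T}_1(u)$) so that the surviving terms reproduce the stated matrix.

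Granting the single-site relation, the global relation (\ref{RTT}) follows by telescoping. Because $L_i(u)_1$ and $L_j(v)_2$ act on different tensor factors and different auxiliary spaces for $i\neq j$, they commute, so $\boldsymbol{T}_1(u)\boldsymbol{T}_2(v)$ can be reorganized into the interleaved product $\prod_{i=1}^{n}\bigl(L_i(u)_1\,L_i(v)_2\bigr)$. Applying $R_{12}(u/v)$ and moving it through one pair $L_i(u)_1L_i(v)_2$ at a time via the single-site relation turns each pair into $L_i(v)_2L_i(u)_1$; after passing all $n$ sites the factors reassemble, again by distinct-site commutativity, into $\boldsymbol{T}_2(v)\boldsymbol{T}_1(u)$ with $R_{12}(u/v)$ now on the other side. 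This reproduces the argument of Bogoliubov, Izergin and Kitanine \cite{BIK}.
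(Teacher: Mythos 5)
Your strategy is the right one, and in fact the only one on record: the paper offers no proof of this proposition but imports it from \cite{BIK}, and the quantum-inverse-scattering argument you outline (local $L$-operators, a one-site RLL relation, telescoping) is exactly how it is established there and in \cite{Korff}. Your factorisation is correct: with $\varepsilon_0=r$, $\varepsilon_n=s$ one indeed has $\boldsymbol{T}_{r,s}(u)=\bigl(L_1(u)\cdots L_n(u)\bigr)_{r,s}$ with $L_i(u)=\left(\begin{smallmatrix}1&u\varphi_i\\ \varphi_i^{\ast}&u\end{smallmatrix}\right)$, and the telescoping step is routine once the one-site identity is available.

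The gap is the one-site identity itself, which you call ``the only genuine computation'' and then do not perform --- and it does not come out the way you describe. Set $M=L_i(u)_1L_i(v)_2$ and $N=L_i(v)_2L_i(u)_1$; these agree except at the entries $(01,10)$ and $(10,01)$, where they differ by $uP_0^{(i)}$ and $-vP_0^{(i)}$, as you say. Now take any $R$ of ice form with middle block $\left(\begin{smallmatrix}b&c\\ d&e\end{smallmatrix}\right)$ in the basis $(01,10)$. Comparing the $(01,10)$ entries of $RM$ and $NR$ gives $bu+cu$ against $cv+eu\,\varphi_i^{\ast}\varphi_i$, so the identity $RM=NR$ forces $e=0$; symmetrically, $RN=MR$ forces $b=0$. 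The matrix (\ref{RTT}) has $b=R_{(01),(01)}=0$ and $e=R_{(10),(10)}=1$, so with the standard conventions (basis order $00,01,10,11$, $\boldsymbol{T}_1=\boldsymbol{T}\otimes 1$, $\boldsymbol{T}_2=1\otimes\boldsymbol{T}$) what it satisfies at each site is
\begin{equation*}
R_{12}(u/v)\,L_i(v)_2L_i(u)_1=L_i(u)_1L_i(v)_2\,R_{12}(u/v),
\end{equation*}
hence after telescoping $R_{12}(u/v)\boldsymbol{T}_2(v)\boldsymbol{T}_1(u)=\boldsymbol{T}_1(u)\boldsymbol{T}_2(v)R_{12}(u/v)$ --- the stated equation only after exchanging the roles of the two sides, equivalently replacing $R(u/v)$ by $PR(v/u)P$ with $P$ the flip of the auxiliary spaces. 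Because $\varphi_i^{\ast}\varphi_i=1-P_0^{(i)}\neq 1$, the two orderings are genuinely inequivalent, so this is not a cosmetic point. Moreover, the projector terms do not ``cancel'' in the valid identity: they appear identically on both sides (for instance both $(10,10)$ entries equal $\tfrac{uv}{u-v}\,\varphi_i^{\ast}\varphi_i+u$); the role of the zero entry of $R$ is to annihilate the single defect term that could not be matched, and this is what pins down the admissible ordering. To complete your proof you must carry out this finite entrywise check and state explicitly which ordering/labelling convention reproduces (\ref{RTT}); as written, the one-site relation you assert is false for the displayed $R$.
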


The matrix elements of (\ref{momT}) obey certain commutation relations with
the matrix elements (\ref{momQ}), which again can be encoded in yet another
solution to the Yang-Baxter equation \cite{Korff}.

\begin{prp}
The monodromy matrices (\ref{momQ}) and (\ref{momT}) satisfy the equation,%
\begin{equation}
L_{12}^{\prime }(u/v)\boldsymbol{T}_{1}(u)\boldsymbol{Q}_{2}(v)=\boldsymbol{Q%
}_{2}(v)\boldsymbol{T}_{1}(u)L_{12}^{\prime }(u/v),  \label{LQT}
\end{equation}%
where the endomorphism $L^{\prime }\in \limfunc{End}(\mathbb{C}^{2}\otimes
\mathcal{M}(u)^{\otimes n})$ is defined in terms the matrix elements%
\begin{equation}
L_{c,d}^{\prime a,b}(u)=\left\{
\begin{array}{cc}
1+u, & a,b,c,d=0 \\
u^{a}, & d=a+b-c,\;b\geq c,\;a,c=0,1 \\
0, & \text{else}%
\end{array}%
\right. \;.  \label{L}
\end{equation}
\end{prp}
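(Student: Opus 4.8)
The plan is to read \eqref{LQT} as a \emph{mixed} intertwining relation of quantum-inverse-scattering type between the two monodromy matrices of the model: $\boldsymbol{T}_{1}(u)$ carries the two-dimensional auxiliary space $\mathbb{C}^{2}$, $\boldsymbol{Q}_{2}(v)$ carries the auxiliary space $\mathcal{M}(v)$, and both act on the \emph{same} quantum space $\mathcal{M}(u)^{\otimes n}$ through the phase/plactic generators \eqref{phi*}--\eqref{phi}. In this language $L^{\prime}(u/v)$ is precisely the operator intertwining the two auxiliary spaces, and \eqref{LQT} is the global relation obtained by propagating $L^{\prime}_{12}$ through the quantum chain. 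I would therefore prove it by the standard ``train'' argument: reduce the global identity to a single-site local relation and then telescope over the $n$ sites, exactly as the relations \eqref{RQQ} and $R\boldsymbol{T}\boldsymbol{T}=\boldsymbol{T}\boldsymbol{T}R$ arise for the homogeneous cases.

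First I would factorise both monodromy matrices into single-site Lax operators. Since $\boldsymbol{Q}(v)$ and $\boldsymbol{T}(u)$ are, by \eqref{momQ} and \eqref{momT}, ordered products of the generators $(\varphi_{1}^{\ast})^{r}, a_{1},\ldots,a_{n-1}, \varphi_{n}^{s}$, one regroups these factors site by site so that $\boldsymbol{Q}_{2}(v)=\ell^{Q}_{2,1}(v)\cdots\ell^{Q}_{2,n}(v)$ and $\boldsymbol{T}_{1}(u)=\ell^{T}_{1,n}(u)\cdots\ell^{T}_{1,1}(u)$, where $\ell^{Q}_{2,i}$ (resp.\ $\ell^{T}_{1,i}$) acts on the auxiliary space $2$ (resp.\ $1$) and on the $i$-th quantum site only, with matrix elements read off from the Boltzmann weight \eqref{Boltzmannweight}. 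The creation/annihilation factors $\varphi_{i}^{\ast},\varphi_{i}$ hidden in $a_{i}=\varphi_{i+1}^{\ast}\varphi_{i}$ are exactly the bond variables propagated along the auxiliary line; note that, as recorded after \eqref{momT}, the two products run over the sites in opposite order, which must be tracked carefully. The crux is then the \textbf{local} intertwining relation at a single site $i$,
\[
L^{\prime}_{12}(u/v)\,\ell^{T}_{1,i}(u)\,\ell^{Q}_{2,i}(v)=\ell^{Q}_{2,i}(v)\,\ell^{T}_{1,i}(u)\,L^{\prime}_{12}(u/v),
\]
an identity of endomorphisms of $\mathbb{C}^{2}\otimes\mathcal{M}(v)\otimes\mathcal{M}$, with the last factor the $i$-th quantum site.

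Once the local relation is established, the global one follows by telescoping. Because $\ell^{T}_{1,i}$ and $\ell^{Q}_{2,j}$ act on disjoint tensor factors for $i\neq j$ (distinct quantum sites and distinct auxiliary spaces), they commute, so one may drag $L^{\prime}_{12}$ from the far left through the chain: at each step bring the unique pair $\ell^{T}_{1,i},\ell^{Q}_{2,i}$ acting on the same site together, apply the local relation to push $L^{\prime}_{12}$ past them, and proceed. The opposite site-orderings of $\boldsymbol{T}_{1}$ and $\boldsymbol{Q}_{2}$ are in fact what make the pairs meet site by site, so they are accommodated rather than obstructed by the argument; after $L^{\prime}_{12}$ has crossed all $n$ sites it emerges on the far right with the two monodromy matrices in the exchanged order $\boldsymbol{Q}_{2}(v)\boldsymbol{T}_{1}(u)$, which is exactly \eqref{LQT}.

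The main obstacle is the local relation itself. Its verification is finite but delicate because of the book-keeping of walkers shared by three operators at once and, above all, because of the anomalous diagonal entry $L^{\prime 0,0}_{0,0}=1+u$ in \eqref{L}, which differs from the value $u^{0}=1$ that the second line would give; getting this extra $+u$ to balance against the $u^{a}(1-u)$ off-diagonal weights of $\mathcal{R}$ in \eqref{Smatrixdef} is where the computation must be done with care, using the plactic and phase relations \eqref{comm}--\eqref{Npi}, \eqref{PL1}--\eqref{PL2}. A further point needing attention is the treatment of the boundary factors $(\varphi_{1}^{\ast})^{r}$ and $\varphi_{n}^{s}$ carried by the auxiliary line of $\boldsymbol{Q}$, which affect the first and last steps of the telescoping on the cylinder. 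As an alternative, entirely algebraic route one could instead project \eqref{LQT} directly onto fixed indices in both auxiliary spaces, obtaining a finite set of quadratic commutation relations between the entries $A,B,C,D$ of \eqref{momT} and the operators $\boldsymbol{Q}_{s,r}$ of \eqref{momQ}, and verify each of them from \eqref{comm}--\eqref{Npi}; this avoids the regrouping into local Lax operators at the cost of a longer case analysis.
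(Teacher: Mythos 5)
The paper itself contains no proof of this proposition: as announced in the introduction, proofs are deferred to the references, and (\ref{LQT}) is quoted from \cite{Korff}, where $\boldsymbol{Q}$, $\boldsymbol{T}$ and $L^{\prime}$ arise as crystal ($q\rightarrow 0$) limits of $U_{q}\widehat{\mathfrak{sl}}(2)$ intertwiners and the relation is inherited from the generic-$q$ intertwining property, respectively verified through the phase-algebra relations. Your local-to-global strategy is therefore a genuinely different route, and it is the natural QISM one to try; but as written it has two gaps, one structural and one substantive.

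The structural gap: your claim that the opposite site-orderings of $\boldsymbol{T}_{1}$ and $\boldsymbol{Q}_{2}$ are ``accommodated rather than obstructed'' by the telescoping is backwards. If the two monodromies really were oppositely ordered, then in $\boldsymbol{T}_{1}(u)\boldsymbol{Q}_{2}(v)$ the same-site pairs would be nested, meeting in the middle of the word, and $L^{\prime}_{12}$ --- which commutes with \emph{none} of the local factors, since it shares auxiliary space $1$ with every $\ell^{T}_{1,i}$ and space $2$ with every $\ell^{Q}_{2,j}$ --- could never be brought adjacent to any pair: the telescoping cannot start. The train argument needs both monodromies ordered in the \emph{same} direction, so that after interleaving (factors at distinct sites commute) $L^{\prime}_{12}$ meets the pairs one by one from its own end. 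What rescues the strategy is that this is in fact the case here: regrouping (\ref{momQ}) and (\ref{momT}) site by site (collecting, for each $j$, the $\varphi_{j},\varphi_{j}^{\ast}$ contributed by $a_{j-1}$, $a_{j}$ and the boundary factors), both monodromies contract their auxiliary index from the $\varphi_{1}^{\ast}$ end to the $\varphi_{n}$ end; the ``reversal'' of the alphabet in (\ref{momT}) is a reversal of the normal ordering \emph{inside} each local factor, $\varphi_{j}^{\mathrm{out}}(\varphi_{j}^{\ast})^{\mathrm{in}}$ versus $(\varphi_{j}^{\ast})^{\mathrm{in}}\varphi_{j}^{\mathrm{out}}$, not of the site ordering. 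You would have to prove this regrouping; your argument is currently built on the wrong picture. The substantive gap: the one-site relation is the entire mathematical content of the proposition and you leave it unproven. It is not a routine finite check: the two families of local factors carry the spectral parameter in different places (attached to the outgoing auxiliary index for $\boldsymbol{T}$, to the incoming one for $\boldsymbol{Q}$), and with the local operators read off naively from (\ref{momQ}), (\ref{momT}) and $L^{\prime}$ taken from (\ref{L}), the one-site identity already fails on vectors of $\mathbb{C}^{2}\otimes\mathcal{M}(v)\otimes\mathcal{M}$ in which the $\boldsymbol{Q}$-auxiliary factor is excited; the anomalous entry $L^{\prime\,0,0}_{0,0}=1+u$ is precisely the fingerprint of this degeneration. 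Finding the correct (gauge-equivalent) normalization of the local factors for which a one-site relation does hold, anomaly included, is where the proof actually lives, and ``done with care'' is not a substitute for doing it. Your fallback --- projecting (\ref{LQT}) onto fixed auxiliary indices and verifying the resulting quadratic relations between $A,B,C,D$ and the $\boldsymbol{Q}_{s,r}$ directly from (\ref{comm})--(\ref{Npi}), of which (\ref{B&Q}) is one instance --- is much closer to what \cite{Korff} does, but it is likewise not carried out here.
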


\begin{figure}[tbp]
\begin{equation*}
\includegraphics[scale=0.4]{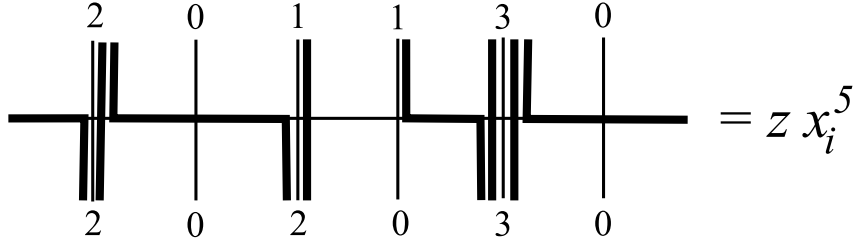}
\end{equation*}%
\caption{An example of a row configuration for the auxiliary matrix $T$ for $%
n=6$ and $k=7$. The statistical weight is again obtained by counting the
horizontal edges of paths, where outer edges are once more identified each
contributing a power of $z$.}
\label{fig:Trowconfig}
\end{figure}
The hallmark of an exactly solvable or integrable model in statistical
mechanics is that its transfer matrix commutes with itself for arbitrary
values of the spectral parameter which here is identified with the variables
$x_{i}$ in each lattice row. In a physical application the row variables $%
\{x_{1},\ldots ,x_{n}\}$ would be evaluated in the interval $[0,1]^{\times
n} $ such that the Boltzmann weights (\ref{Boltzmannweight}) can be
interpreted as proper probabilities. The transfer matrices for any other,
possibly complex values, of the $x_{i}$'s would be seen as a
\textquotedblleft symmetry\textquotedblright\ of the system. Generalising
the notion of Liouville integrability in classical mechanics, such a
statistical model is called integrable. One important consequence of the
Yang-Baxter equations stated above is that they imply integrability of the
vertex model (\ref{Boltzmannweight}).

\begin{crl}[Integrability]
\label{integrability}Set $T=A+zD$ then we have, among others, the
commutation relations%
\begin{equation}
\lbrack Q(u),Q(v)]=[T(u),T(v)]=[T(u),Q(v)]=0,  \label{ncehcomm}
\end{equation}%
where $Q$ is the transfer matrix (\ref{Qdef}). Moreover, the following
relation holds true%
\begin{eqnarray}
\boldsymbol{Q}_{r,r}(u)B(v)&=&\left( 1+\frac{u}{v}\right) \delta _{r,0}B(v)%
\boldsymbol{Q}_{0,0}(u)  \notag \\
&&+\boldsymbol{Q}_{r-1,r}(u)D(v)-\boldsymbol{Q}_{r,r+1}(u)A(v)+\boldsymbol{Q}%
_{r-1,r+1}(u)C(v)  \label{B&Q}
\end{eqnarray}
\end{crl}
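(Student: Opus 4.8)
The plan is to treat the three commutation relations in (\ref{ncehcomm}) by the standard transfer-matrix (quantum inverse scattering) argument, and to obtain (\ref{B&Q}) as a single component of the exchange relation (\ref{LQT}). Throughout I use the identity (\ref{Qdef2}), which exhibits $Q(x)=\sum_{r\ge0}z^{r}\boldsymbol{Q}_{r,r}(x)$ as the trace over the auxiliary space $\mathcal{M}$ of the monodromy matrix (\ref{momQ}), weighted by $z^{r}$ on the basis vector $v_{r}$; likewise $T=A+zD=\limfunc{tr}_{\mathbb{C}^{2}}(\limfunc{diag}(1,z)\,\boldsymbol{T})$ is the weighted trace of (\ref{momT}). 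Write $g$ for the common diagonal twist $v_{j}\mapsto z^{j}v_{j}$ on either auxiliary space, so that $g\otimes g$ weights $v_{a}\otimes v_{b}$ by $z^{a+b}$.

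First I would establish $[Q(u),Q(v)]=0$. Since the weights (\ref{Smatrixdef}) are nonzero only when $a+b=c+d$, the matrix $\mathcal{R}_{12}(u/v)$ conserves the total occupation in the two auxiliary copies of $\mathcal{M}$ and hence commutes with $g\otimes g$; moreover it is triangular with nonzero diagonal entries $u^{a}$ in each finite block of fixed total occupation, so it is invertible. Solving (\ref{RQQ}) for $\boldsymbol{Q}_{2}(v)\boldsymbol{Q}_{1}(u)$, applying the doubly weighted trace $\limfunc{tr}_{12}((g\otimes g)\,\cdot\,)$, and using cyclicity together with $[\mathcal{R}_{12},g\otimes g]=0$ to cancel $\mathcal{R}_{12}$ against $\mathcal{R}_{12}^{-1}$, one gets $Q(v)Q(u)=Q(u)Q(v)$. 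The relation $[T(u),T(v)]=0$ follows identically from the Bogoliubov--Izergin--Kitanine equation with $R$-matrix (\ref{RTT}): that $R$ mixes only the states $01\leftrightarrow10$, so it commutes with $\limfunc{diag}(1,z)^{\otimes2}$, and it is invertible by a direct determinant computation.

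The relation $[T(u),Q(v)]=0$ is the delicate one and is where I expect the main obstacle. The quick route above fails because the intertwiner $L'$ of (\ref{LQT})--(\ref{L}), though it still conserves total occupation and commutes with the twist, is \emph{not} invertible: on every sector of total occupation $\ge2$ the selection rule $d=a+b-c$ forces its block to have rank one, so $L'$ cannot be cancelled against an inverse. The traced identity $\limfunc{tr}_{12}((g\otimes g)\,L'_{12}\,[\boldsymbol{T}_{1},\boldsymbol{Q}_{2}])=0$ is therefore too weak by itself. Instead I would derive $[T(u),Q(v)]=0$ from the \emph{componentwise} exchange relations encoded in (\ref{LQT}) --- the relation (\ref{B&Q}) together with its $A$-, $C$- and $D$-analogues --- by forming $T=A+zD$ and $Q=\sum_{r\ge0}z^{r}\boldsymbol{Q}_{r,r}$ and summing against the weight $z^{r}$. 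I expect the shifted $\mathcal{M}$-indices of the $\boldsymbol{Q}$'s to make the bulk terms telescope, and the isolated $\delta_{r,0}$ boundary contributions (such as the $(1+u/v)B\boldsymbol{Q}_{0,0}$ term) to cancel between the paired relations, leaving $T(v)Q(u)=Q(u)T(v)$. Verifying that no boundary term survives is the technical heart.

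To derive (\ref{B&Q}) itself I would write (\ref{LQT}) in components by sandwiching it between auxiliary basis vectors $v_{a}\otimes v_{r}$ of $\mathbb{C}^{2}\otimes\mathcal{M}$ and reading off the explicit action dictated by (\ref{L}): $L'(v_{0}\otimes v_{b})=v_{0}\otimes v_{b}+v_{1}\otimes v_{b-1}$ and $L'(v_{1}\otimes v_{b})=u(v_{0}\otimes v_{b+1}+v_{1}\otimes v_{b})$ for $b\ge1$, with the modified corner $L'(v_{0}\otimes v_{0})=(1+u)\,v_{0}\otimes v_{0}$. Each side of (\ref{LQT}) then contributes only a few terms, in which the entries $A=\boldsymbol{T}_{0,0}$, $B=\boldsymbol{T}_{0,1}$, $C=\boldsymbol{T}_{1,0}$, $D=\boldsymbol{T}_{1,1}$ appear paired with the $\boldsymbol{Q}_{s,r}$ whose $\mathcal{M}$-indices are shifted according to $d=a+b-c$; fixing the external indices that single out $\boldsymbol{T}_{0,1}=B$ and rearranging across the equality produces exactly $\boldsymbol{Q}_{r-1,r}(u)D(v)$, $-\boldsymbol{Q}_{r,r+1}(u)A(v)$ and $\boldsymbol{Q}_{r-1,r+1}(u)C(v)$ opposite the isolated $\boldsymbol{Q}_{r,r}(u)B(v)$, the relative signs coming from moving terms to one side, while the corner value $1+u$ evaluated at the spectral ratio supplies the boundary term $(1+u/v)\delta_{r,0}B(v)\boldsymbol{Q}_{0,0}(u)$. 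The only genuine care needed is the separation of this corner from the generic selection rule and the consistent tracking of which spectral parameter enters $L'$ --- the same bookkeeping that governs the cancellations in the commutativity argument above.
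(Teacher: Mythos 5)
Your handling of the first two relations in (\ref{ncehcomm}) is correct and matches the argument the paper has in mind (the paper itself offers no details, deferring to \cite{Korff}): both $\mathcal{R}$ in (\ref{Smatrixdef}) and $R$ in (\ref{RTT}) preserve the total occupation of the two auxiliary spaces, hence commute with the diagonal $z$-twist, and both are invertible, so the twisted-trace argument applied to (\ref{RQQ}) and to the $RTT$-relation yields $[Q(u),Q(v)]=[T(u),T(v)]=0$. Your observation that this argument cannot be repeated for $[T(u),Q(v)]$ is also correct and is the strongest point of the proposal: on every sector of total occupation $N\geq 2$ the intertwiner (\ref{L}) reduces, in the basis $\{v_{0}\otimes v_{N},\,v_{1}\otimes v_{N-1}\}$, to the rank-one block $\left(\begin{smallmatrix}1&u\\ 1&u\end{smallmatrix}\right)$, so $L'$ has no inverse and (\ref{LQT}) cannot simply be cancelled inside a trace; one is indeed forced into the componentwise exchange relations.

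The genuine gap is in how you propose to extract (\ref{B&Q}). Sandwiching (\ref{LQT}) between a fixed auxiliary in-state $v_{a}\otimes v_{r}$ and out-state $v_{\alpha}\otimes v_{\beta}$ produces an identity in which the products having $\boldsymbol{Q}(u)$ on the left involve only the two entries $T_{\alpha,c}(v)$, $c=0,1$ --- a single row of the $2\times 2$ matrix (\ref{momT}), i.e.\ either $\{A,B\}$ or $\{C,D\}$ --- while the products having a $\boldsymbol{T}$-entry on the left involve only the two entries of a single column. But (\ref{B&Q}) pairs $\boldsymbol{Q}(u)$ on the left with \emph{all four} entries $A,B,C,D$, so no single choice of external indices can produce it; the step ``fixing the external indices that single out $T_{0,1}=B$'' fails as described. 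What is actually required is to combine \emph{two} components --- the one carrying the $\boldsymbol{Q}_{r,r}(u)B(v)$ and $\boldsymbol{Q}_{r,r+1}(u)A(v)$ products with the one carrying the $\boldsymbol{Q}_{r-1,r}(u)D(v)$ and $\boldsymbol{Q}_{r-1,r+1}(u)C(v)$ products, the latter taken with the $\mathcal{M}$-index shifted by one --- chosen so that in the difference the unwanted left products of the form $D(v)\boldsymbol{Q}_{r-1,r}(u)$ cancel identically, leaving $B(v)$ as the only surviving $\boldsymbol{T}$-entry on that side, with the corner entry of $L'$ supplying the $\delta_{r,0}(1+u/v)$ term. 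This two-component cancellation, and the analogous bookkeeping you explicitly defer (``verifying that no boundary term survives'') in the summation argument for $[T(u),Q(v)]=0$, constitute the actual substance of the proof: as written, your proposal asserts both (\ref{B&Q}) and the third commutator without establishing either.
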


Note in particular that the first relation in (\ref{ncehcomm}) entails via (%
\ref{Z=QQQQ}) that the partition function (\ref{Z}) is symmetric in the
variables $x_{i}$ as claimed earlier.

\subsection{Affine plactic elementary and complete symmetric polynomials}

To keep this article self-contained and motivate the definition of the
affine plactic polynomials below, we review some basic facts about symmetric
functions; see \cite{MacDonald} for details. Let $\{y_{1},\ldots ,y_{\ell}\}
$ be a set of commuting variables for some finite $\ell >0$. Recall that the
ring of symmetric functions $\mathbb{C}[y_{1},\ldots ,y_{\ell }]^{\mathbb{S}%
_{\ell }}$ is generated by either the elementary or complete symmetric
functions denoted by the letters $e_{r}$ and $h_{r}$ with $r\in \mathbb{Z}%
_{\geq 0}$, respectively. Both sets of functions can be introduced via the
following generating functions,%
\begin{eqnarray}
\prod_{i=1}^{\ell }(1+y_{i}u) &=&\sum_{r\geq 0}e_{r}(y_{1},\ldots ,y_{\ell
})u^{r}  \label{E} \\
\prod_{i=1}^{\ell }(1-y_{i}u)^{-1} &=&\sum_{r\geq 0}h_{r}(y_{1},\ldots
,y_{\ell })u^{r},  \label{H}
\end{eqnarray}%
respectively. Note that the first sum is finite, i.e. $e_{r}(y_{1},\ldots
,y_{\ell })=0$ for $r>\ell $, while the second one is infinite. Explicitly,
the elementary and complete symmetric functions are given by the expressions%
\begin{eqnarray}
e_{r}(y_{1},\ldots ,y_{\ell }) &=&\sum_{1\leq i_{1}<\cdots <i_{r}\leq \ell
}y_{i_{1}}\cdots y_{i_{r}}\text{\quad }  \label{e} \\
h_{r}(y_{1},\ldots ,y_{\ell }) &=&\sum_{1\leq i_{1}\leq \cdots \leq
i_{r}\leq \ell }y_{i_{r}}\cdots y_{i_{1}}\;.  \label{h}
\end{eqnarray}%
From the generating functions it is immediate to deduce that the $e$'s and $%
h $'s satisfy recursion relations. Namely, one has the identities%
\begin{eqnarray}
e_{r}(y_{1},\ldots ,y_{\ell }) &=&e_{r}(y_{1},\ldots ,y_{\ell -1})+y_{\ell
}e_{r-1}(y_{1},\ldots ,y_{\ell -1}),  \label{erec} \\
h_{r}(y_{1},\ldots ,y_{\ell }) &=&h_{r}(y_{1},\ldots ,y_{\ell -1})+y_{\ell
}h_{r-1}(y_{1},\ldots ,y_{\ell })\;.  \label{hrec}
\end{eqnarray}%
As mentioned above one has $\mathbb{C}[y_{1},\ldots ,y_{\ell }]^{\mathbb{S}%
_{\ell }}\cong \mathbb{C}[e_{1},\ldots ,e_{\ell }]\cong \mathbb{C}%
[h_{1},\ldots ,h_{\ell }]$ and, thus, both sets of functions must be
polynomials of each other. In fact, multiplying both generating functions
(with $u$ replaced by $-u$ in the first one) yields the equations $%
\sum_{r=a+b}(-1)^{a}e_{a}h_{b}=0$ for all $r\geq 0$ which can be solved
either for the $e$'s or the $h$'s to yield the well-known Jacobi-Trudi
formulae%
\begin{eqnarray}
h_{r}(y_{1},\ldots ,y_{\ell }) &=&\det (e_{1-i+j}(y_{1},\ldots ,y_{\ell
}))_{1\leq i,j\leq r},  \label{JTh2e} \\
e_{r}(y_{1},\ldots ,y_{\ell }) &=&\det (h_{1-i+j}(y_{1},\ldots ,y_{\ell
}))_{1\leq i,j\leq r}\;.  \label{JTe2h}
\end{eqnarray}

We will now generalise these functions by replacing the commuting variables $%
\{y_{1},\ldots ,$ $y_{\ell}\}$ with the noncommutative variables $%
\{a_{1},\ldots ,a_{n}\}$, i.e. the generators of the affine plactic algebra.
We wish to identify the auxiliary matrix $T$ defined in Corollary \ref%
{integrability}, the trace of (\ref{momT}), and the transfer matrix $Q$
defined either through (\ref{Qdef}) or (\ref{Qdef2}) as noncommutative
analogues of the generating functions (\ref{E}) and (\ref{H}), respectively.
When trying to generalise (\ref{e}) and (\ref{h}) to a noncommutative
alphabet one needs to specify an ordering of the variables. The auxiliary
and transfer matrix prescribe such an ordering of the noncommutative
variables $\{a_{i}\}$ which turns out to be consistent and has the
additional desirable property that the resulting \emph{affine plactic
elementary} and \emph{complete symmetric polynomials} form a commutative
subalgebra due to the integrability condition (\ref{ncehcomm}). Because this
ordering is cyclic, it is easier to split the affine generator $a_{n}$ into
its constituents $z,\varphi _{1}^{\ast }$ and $\varphi _{n}$, compare with
Proposition \ref{faithfulness}, and write the variables in descending or
ascending order as we have already done in (\ref{momT}) and (\ref{momQ}).

\begin{prp}[generating functions]
The transfer matrix (\ref{Qdef}) can be interpreted as the generating
function%
\begin{equation}
Q(x_{i})=\sum_{r\geq 0}z^{r}\boldsymbol{Q}_{r,r}(x_{i})=\sum_{r\geq
0}x_{i}^{r}h_{r}(\mathcal{A})  \label{Qgenh}
\end{equation}%
of the \emph{affine plactic complete symmetric polynomials}%
\begin{equation}
h_{r}(\mathcal{A}):=\sum_{\varepsilon \vdash r}z^{\varepsilon _{0}}(\varphi
_{1}^{\ast })^{\varepsilon _{0}}a_{1}^{\varepsilon _{1}}\cdots
a_{n-1}^{\varepsilon _{n-1}}\varphi _{n}^{\varepsilon _{0}}\;.  \label{nch}
\end{equation}%
The auxiliary matrix defined in Corollary \ref{integrability},
\begin{equation}
T(x_{i})=A(x_{i})+zD(x_{i})=\sum_{r\geq 0}x_{i}^{r}e_{r}(\mathcal{A}),
\label{T}
\end{equation}%
on the other hand, yields the generating function of the \emph{affine
plactic elementary symmetric polynomials},%
\begin{equation}
e_{r}(\mathcal{A}):=\sum_{\substack{ \varepsilon \vdash r  \\ \varepsilon
_{i}=0,1}}z^{\varepsilon _{n}}\varphi _{n}^{\varepsilon
_{n}}a_{n-1}^{\varepsilon _{n-1}}\cdots a_{1}^{\varepsilon _{1}}(\varphi
_{1}^{\ast })^{\varepsilon _{n}}  \label{nce}
\end{equation}
\end{prp}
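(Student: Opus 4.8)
The plan is to observe that both identities are, at heart, a reindexing of the already established monodromy-matrix expansions, sorting the terms according to their total degree in the spectral variable. Since the operators $\boldsymbol{Q}(u)$ and $\boldsymbol{T}(u)$ are well-defined (only finitely many terms act nontrivially on any fixed vector), all manipulations below are legitimate and no convergence issue arises. No structural input beyond the preceding Lemma and the definitions of the monodromy matrices will be required.

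For the complete symmetric polynomials I would start from the Lemma, i.e. the first equality in (\ref{Qdef2}), $Q(x_i)=\sum_{r\geq 0}z^r\boldsymbol{Q}_{r,r}(x_i)$, and insert the definition (\ref{momQ}) of $\boldsymbol{Q}_{r,r}$. This gives a double sum over $r\geq 0$ and over all compositions $\varepsilon=(\varepsilon_1,\ldots,\varepsilon_{n-1})$, whose generic term carries the weight $z^r x_i^{|\varepsilon|+r}$ and the ordered monomial $(\varphi_1^\ast)^r a_1^{\varepsilon_1}\cdots a_{n-1}^{\varepsilon_{n-1}}\varphi_n^r$. The key step is to rename $r=:\varepsilon_0$ and to collect terms according to the total power $d=\varepsilon_0+\varepsilon_1+\cdots+\varepsilon_{n-1}$ of $x_i$. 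The coefficient of $x_i^d$ is then precisely the sum over all compositions $(\varepsilon_0,\ldots,\varepsilon_{n-1})\vdash d$ of $z^{\varepsilon_0}(\varphi_1^\ast)^{\varepsilon_0}a_1^{\varepsilon_1}\cdots a_{n-1}^{\varepsilon_{n-1}}\varphi_n^{\varepsilon_0}$, which is exactly $h_d(\mathcal{A})$ as defined in (\ref{nch}). This establishes (\ref{Qgenh}).

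For the elementary symmetric polynomials I would read off from the matrix form (\ref{momT}) the entries $A(u)=\boldsymbol{T}_{0,0}(u)$ and $D(u)=\boldsymbol{T}_{1,1}(u)$, so that $A(u)=\sum_{\varepsilon_i\in\{0,1\}}u^{|\varepsilon|}a_{n-1}^{\varepsilon_{n-1}}\cdots a_1^{\varepsilon_1}$ and $D(u)=\sum_{\varepsilon_i\in\{0,1\}}u^{|\varepsilon|+1}\varphi_n\,a_{n-1}^{\varepsilon_{n-1}}\cdots a_1^{\varepsilon_1}\,\varphi_1^\ast$. Forming $T=A+zD$ and again collecting by the power $r$ of $u$, the contribution of $A$ supplies exactly the $\varepsilon_n=0$ part of $e_r(\mathcal{A})$ (compositions with $\varepsilon_1+\cdots+\varepsilon_{n-1}=r$), while the contribution of $zD$, after matching $|\varepsilon|+1=r$, supplies the $\varepsilon_n=1$ part (compositions with $\varepsilon_1+\cdots+\varepsilon_{n-1}=r-1$ carrying the factor $z\,\varphi_n(\cdots)\varphi_1^\ast$). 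Summing the two pieces reproduces (\ref{nce}), hence (\ref{T}).

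The genuinely delicate point, and the one I would check most carefully, is the bookkeeping of the affine generator $a_n=a_0=z\varphi_1^\ast\varphi_n$ under the cyclic ordering: in both $h_r$ and $e_r$ this generator is split so that its factor $\varphi_1^\ast$ sits at the far left (before $a_1$) and $\varphi_n$ at the far right (after $a_{n-1}$), accompanied by a single power of $z$. One must verify that the placement of $(\varphi_1^\ast)^{\varepsilon_0}$ and $\varphi_n^{\varepsilon_0}$ produced by expanding $\boldsymbol{Q}_{r,r}$ and $A+zD$ matches this prescription exactly, and in particular that the exponent of $z$ equals the common exponent of $\varphi_1^\ast$ and $\varphi_n$. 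Everything else is mechanical regrouping of a convergent (termwise finite) double sum, so once the split affine generator is accounted for correctly the two asserted generating-function identities follow.
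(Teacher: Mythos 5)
Your proposal is correct, and it supplies exactly what the statement requires: the paper itself gives no proof of this proposition (like most results in Sections 2--3 it is quoted from \cite{Korff}), and the content is precisely the mechanical reindexing you carry out -- substituting (\ref{momQ}) into the Lemma's expansion $Q(x_i)=\sum_{r\geq 0}z^r\boldsymbol{Q}_{r,r}(x_i)$, relabelling $r=\varepsilon_0$, and collecting powers of $x_i$ to recover (\ref{nch}); then reading off the diagonal entries $A=\boldsymbol{T}_{0,0}$, $D=\boldsymbol{T}_{1,1}$ of (\ref{momT}) and matching the $\varepsilon_n=0$ and $\varepsilon_n=1$ parts of (\ref{nce}) to $A$ and $zD$ respectively. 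One small remark: the ``delicate point'' you flag about the split affine generator needs no separate verification here, since (\ref{nch}) and (\ref{nce}) are themselves \emph{definitions} written in the split form $z^{\varepsilon_0}(\varphi_1^\ast)^{\varepsilon_0}\cdots\varphi_n^{\varepsilon_0}$; the identification with genuine words in the alphabet $\mathcal{A}$ via $a_n=z\varphi_1^\ast\varphi_n$ and the relations (\ref{comm3}) is a separate issue (valid only for $r<n$), which the paper addresses afterwards in its discussion of cyclic ordering and is not part of this proposition.
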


\subsection{Combinatorial description of the Yang-Baxter algebra}

Note that when setting the quasi-periodicity parameter $z$ to zero, i.e.
enforcing open boundary conditions, one obtains the finite plactic
polynomials of \cite{FG}. For instance, we have that the matrix element $A$
in (\ref{momT}) is the generating function of the \emph{finite} plactic
elementary polynomials,%
\begin{equation}
z=0:\qquad A(u)=(1+a_{n-1}u)\cdots (1+a_{1}u)=\sum_{r\geq 0}u^{r}e_{r}(%
\mathcal{A}^{\prime })\;.  \label{A}
\end{equation}

\begin{lmm}
The action of the operator $e_{r}(\mathcal{A}^{\prime })=e_{r}(a_{1},\ldots
,a_{n-1})$ on a partition $\mu \in \mathcal{P}_{\leq n-1,k}$ produces a sum
over all partitions $\lambda $ in the $n\times k$ bounding box such that the
skew diagram $\lambda /\mu $ is a vertical strip of length $r$ and $\lambda
_{1}=\mu _{1}$,
\begin{equation}
e_{r}(\mathcal{A}^{\prime })\mu =\sum_{\lambda \in \mathcal{P}_{\leq
n,k},\;\lambda /\mu =(1^{r})}\delta _{\lambda _{1},\mu _{1}}\lambda ^{\prime
}\;.  \label{nce0action}
\end{equation}
Here $\lambda ^{\prime }\in \mathcal{P}_{\leq n-1,k}$ is the partition
obtained from $\lambda $ by deleting all columns of height $n$.
\end{lmm}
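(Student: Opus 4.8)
The plan is to expand $e_r(\mathcal{A}')$ into plactic monomials and then read off the action of each monomial through the explicit representation \eqref{placticrep}. By \eqref{A} the generating function $A(u)=(1+a_{n-1}u)\cdots(1+a_{1}u)$ gives $e_r(\mathcal{A}')=\sum_{n-1\ge i_1>\cdots>i_r\ge 1}a_{i_1}\cdots a_{i_r}$, a sum of square-free monomials with strictly decreasing indices. Under \eqref{placticrep} each letter acts as $a_j=\varphi_{j+1}^{\ast}\varphi_j$, which by \eqref{phi*}--\eqref{phi} lowers $m_j$ by one and raises $m_{j+1}$ by one, and annihilates the vector when $m_j=0$. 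Translating through \eqref{weight2part}, where $m_j=\lambda_j-\lambda_{j+1}$, a single letter $a_j$ therefore adds one box to row $j+1$ of the Young diagram drawn inside the $n\times k$ box. Since the indices in each monomial are strictly decreasing, the boxes produced by $a_{i_1}\cdots a_{i_r}$ land in the $r$ distinct rows $i_r+1<\cdots<i_1+1$, all of them at least $2$; hence every nonzero term is a partition $\lambda$ with $\lambda/\mu$ a size-$r$ vertical strip and $\lambda_1=\mu_1$.

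The heart of the argument is to match the annihilation conditions $m_{i_s}>0$ to the requirement that the resulting $\lambda$ be a genuine (weakly decreasing) partition. I would apply the letters in the order in which they act on $\mu$, i.e. starting from the rightmost factor, so that boxes are inserted into the rows $i_r+1<i_{r-1}+1<\cdots$ from the top down. When the letter adding a box to row $j$ acts, the rows above $j$ that belong to the strip have already been filled while row $j$ still carries its $\mu$-value; a short case distinction on whether row $j-1$ also receives a box then shows that the local condition $m_{j-1}>0$ holds exactly when $\lambda_{j-1}\ge\lambda_j$ in the final diagram. Thus $a_{i_1}\cdots a_{i_r}$ is nonzero precisely when the final $\lambda$ is weakly decreasing, i.e. when $\lambda/\mu$ is an honest vertical strip, and in that case it returns that single $\lambda$. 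I expect this bookkeeping---in particular the fact that a box already placed in row $j-1$ can render the subsequent insertion in row $j$ automatically legal---to be the main obstacle, and it is exactly the point where the plactic ordering of the letters is used.

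It remains to assemble the sum and to account for the column deletion. The monomials of $e_r(\mathcal{A}')$ are indexed by the $r$-subsets $\{i_1>\cdots>i_r\}\subseteq\{1,\dots,n-1\}$, equivalently by the sets of rows $\{i_r+1,\dots,i_1+1\}$ that receive a box; distinct monomials fill distinct rows and hence yield distinct diagrams, so the nonzero contributions are in bijection with the vertical strips $\lambda/\mu=(1^r)$ satisfying $\lambda_1=\mu_1$. Finally, the operators act on weight vectors, so the output must be recorded as a dominant weight, i.e. as a partition of height at most $n-1$: by \eqref{weight2part} the Dynkin labels $m_1,\dots,m_{n-1}$ are the successive differences $\lambda_i-\lambda_{i+1}$, and these differences are unchanged when one deletes from $\lambda$ all columns of height $n$, since such a deletion lowers the first $n$ parts of $\lambda$ by the same amount. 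Hence the weight produced is the one attached to $\lambda'$, the diagram obtained from $\lambda$ by deleting its height-$n$ columns, which is precisely \eqref{nce0action}. The smallest case $n=2$ with $\mu=(\mu_1)$ and $\lambda=(\mu_1,1)\mapsto\lambda'=(\mu_1-1)$ serves as a sanity check that the column deletion is forced by this weight identification.
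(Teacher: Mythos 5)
Your proof is correct, and it takes a genuinely different route from the paper's. The paper's own argument is graphical and very brief: it reads the statement off the Boltzmann-weight picture of the auxiliary matrix $T$ (Figure \ref{fig:Tweights}), observing that at $z=0$ no outer horizontal edge may be occupied (which forces $\lambda_1=\mu_1$) and that the local weights let each occupation number change by at most one (which forces a strip). You instead work purely algebraically: you expand $e_r(\mathcal{A}')$ through the ordered product \eqref{A} into the $\binom{n-1}{r}$ strictly decreasing square-free monomials $a_{i_1}\cdots a_{i_r}$, let each act through the phase-algebra representation \eqref{placticrep}, and match the annihilation conditions $m_j>0$, step by step, against the weak-decrease inequalities of the final diagram. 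Your route buys several things: it is self-contained (no appeal to the figures); it makes explicit why the decreasing ordering of the letters matters (boxes are inserted top-down, so a box already placed in row $j-1$ legalises the subsequent insertion in row $j$ exactly when the final shape is a partition); it gives the multiplicity-one statement directly, via the bijection between monomials and vertical strips with $\lambda_1=\mu_1$; and it accounts explicitly for the column-deletion step $\lambda\mapsto\lambda'$, which the printed proof leaves implicit. It also settles a slip in the paper: the printed proof ends by asserting that $\lambda/\mu$ is a \emph{horizontal} $r$-strip, which is evidently a typo (that is the conclusion of the companion lemma for $h_r(\mathcal{A}')$); your derivation confirms the vertical-strip statement \eqref{nce0action} as claimed. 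What the paper's route buys in exchange is brevity and the non-intersecting-walker interpretation that runs through the whole article.
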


\begin{proof}
Using the graphical depiction of the Boltzamnn weights in Figure \ref%
{fig:Tweights} it follows that for $z=0$ only row configurations are allowed
which do not have an occupied outer horizontal edge. Hence, $m_{1}(\mu )\geq
m_{1}(\nu )$ which entails that we must have $\mu _{1}=\nu _{1}$ according
to (\ref{weight2part}). Moreover, we can deduce from the Boltzmann weights
that $m_{i}(\nu )-m_{i}(\mu )=0$ or 1 for $i>1$, hence $\lambda /\mu $ must
be a horizontal $r$-strip.
\end{proof}

In order to describe the action of the remaining matrix elements in (\ref%
{momT}) note that in terms of partitions the map $\varphi _{1}^{\ast }$ adds
a column of height one and increases the width of the bounding box, the
level $k$, by one. The map $\varphi _{n}$ simply decreases the width of the
bounding box if $\lambda _{1}<k$, otherwise it sends $\lambda $ to zero.
Thus, the following formulae%
\begin{equation}
B(u)=u~A(u)\varphi _{1}^{\ast },\qquad C(u)=\varphi _{n}A(u),\qquad
D(u)=u\varphi _{n}A(u)\varphi _{1}^{\ast }~,  \label{BCD}
\end{equation}%
which can be easily checked from (\ref{momT}), allow one to perform
computations with the Yang-Baxter algebra purely in terms of Young diagrams
and their bounding boxes.

\begin{exa}
\textrm{Let $n=5$ and choose $\mu =(2,2,1,0)$ with $\boldsymbol{m}%
=(0,1,1,0,1)$, that is $k=3$. Then we have for $r=3$ only a single term,%
\begin{equation*}
\Yvcentermath1e_{3}(\mathcal{A}^{\prime })~\yng(2,2,1)=\Yvcentermath1\yng%
(2,2,2,1,1)\equiv \Yvcentermath1\yng(1,1,1)\;.
\end{equation*}%
In contrast the action of the \emph{affine} plactic polynomial yields the sum%
\begin{equation*}
\Yvcentermath1e_{3}(\mathcal{A})~\yng(2,2,1)=\Yvcentermath1\yng(1,1,1)~+z~%
\yng(3,3,2)~+z~\yng(3,3,1,1)~+z~\yng(3,2,2,1)~+z~\yng(2,1)\;.
\end{equation*}%
By converting each partition in the above sum to the corresponding
compositions one verifies that each of the last four terms on the right hand
side is generated by a monomial in $\mathcal{A}$ which contains the affine
generator $a_{n}$. For instance, for the second term one finds%
\begin{equation*}
\Yvcentermath1z~\underset{\boldsymbol{m}=(0,1,2,0,0)}{\yng(3,3,2)}=%
\Yvcentermath1z\varphi _{5}a_{2}a_{1}\varphi _{1}^{\ast }~\underset{%
\boldsymbol{m}=(0,1,1,0,1)}{\yng(2,2,1)}=\Yvcentermath1a_{2}a_{1}a_{5}~\yng%
(2,2,1)\;,
\end{equation*}%
where we have used (\ref{comm3}) to rewrite the respective term in $e_{3}(%
\mathcal{A})$ as word in the affine plactic generators. This is always
possible for $r<n$. }
\end{exa}

In light of (\ref{A}) and the last example the definition of the auxiliary
matrix (\ref{T}) can be seen as the noncommutative analogue of (\ref{erec}),
since after expanding with respect to the variable $u$ one arrives at the
identity%
\begin{equation}
e_{r}(\mathcal{A})=e_{r}(\mathcal{A}^{\prime })+z\varphi _{n}e_{r-1}(%
\mathcal{A}^{\prime })\varphi _{1}^{\ast }\;.  \label{ncerec}
\end{equation}%
\textbf{Cylic ordering}. As already alluded to in the last example
comparison with the commutative case (\ref{erec}) is made easier by
realising that for $r<n$ the terms in the second summand can always be
rearranged in cyclic order. To expose the general structure more clearly
consider another example provided by the row configuration depicted in
Figure \ref{fig:Trowconfig} for $n=6$ and $k=7$. The latter corresponds via (%
\ref{nce}) to the monomial%
\begin{equation*}
z\varphi _{6}a_{5}a_{4}a_{2}a_{1}\varphi _{1}^{\ast }=za_{2}a_{1}\varphi
_{6}a_{5}a_{4}\varphi _{1}^{\ast }=a_{2}a_{1}a_{6}a_{5}a_{4},
\end{equation*}%
where we have used once more (\ref{comm3}). The general case is now clear:
monomials in the $a_{i}$ which do neither contain $a_{1}$ or $a_{n}$, or
only one of these generators, are always written in descending order from
left to right. If both, $a_{1}$ and $a_{n},$ occur in the same monomial
write the maximal string of form $a_{l}a_{l-1}\cdots a_{2}a_{1}$ to the left
of the remaining letters which should also be in descending order starting
with $a_{n}$ (although the indices might now \textquotedblleft
jump\textquotedblright by more than one). It follows from the definition (%
\ref{nce}) that for $r=n$ we have $e_{n}(\mathcal{A})=z\cdot 1$.\medskip

We now specialise to the finite plactic complete symmetric polynomials,%
\begin{equation}
z=0:\qquad Q(u)=\boldsymbol{Q}_{0,0}(u)=\sum_{r\geq 0}u^{r}h_{r}(\mathcal{A}%
^{\prime })\; .  \label{Q00}
\end{equation}%
From the definition (\ref{nch}) one easily computes the noncommutative
analogue of the recursion relation (\ref{hrec}),%
\begin{equation}
h_{r}(\mathcal{A})=h_{r}(\mathcal{A}^{\prime })+z\varphi _{1}^{\ast }h_{r-1}(%
\mathcal{A})\varphi _{n}\;.  \label{nchrec}
\end{equation}

\begin{lmm}
Let $r\leq k$ and $\mu \in \mathcal{P}_{\leq n-1,k}$ then%
\begin{equation}
h_{r}(\mathcal{A}^{\prime })\mu =\sum_{\lambda \in \mathcal{P}_{\leq
n,k},\;\lambda /\mu =(r)}\delta _{\lambda _{1},\mu _{1}}\lambda ^{\prime }\;.
\label{nch0action}
\end{equation}%
For $r>k$ we have $h_{r}(\mathcal{A}^{\prime })|_{\mathbb{C}P_{k}^{+}}=0$.
\end{lmm}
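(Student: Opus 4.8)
The plan is to evaluate $h_{r}(\mathcal{A}^{\prime})$ on a basis vector $\mu\in\mathcal{P}_{\leq n-1,k}$ monomial by monomial, using the faithful representation \eqref{placticrep} together with the weight-to-partition dictionary \eqref{weight2part}, and then to recognise the surviving terms as horizontal strips. Putting $z=0$ in \eqref{nch} leaves
\[
h_{r}(\mathcal{A}^{\prime})=\sum_{\varepsilon_{1}+\cdots+\varepsilon_{n-1}=r}a_{1}^{\varepsilon_{1}}a_{2}^{\varepsilon_{2}}\cdots a_{n-1}^{\varepsilon_{n-1}},
\]
a sum over weak compositions of $r$. Under \eqref{placticrep} the letter $a_{j}=\varphi_{j+1}^{\ast}\varphi_{j}$ lowers $m_{j}$ by one and raises $m_{j+1}$ by one, which by \eqref{weight2part} is exactly the operation of appending one box to the $(j+1)$-st row of the Young diagram; it annihilates the vector unless $m_{j}>0$, i.e. unless the $j$-th row is currently strictly longer than the $(j+1)$-st. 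Each $\varphi,\varphi^{\ast}$ sends a basis vector to a single basis vector or to zero, so every monomial maps $\mu$ to one basis vector (or $0$) with coefficient one.

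First I would exploit the ascending order of the word. Since operators act from the right, the factors are applied in the order $a_{n-1}^{\varepsilon_{n-1}},a_{n-2}^{\varepsilon_{n-2}},\ldots,a_{1}^{\varepsilon_{1}}$, i.e. boxes are added to rows $n,n-1,\ldots,2$ in turn, and row $1$ is never touched, forcing $\lambda_{1}=\mu_{1}$. The decisive bookkeeping point is that $a_{j}$ alters only the $(j+1)$-st row, so at the instant $a_{j}^{\varepsilon_{j}}$ acts both the $j$-th and $(j+1)$-st rows still carry their original lengths $\mu_{j},\mu_{j+1}$ (the $j$-th row is modified only later, by $a_{j-1}$). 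Hence the monomial survives precisely when $\mu_{j}\geq\mu_{j+1}+\varepsilon_{j}$ for every $j$, and in that case it produces the single shape $\lambda$ with $\lambda_{1}=\mu_{1}$ and $\lambda_{i}=\mu_{i}+\varepsilon_{i-1}$ for $i\geq2$.

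Next I would match these surviving terms with horizontal strips. The inequalities just found read $\mu_{i-1}\geq\lambda_{i}\geq\mu_{i}$, which is exactly the interlacing condition that $\lambda/\mu$ is a horizontal $r$-strip with $\lambda_{1}=\mu_{1}$; moreover $\varepsilon\mapsto\lambda$, equivalently $\varepsilon_{i-1}=\lambda_{i}-\mu_{i}$, is a bijection between surviving compositions and such $\lambda\in\mathcal{P}_{\leq n,k}$. As each contributes with coefficient one, summing over $\varepsilon$ yields \eqref{nch0action} after replacing the $n$-row diagram $\lambda$ by $\lambda^{\prime}\in\mathcal{P}_{\leq n-1,k}$ through deletion of the height-$n$ columns, exactly as in \eqref{nce0action}; one checks that this deletion is the partition-side shadow of the induced Dynkin shift $m_{n}^{\prime}=m_{n}+\varepsilon_{n-1}$, so the two descriptions agree.

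For the vanishing when $r>k$ I would simply add the surviving inequalities: $\varepsilon_{i-1}\leq\mu_{i-1}-\mu_{i}$ gives $r=\sum_{i\geq2}\varepsilon_{i-1}\leq\sum_{i\geq2}(\mu_{i-1}-\mu_{i})=\mu_{1}\leq k$, so no composition survives once $r>k$ (indeed once $r>\mu_{1}$), whence $h_{r}(\mathcal{A}^{\prime})|_{\mathbb{C}P_{k}^{+}}=0$. The one genuinely delicate step is the ordering argument of the second paragraph: it is precisely the ascending arrangement of the letters in \eqref{nch} that decouples the row-length constraints and rules out multiplicities, in contrast to the descending, multiplicity-free word defining $e_{r}(\mathcal{A}^{\prime})$. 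As a check I would confirm that reading the allowed row configurations of the Boltzmann weights \eqref{Boltzmannweight} at $z=0$, in the style of the proof of \eqref{nce0action}, reproduces the same horizontal-strip description.
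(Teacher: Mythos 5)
Your proof is correct, but it follows a genuinely different route from the paper's. The paper argues graphically: at $z=0$ the Boltzmann weights of Figure \ref{fig:boltzmann} forbid occupied outer edges, a walker cannot keep propagating horizontally but several may occupy the same horizontal edge, and from this picture the paper reads off directly that $\mu_1=\nu_1$, that $\lambda/\mu$ must be a horizontal strip, and that the action vanishes once the number of occupied horizontal edges exceeds the number of incoming walkers (giving the $r>k$ statement). You instead work purely algebraically: you expand $h_r(\mathcal{A}')$ at $z=0$ into the monomials $a_1^{\varepsilon_1}\cdots a_{n-1}^{\varepsilon_{n-1}}$, use the faithful representation \eqref{placticrep}, and extract the survival conditions $\mu_j\geq\mu_{j+1}+\varepsilon_j$ from the ordering argument --- the key observation that, acting right-to-left, rows $j$ and $j+1$ are both still at their original lengths when the block $a_j^{\varepsilon_j}$ is applied, so the constraints decouple. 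The two arguments encode the same combinatorics (the allowed row configurations of $Q$ at $z=0$ are in bijection with your surviving monomials, the walkers on the horizontal edges between columns $j$ and $j+1$ being your $\varepsilon_j$), and indeed your closing consistency check is essentially the paper's entire proof. What your version buys is rigour and self-containedness: the paper's proof is a sketch leaning on the figures, whereas your bookkeeping makes the multiplicity-one claim and the bijection $\varepsilon_{i-1}=\lambda_i-\mu_i$ between surviving compositions and horizontal strips explicit, handles the height-$n$ column deletion ($m_n\mapsto m_n+\varepsilon_{n-1}$) cleanly, and yields the sharper vanishing bound $r\leq\mu_1$ rather than merely $r\leq k$.
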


\begin{proof}
The assertion follows from a similar line of argument as before, but this
time one uses the Boltzmann weights depicted in Figure \ref{fig:boltzmann}.
Since $z=0$ row configurations with outer edges are prohibited, whence $%
m_{1}(\mu )\geq m_{1}(\nu )$. In contrast to the previous case (\ref%
{nce0action}) a friendly walker now cannot propagate horizontally, however
several are allowed at the same time on the horizontal edges. Thus, we
obtain a horizontal instead of a vertical strip. The last identity is also
clear from the graphical depiction of allowed row configurations: the number
of occupied horizontal edges cannot exceed the number of incoming walkers.
\end{proof}

Note that the affine plactic complete symmetric polynomials can only be
rewritten in (reverse) cyclic order for $r<n$ using the same commutation
relations of the phase algebra as before. For $r>n$ the cyclic ordering
ceases to be well-defined and one has to resort to (\ref{nch}).

Finally, we generalise the last identities from the commutative case, the
Jacobi-Trudi formulae (\ref{JTh2e}) and (\ref{JTe2h}), which are subject of
the next proposition \cite{Korff}.

\begin{prp}[operator functional equation]
The generating functions (\ref{Qgenh}) and (\ref{T}) satisfy the operator
functional relation,%
\begin{equation}
T(-u)Q(u)=1+z(-1)^{n}\sum_{k\geq 0}u^{k+n}h_{k}(\mathcal{A})\pi _{k}\;,
\label{TQ}
\end{equation}%
where $\pi _{k}:\mathbb{C}P^{+}\rightarrow \mathbb{C}P_{k}^{+}$ is the
projector onto the subspace spanned by the weights at level $k$. In
particular, for $r<n+k$ the familiar determinant relations from the
commutative case also hold for the noncommutative elementary and complete
symmetric polynomials,%
\begin{equation}
h_{r}(\mathcal{A})=\det (e_{1-i+j}(\mathcal{A}))_{1\leq i,j\leq r},\qquad
e_{r}(\mathcal{A})=\det (h_{1-i+j}(\mathcal{A}))_{1\leq i,j\leq r}\;,
\label{ncJT}
\end{equation}%
where the determinants are well defined due to (\ref{ncehcomm}).
\end{prp}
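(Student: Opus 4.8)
The plan is to establish the operator identity (\ref{TQ}) by extracting powers of $u$ and reducing everything to the finite ($z=0$) plactic algebra. Writing $T(-u)=\sum_{a\geq0}(-u)^{a}e_{a}(\mathcal{A})$ and $Q(u)=\sum_{b\geq0}u^{b}h_{b}(\mathcal{A})$, the relation (\ref{TQ}) is equivalent to the coefficientwise identities
\[
\sum_{a+b=r}(-1)^{a}e_{a}(\mathcal{A})\,h_{b}(\mathcal{A})=\delta_{r,0}+(-1)^{n}z\,h_{r-n}(\mathcal{A})\,\pi_{r-n}\qquad(r\geq0),
\]
with the last term understood to be absent unless $r\geq n$. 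The anchor is the finite Newton identity $\sum_{a+b=r}(-1)^{a}e_{a}(\mathcal{A}^{\prime})h_{b}(\mathcal{A}^{\prime})=\delta_{r,0}$, i.e. $A(-u)Q_{0,0}(u)=Q_{0,0}(u)A(-u)=1$ in the notation of (\ref{A}) and (\ref{Q00}); this is the plactic analogue of the commutative $e$--$h$ relation and is exactly the Fomin--Greene calculus \cite{FG}, valid because the finite $e_{a}(\mathcal{A}^{\prime})$ and $h_{b}(\mathcal{A}^{\prime})$ pairwise commute and obey the same relations as ordinary symmetric functions. (An alternative route sums the mixed commutation relation (\ref{B&Q}) over $r$, but I find the reduction to the finite case more transparent.)

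First I would rewrite the generating functions recursively. From (\ref{ncerec}) and (\ref{nchrec}) one gets $T(-u)=\bar{A}-zu\,\varphi_{n}\bar{A}\varphi_{1}^{\ast}$ together with the fixed-point equation $Q(u)=G+zu\,\varphi_{1}^{\ast}Q(u)\varphi_{n}$, where $\bar{A}:=A(-u)$ and $G:=Q_{0,0}(u)$; iterating the latter reproduces (\ref{Qdef2}), $Q(u)=\sum_{r\geq0}(zu)^{r}(\varphi_{1}^{\ast})^{r}G\varphi_{n}^{r}$. Multiplying out, using $\bar{A}G=1$ to evaluate the $r=0$ term, and using that $\varphi_{n}$ commutes with $(\varphi_{1}^{\ast})^{r}$ by (\ref{comm3}), the product telescopes into
\[
T(-u)Q(u)=1+\sum_{r\geq1}(zu)^{r}\bigl[\bar{A}(\varphi_{1}^{\ast})^{r}G,\ \varphi_{n}\bigr]\varphi_{n}^{r-1}.
\]
Thus the whole deviation from $1$ is governed by the single commutator $[\bar{A}(\varphi_{1}^{\ast})^{r}G,\varphi_{n}]$, which I would expand by the Leibniz rule into $[\bar{A},\varphi_{n}]$ and $[G,\varphi_{n}]$ contributions.

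The heart of the argument, and the step I expect to be the main obstacle, is the evaluation of these commutators. Using $\varphi_{n}a_{n-1}=\varphi_{n}\varphi_{n}^{\ast}\varphi_{n-1}=\varphi_{n-1}$ and $a_{n-1}\varphi_{n}=\varphi_{n}^{\ast}\varphi_{n}\varphi_{n-1}=(1-P_{n})\varphi_{n-1}$, where $P_{i}:=1-\varphi_{i}^{\ast}\varphi_{i}$ is the projector onto $m_{i}=0$ (cf. (\ref{Npi})), one finds $[\varphi_{n},a_{n-1}]=\varphi_{n-1}P_{n}$, and similarly on the $\varphi_{1}^{\ast}$ side. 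Hence every commutator is supported on the boundary projectors $P_{1},P_{n}$; away from the boundary the contributions cancel in pairs exactly as in the commutative Newton identity. What survives is a cascade in which commuting $\varphi_{n}$ successively past the generators $a_{n-1},a_{n-2},\ldots,a_{1}$ walks once around the affine Dynkin diagram. After $n$ such steps the cascade closes on itself, and here I would invoke $e_{n}(\mathcal{A})=z\cdot1$ (established just before (\ref{Q00})) to produce the prefactor $(-1)^{n}z$ and the degree shift by $n$; because $T$ and $Q$ preserve the level $k=\sum_{i}m_{i}$, the residual boundary terms are confined to a single level and assemble into $\pi_{k}$, leaving precisely $(-1)^{n}z\,h_{r-n}(\mathcal{A})\pi_{r-n}$. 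Carrying this cascade through in closed form, most cleanly by an induction on $r$ restricted to $\mathbb{C}P_{k}^{+}$, is where the real work lies.

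Finally I would deduce the Jacobi--Trudi formulae (\ref{ncJT}) as a corollary. Since the level is preserved, $\pi_{j}|_{\mathbb{C}P_{k}^{+}}=\delta_{j,k}$, so on $\mathbb{C}P_{k}^{+}$ the correction term in (\ref{TQ}) first contributes at order $u^{k+n}$; consequently $\sum_{a+b=r}(-1)^{a}e_{a}(\mathcal{A})h_{b}(\mathcal{A})=\delta_{r,0}$ for every $r<n+k$. This is precisely the triangular system solved by Cramer's rule in the commutative setting, and because (\ref{ncehcomm}) forces the families $\{e_{a}(\mathcal{A})\}$ and $\{h_{b}(\mathcal{A})\}$ to be pairwise commutative, the same inversion is legitimate here and yields the stated determinants, which are well defined for exactly this reason.
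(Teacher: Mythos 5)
Your reduction is sound as far as it goes, and every step you actually carry out can be verified: the recursive forms $T(-u)=\bar{A}-zu\,\varphi_{n}\bar{A}\varphi_{1}^{\ast}$ and $Q(u)=\sum_{r\geq0}(zu)^{r}(\varphi_{1}^{\ast})^{r}G\varphi_{n}^{r}$ follow from (\ref{ncerec}), (\ref{nchrec}) and (\ref{Qdef2}); the anchor $\bar{A}G=G\bar{A}=1$ is in fact immediate even without invoking Fomin--Greene, since by (\ref{A}) and (\ref{nch}) at $z=0$ one has $\bar{A}=(1-a_{n-1}u)\cdots(1-a_{1}u)$ and $G=(1-a_{1}u)^{-1}\cdots(1-a_{n-1}u)^{-1}$; and your telescoped identity $T(-u)Q(u)=1+\sum_{r\geq1}(zu)^{r}\bigl[\bar{A}(\varphi_{1}^{\ast})^{r}G,\varphi_{n}\bigr]\varphi_{n}^{r-1}$ checks out. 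But the proposal stops exactly where the proposition lives. Everything that distinguishes (\ref{TQ}) from a triviality --- the sign $(-1)^{n}$, the degree shift by $u^{n}$, and above all the fact that on $\mathbb{C}P_{k}^{+}$ the entire correction collapses to the single term $z(-1)^{n}u^{n+k}h_{k}(\mathcal{A})$, which is precisely what the projector $\pi_{k}$ encodes --- has to come out of the commutators $[\bar{A}(\varphi_{1}^{\ast})^{r}G,\varphi_{n}]$, and you evaluate none of them; you say yourself that carrying the cascade through ``is where the real work lies.'' The qualitative remarks offered in its place do not close the gap: that $T$ and $Q$ preserve the level only says the answer is block diagonal, which is automatic, and does not explain why in the level-$k$ block the correction sits at order $u^{n+k}$ exactly, with coefficient exactly $h_{k}(\mathcal{A})$ --- i.e.\ why the projector cascade $P_{n},P_{n-1},\ldots,P_{1}$ survives only for configurations winding fully around the cylinder, which at level $k$ forces precisely $n+k$ horizontal steps. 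Concretely, after separating powers of $z$ (note that $h_{k}(\mathcal{A})=\sum_{j\geq0}z^{j}(\varphi_{1}^{\ast})^{j}h_{k-j}(\mathcal{A}^{\prime})\varphi_{n}^{j}$ carries its own $z$-dependence), your claim is equivalent to the family of operator identities $u^{r}[\bar{A}(\varphi_{1}^{\ast})^{r}G,\varphi_{n}]\varphi_{n}^{r-1}=(-1)^{n}\sum_{k\geq r-1}u^{n+k}(\varphi_{1}^{\ast})^{r-1}h_{k-r+1}(\mathcal{A}^{\prime})\varphi_{n}^{r-1}\pi_{k}$ for every $r\geq1$, and none of these is established.

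For comparison: the paper itself does not prove this proposition either --- it is quoted from \cite{Korff}, and the surrounding text merely illustrates the origin of the extra term at levels $k=0$ and $k=1$ (Figures \ref{fig:Tvacuum} and \ref{fig:TQk1}) as a path winding around the cylinder. So there is no in-paper argument against which to match your strategy; judged on its own it is a plausible plan whose central computation is missing. Your final paragraph --- restricting (\ref{TQ}) to $\mathbb{C}P_{k}^{+}$, noting the correction first enters at order $u^{n+k}$, and inverting the resulting truncated Newton relations via the commutativity (\ref{ncehcomm}) to get (\ref{ncJT}) --- is correct, but it is conditional on the identity you have not proved.
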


\begin{figure}[tbp]
\textrm{%
\begin{equation*}
\includegraphics[scale=0.35]{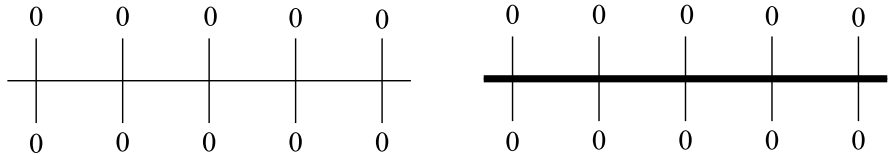}
\end{equation*}%
}
\caption{The two possible vacuum row configurations of the auxiliary matrix $%
T$. Note that the transfer matrix $Q$ only allows for the left one.}
\label{fig:Tvacuum}
\end{figure}
\begin{figure}[tbp]
\textrm{%
\begin{equation*}
\includegraphics[scale=0.6]{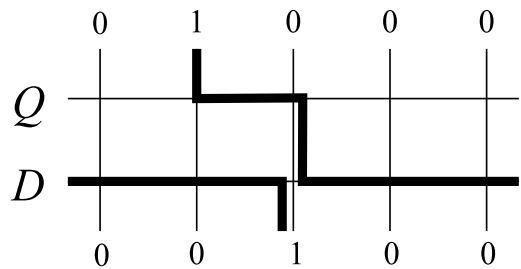}
\end{equation*}%
}
\caption{Depicted is the row configuration resulting from the successive
action of the transfer matrix $Q$ and the operator $D$ on a state of level $%
k=1$ and which is responsible for the additional term on the right hand side
of the functional equation (\protect\ref{TQ}).}
\label{fig:TQk1}
\end{figure}
Setting once more $z=0$ we recover the relation expected from the
commutative case. The additional terms have their origin in the
quasi-periodic boundary conditions and we explain their origin on an example
which will elucidate the general formula.

\begin{exa}
\textrm{Set $n>2$ and consider first the case of level $k=0$. There is only
one state, the \textquotedblleft pseudo-vacuum\textquotedblright\ $\hat{%
\Omega}=(0,0,\ldots ,0)$, and trivially we have $Q(u)\hat{\Omega}=\hat{\Omega%
}$. Acting with $T(-u)$ on $\hat{\Omega}$ we obtain two contributions shown
in Figure \ref{fig:Tvacuum}. Thus, the additional term in (\ref{TQ}) with $%
h_{0}(\mathcal{A})=1$ is due to a \textquotedblleft vacuum
mode\textquotedblright, a path which winds around the cylinder. }

\textrm{Let us now set $k=1$ and consider the state $\hat{\omega}%
_{i}=(0,\ldots ,0,\underset{i}{1},0,\ldots 0) $ with $i<n$. Then $Q(u)\hat{%
\omega}_{i}=\hat{\omega}_{i}+u\hat{\omega}_{i+1} $ as can be deduced
graphically from (\ref{Qdef}). In order to understand the origin of the
additional term in (\ref{TQ}) which includes the factor $z$, it suffices to
look at the contribution from the $D$ operator, since $T=A+zD$ and $A,D$ do
not contain $z$. The action of $D$ can be easily computed using Figure \ref%
{fig:Tweights} and remembering that only row configurations contribute where
the outer edges are occupied. One finds that all row configurations cancel
except for one term which is depicted in Figure \ref{fig:TQk1}: again the
additional term in (\ref{TQ}) corresponds to a diagram where one path winds
around the cylinder. }
\end{exa}

\subsection{Quantum Hamiltonian and particle picture}

So far we discussed a statistical mechanics model whose transfer matrix can
be identified with the generating function of the affine plactic complete
symmetric polynomials. Similarly, we can link the generating function of the
affine plactic elementary symmetric polynomials, the auxiliary matrix $T$,
to a physical model in quantum mechancis. Interpret the Dynkin labels $m_{i}$
as occupation numbers of a site $i$ of a circular lattice - the Dynkin
diagram of $\widehat{\mathfrak{su}}(n)$ in Figure \ref{fig:dynkin} - and the
maps (\ref{phi*}) and (\ref{phi}) as particle creation and annihilation
operators, respectively. Introducing the quantum Hamiltonian%
\begin{equation}
H=-\frac{e_{1}(\mathcal{A})+e_{n-1}(\mathcal{A})}{2}=-\frac{1}{2}%
\sum_{j=1}^{n}(\varphi _{j+1}^{\ast }\varphi _{j}+z\varphi _{j}^{\ast
}\varphi _{j+1})
\end{equation}%
with $\varphi_{n+1}=z^{-1}\varphi_1,\;\varphi_{n+1}^\ast=z\varphi_1^\ast$,
as well as the conserved charges $H_{r}^{\pm }=-(e_{r}(\mathcal{A})\pm
e_{n-r}(\mathcal{A}))/2$ which because of \eqref{ncehcomm} are in
involution, $[H,H_{r}^{\pm }]=[H_{r}^{\pm },H_{r^{\prime }}^{\pm }]=0$,
yields an alternative physical interpretation of the combinatorial
structures described here. This quantum system is known as \emph{phase model}%
, see \cite{BIK} and references therein.

\begin{figure}[tbp]
\begin{equation*}
\includegraphics[scale=0.8]{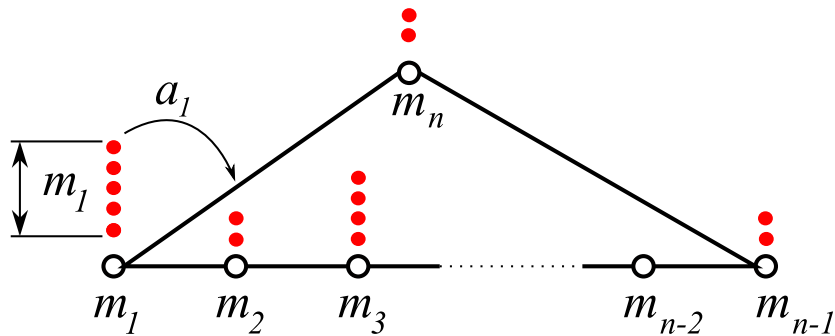}
\end{equation*}%
\caption{The fusion ring can be interpreted as a discrete model in quantum
mechanics. Each Dynkin label $m_i$ specifies a number of particles sitting
at the $i^{\text{th}}$ node of the Dynkin diagram. Application of the
generator $a_i$ of the affine plactic algebra, shifts one particle from site
$i$ to site $i+1$, as indicated in the picture for $i=1$.}
\label{fig:dynkin}
\end{figure}

\section{Bethe ansatz equations and the fusion potential}

Within the framework of exactly solvable models the next step is to
construct the eigenvectors of the transfer matrix $Q$. Instead it is simpler
to consider the eigenvalue problem of the auxiliary matrix $T$, since it
follows from the functional relation (\ref{TQ}) and the determinant formulae
(\ref{ncJT}) that the eigenvectors of $Q$ coincide with the eigenvectors of $%
T$. The advantage of this approach is that the eigenvectors of $T$ can be
computed via the algebraic Bethe ansatz or quantum inverse scattering method
(see e.g. \cite{KIB} for a text book and references therein) using the
commutation relations of the $A,B,C,D$ algebra in (\ref{momT}), which are
drastically simpler than the commutation relations of the algebra generated
by the matrix elements (\ref{momQ}). Starting point is a particular
assumption on the algebraic form of the eigenvectors: define an (off-shell)
Bethe vector at level $k>0$ to be%
\begin{equation}
b(x_{1},\ldots ,x_{k}):=B(x_{1}^{-1})\cdots B(x_{k}^{-1})\hat{\Omega},
\label{bvector}
\end{equation}%
where $\hat{\Omega}$ is again the unique vector corresponding to the
composition $(0,0,\ldots ,0)$. The requirement that (\ref{bvector}) is an
eigenvector of $T$ leads via the commutation relations of the Yang-Baxter
algebra contained in (\ref{RTT}) and a standard computation - which we omit
- to the Bethe ansatz equations \cite{BIK} \cite{KS}%
\begin{equation}
x_{1}^{n+k}=\cdots =x_{k}^{n+k}=z(-1)^{k-1}\prod_{i=1}^{k}x_{i}\;.
\label{BAE}
\end{equation}

We now discuss how the Bethe ansatz equations lead to a combinatorial
computation of fusion coefficients. We wish to emphasize that this is
possible without solving \eqref{BAE} first. For this reason we postpone the
discussion of their solutions to the next section, however, we already
mention that in order to solve them one needs to assume that $z^{\pm 1/n}$
exist.

The Bethe ansatz equations are $k$ polynomial equations and, thus, describe
an affine variety $\mathbb{V}_{n,k}^{\prime }\subset \mathbb{C}[z^{\pm \frac{%
1}{n}}]^{k}$. Recall that given a field $\mathbb{K}$ an affine variety is
usually defined in terms of a set of polynomials $f_{1},\ldots ,f_{k}\in
\mathbb{K}[x_{1},\ldots ,x_{k}]$ by setting $\mathbb{V}(f_{1},\ldots
,f_{k}):=\left\{ v\in \mathbb{K}^{k}:f_{i}(v_{1},\ldots ,v_{k})=0,\text{ }%
i=1,\ldots ,k\right\} $. Note that due to the commutation relations $%
B(x)B(y)=B(y)B(x),$ which follow from the Yang-Baxter equation (\ref{RTT}),
we can identify solutions of (\ref{BAE}) under permutations $(x_{1},\ldots
,x_{k})\sim (x_{w_{1}},\ldots ,x_{w_{k}})$ for all $w\in \mathbb{S}_{k},$
since they give rise to the same eigenvector (\ref{bvector}). Denote by $%
\mathbb{V}_{n,k}$ the variety obtained under this identification, then we
can think of $\mathbb{V}_{n,k}$ as being defined by $k$ elements in the ring
of \emph{symmetric} polynomials $\mathbb{C}[z^{\pm \frac{1}{n}%
}]^{k}[x_{1},\ldots ,x_{k}]^{\mathbb{S}_{k}}\cong \mathbb{C}[z^{\pm \frac{1}{%
n}}]^{k}[e_{1},\ldots ,e_{k}]$, where the $e_{i}$'s are the \emph{elementary
symmetric functions}.

\begin{lmm}
Let $h_{r}=\det (e_{1-i+j})_{1\leq i,j\leq r}$, the complete symmetric
functions, then the affine variety $\mathbb{V}_{n,k}$ defined by the Bethe
ansatz equations (up to permutations of the solutions) is given by\footnote{%
In \cite{KS} an additional relation has been stated to facilitate the
comparison with the small quantum cohomology ring of the Grassmannian, $%
h_{n+k}=z(-1)^{k-1}e_{k}$. This last relation follows from $%
h_{n}-z=h_{n+1}=\cdots =h_{n+k-1}=0$ by exploiting the definition the hook
Schur polynomial $s_{(n|k-1)}=e_{k}h_{n}=ze_{k}$; see (\ref{hooks}) below.}%
\begin{equation}
\mathbb{V}_{n,k}=\mathbb{V}(h_{n}-z,h_{n+1},\ldots ,h_{n+k-1})\;.
\label{bethe variety}
\end{equation}
\end{lmm}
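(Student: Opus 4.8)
The plan is to pass from the symmetric coordinates to the roots $x_1,\dots,x_k$, to use the generating function $H(t)=\sum_{r\ge 0}h_rt^{\,r}=\prod_{i=1}^k(1-x_it)^{-1}=E(-t)^{-1}$, and to prove, on the locus where the $x_i$ are pairwise distinct and nonzero, the equivalence
\[
x_i^{\,n+k}=z(-1)^{k-1}e_k\ (1\le i\le k)\quad\Longleftrightarrow\quad h_n=z,\ h_{n+1}=\cdots=h_{n+k-1}=0,
\]
where $e_k=\prod_i x_i$. Since $B(x)B(y)=B(y)B(x)$ lets us work with the unordered solution set, i.e.\ with the coordinates $e_1,\dots,e_k$, this equivalence is exactly the claimed equality $\mathbb{V}_{n,k}=\mathbb{V}(h_n-z,h_{n+1},\dots,h_{n+k-1})$ once the degenerate locus is excluded. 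Throughout I would use the partial-fraction expansion $h_r=\sum_i c_ix_i^{\,r}$ with $c_i=x_i^{\,k-1}/\prod_{j\ne i}(x_i-x_j)$ (valid for distinct nonzero $x_i$, since $H$ then has only simple poles at $t=x_i^{-1}$ and no polynomial part), together with the two residue identities $\sum_i x_i^{\,p}/\prod_{j\ne i}(x_i-x_j)=h_{p-k+1}$ (which vanishes for $0\le p\le k-2$) and $\sum_i\bigl(x_i\prod_{j\ne i}(x_i-x_j)\bigr)^{-1}=(-1)^{k-1}/e_k$ (from the vanishing of the sum of residues of $X^{-1}\prod_j(X-x_j)^{-1}$ at infinity).

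For the forward implication I would substitute the Bethe equations into $h_{n+m}=\sum_i c_ix_i^{\,n+m}$. Writing $x_i^{\,n+k}=\Lambda$ with $\Lambda:=z(-1)^{k-1}e_k$ gives $c_ix_i^{\,n+m}=\Lambda\,x_i^{\,m-1}/\prod_{j\ne i}(x_i-x_j)$, hence $h_{n+m}=\Lambda\sum_i x_i^{\,m-1}/\prod_{j\ne i}(x_i-x_j)$. For $1\le m\le k-1$ the first residue identity (with $p=m-1\le k-2$) makes this $\Lambda\,h_{m-k}=0$, while for $m=0$ the second identity yields $h_n=\Lambda(-1)^{k-1}/e_k=z$. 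This proves $\mathbb{V}_{n,k}\subseteq\mathbb{V}(h_n-z,h_{n+1},\dots,h_{n+k-1})$.

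For the converse I would read the $k$ equations $h_{n+m}=z\,\delta_{m,0}$ $(0\le m\le k-1)$ as a Vandermonde system $\sum_i u_i x_i^{\,m}=z\,\delta_{m,0}$ in the unknowns $u_i:=x_i^{\,n+k-1}/\prod_{j\ne i}(x_i-x_j)$, which is nothing but $h_{n+m}=\sum_i c_ix_i^{\,n+m}$ rewritten. As the $x_i$ are distinct the Vandermonde matrix is invertible, so $u_i=z\,\ell_i(0)$, where $\ell_i$ is the Lagrange basis polynomial and $\ell_i(0)=(-1)^{k-1}(e_k/x_i)/\prod_{j\ne i}(x_i-x_j)$. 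Comparing the two expressions for $u_i$ and cancelling $\prod_{j\ne i}(x_i-x_j)\ne 0$ gives $x_i^{\,n+k-1}=z(-1)^{k-1}e_k/x_i$, i.e.\ $x_i^{\,n+k}=\Lambda$ for each $i$. Hence $\mathbb{V}(h_n-z,h_{n+1},\dots,h_{n+k-1})\subseteq\mathbb{V}_{n,k}$ on the generic locus.

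The main obstacle is the degenerate locus, which I would handle separately. First, no point with a vanishing coordinate occurs: if say $x_k=0$ then $e_k=0$ and each $h_{n+m}(x_1,\dots,x_k)=h_{n+m}(x_1,\dots,x_{k-1})$ reduces to a complete symmetric function in $k-1$ variables, so the $k$ conditions $h_n=z,\ h_{n+1}=\cdots=h_{n+k-1}=0$ become overdetermined and, since $z$ is invertible (as $z^{\pm 1/n}$ exist), admit no solution; on the Bethe side $\Lambda\ne 0$ already forces all $x_i\ne 0$. The remaining possibility of repeated roots I would dispose of by a cardinality count: the ideal $(h_n-z,h_{n+1},\dots,h_{n+k-1})$ cuts out a zero-dimensional reduced scheme of length $\binom{n+k-1}{k}=\dim_{\mathbb{C}}\mathcal{F}_{n,k}^{\mathbb{C}}$, which equals the number of admissible (distinct) Bethe solutions modulo $\mathbb{S}_k$; together with the inclusion already proved this forces the two finite sets to coincide. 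Establishing that this scheme is reduced of exactly this length, rather than the pointwise computation above, is where the real work lies.
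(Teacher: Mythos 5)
Your generic-locus computation (pairwise distinct, nonzero $x_i$) is correct, and the partial-fraction/Vandermonde route is genuinely different from the paper's; but the proof has a real gap exactly where you flag it yourself, and your proposed repair does not close it. To get the inclusion $\mathbb{V}(h_n-z,h_{n+1},\ldots,h_{n+k-1})\subseteq\mathbb{V}_{n,k}$ you must show this variety contains no point with repeated coordinates, and for that you invoke (i) that the ideal $\langle h_n-z,h_{n+1},\ldots,h_{n+k-1}\rangle$ cuts out a \emph{reduced} zero-dimensional scheme of length $\binom{n+k-1}{k}$, and (ii) that the admissible Bethe solutions modulo $\mathbb{S}_k$ number exactly $\binom{n+k-1}{k}$. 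Neither is available at this stage: (i) is the radicality statement proved later via the Nullstellensatz (\cite[Proof of Theorem 6.20]{KS}), and (ii) is the completeness theorem for the Bethe ansatz; in both this paper and \cite{KS} these results come \emph{after}, and build on, the present lemma, so using them here is circular. Your zero-coordinate exclusion is also not a proof: an ``overdetermined'' system can perfectly well have solutions (the correct argument is that with $j\geq1$ vanishing coordinates the depth-$(k-j)$ recurrence applied at $r=n+k-j$ forces $0=(-1)^{k-j-1}e_{k-j}h_n$ with $e_{k-j}\neq0$, contradicting $h_n=z$). Finally, your remark that on the Bethe side ``$\Lambda\neq0$ already forces all $x_i\neq 0$'' inverts the logic: the all-zero tuple and tuples with repeated $(n+k)$-th roots of $\Lambda$ do satisfy (\ref{BAE}) as written; they are excluded only by the admissibility convention implicit in ``solutions,'' not by the equations.

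The recursion-based argument the paper points to (via (\ref{hrec}), i.e.\ \cite[Lemma 6.3]{KS}) disposes of all of this at once, and you could graft it onto your write-up in place of the cardinality count. From $h_n=z$, $h_{n+1}=\cdots=h_{n+k-1}=0$ and the recurrence $h_r=e_1h_{r-1}-e_2h_{r-2}+\cdots+(-1)^{k-1}e_kh_{r-k}$ one obtains $h_{n+k+m}=\Lambda h_m$ for all $m\geq -(k-1)$, where $\Lambda:=z(-1)^{k-1}e_k$; summing this into the generating function $H(t)=\prod_{i=1}^k(1-x_it)^{-1}$ gives
\begin{equation*}
\prod_{i=1}^{k}(1-x_it)\,P(t)\;=\;1-\Lambda t^{n+k},\qquad \deg P\leq n\;.
\end{equation*}
If $\Lambda=0$ this forces $\prod_i(1-x_it)$ to be a unit, hence all $x_i=0$ and $h_n=0\neq z$, a contradiction; so $\Lambda\neq0$, the right-hand side is squarefree with nonzero roots, and the divisibility immediately yields that the $x_i$ are pairwise distinct, nonzero, and satisfy $x_i^{n+k}=\Lambda$. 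Reading the same factorization in the other direction (distinct roots of $1-\Lambda t^{n+k}$ give $\deg P=n$, leading coefficient $z$) recovers your forward inclusion as well. This single generating-function identity replaces your residue identities, the Vandermonde inversion, and --- crucially --- the missing reducedness/cardinality input.
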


The proof of this statement can be found in \cite[Lemma 6.3]{KS} and simply
uses the recursive relation (\ref{hrec}) for complete symmetric polynomials.
Next we assign to the solutions of the Bethe ansatz equations (\ref{BAE}) an
ideal in the ring of symmetric functions. Recall the definition of the
nullstellen or vanishing ideal of an affine variety $\mathbb{V}$, $\mathcal{I%
}(\mathbb{V}):=\{f\in \mathbb{C}(z)[e_{1},\ldots ,e_{k}]:f(v)=0,\;v\in
\mathbb{V}\}$. Then we have the following statement \cite[Proof of Theorem
6.20]{KS}:

\begin{prp}
Let $\mathcal{I}(\mathbb{V}_{n,k})$ be the nullstellen ideal of the Bethe
ansatz variety (\ref{bethe variety}), then
\begin{equation}
\langle h_{n}-z,h_{n+1},\ldots ,h_{n+k-1}\rangle =\mathcal{I}(\mathbb{V}%
_{n,k})\;.  \label{bethe ideal}
\end{equation}
\end{prp}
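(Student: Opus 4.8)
The plan is to recognise that the statement is exactly the assertion that the ideal $I := \langle h_n - z, h_{n+1}, \ldots, h_{n+k-1}\rangle$ is \emph{radical}. Indeed, by the preceding Lemma we have $\mathbb{V}(I) = \mathbb{V}_{n,k}$, so Hilbert's Nullstellensatz (applied over the algebraic closure $\overline{\mathbb{C}(z)}$, which already contains the values $z^{\pm 1/n}$ occurring in the solutions) gives $\mathcal{I}(\mathbb{V}_{n,k}) = \mathcal{I}(\mathbb{V}(I)) = \sqrt{I}$. Hence it suffices to prove $I = \sqrt I$, i.e. that the finite-dimensional $\mathbb{C}(z)$-algebra $A := \mathbb{C}(z)[e_1,\ldots,e_k]/I$ is reduced. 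Since $\mathbb{C}(z)$ is perfect (characteristic $0$), reducedness of $A$ is equivalent to reducedness after base change to $\overline{\mathbb{C}(z)}$, which holds precisely when every point of $\mathbb{V}_{n,k}$ is a \emph{simple} (multiplicity-one) point. Because $\dim_{\mathbb{C}(z)} A = \sum_{p\in\mathbb{V}_{n,k}} \mathrm{mult}_p \ge |\mathbb{V}_{n,k}|$, with equality if and only if all multiplicities equal $1$, the whole proof reduces to the numerical identity $\dim_{\mathbb{C}(z)} A = |\mathbb{V}_{n,k}|$.

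The two sides are computed separately, and both turn out to equal $\binom{n+k-1}{k} = |P_k^+|$. For the left-hand side I would pass to the associated graded ideal with respect to the grading $\deg e_i = i$: the leading forms of the $k$ generators are $h_n, h_{n+1}, \ldots, h_{n+k-1}$ (the scalar $-z$ has degree $0$ and is dropped), and these are precisely the defining relations of the Schubert presentation $H^\ast(\mathrm{Gr}(k,n+k-1)) \cong \mathbb{C}[e_1,\ldots,e_k]/\langle h_n,\ldots,h_{n+k-1}\rangle$. As this graded quotient is zero-dimensional, the $k$ forms are a regular sequence, so $I$ is a complete intersection and $\dim_{\mathbb{C}(z)} A = \dim_{\mathbb{C}} H^\ast(\mathrm{Gr}(k,n+k-1)) = \binom{n+k-1}{k}$. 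For the right-hand side I would solve \eqref{BAE} directly: writing $c := z(-1)^{k-1}\prod_j x_j$, each $x_i$ is one of the $n+k$ roots of $t^{n+k}=c$, and a non-degenerate Bethe vector forces the $x_i$ to be pairwise distinct, hence to form a $k$-element subset $S\subset\{0,\ldots,n+k-1\}$ of the roots $\zeta\omega^s$ with $\omega=e^{2\pi i/(n+k)}$. The self-consistency condition collapses to $\zeta^n = z(-1)^{k-1}\omega^{\sum_{s\in S}s}$, and after quotienting by the rotation identification $(\zeta,S)\sim(\zeta\omega,\,S-1)$ (indices mod $n+k$) and by permutations one counts exactly $\binom{n+k-1}{k}$ distinct unordered solutions.

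Matching the two counts forces every point of $\mathbb{V}_{n,k}$ to be simple, so $A$ is reduced and $I = \sqrt I = \mathcal{I}(\mathbb{V}_{n,k})$. Conceptually this is no accident: $A$ is a $z$-deformation of the Verlinde algebra $\mathcal{F}_{n,k}^{\mathbb{C}}$, whose idempotents are the images of the Bethe vectors $b(x_1,\ldots,x_k)$, and the existence of a full system of $\binom{n+k-1}{k}$ common eigenvectors for the commuting operators $h_r(\mathcal{A})$ is exactly the semisimplicity (hence reducedness) of $A$. The main obstacle is the exact enumeration of Bethe solutions: one must check that distinct rotation-classes of subsets $S$ never yield the same unordered tuple and that the constraint $\zeta^n=\cdots$ contributes the correct multiplicity, so that nothing is over- or under-counted. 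Equivalently, one may bypass the count by verifying simplicity through the Jacobian criterion, using $\partial h_r/\partial x_b = \sum_{j\ge 0} x_b^{\,j}\,h_{r-1-j}$ together with the non-vanishing Vandermonde factor $\prod_{i<j}(x_i-x_j)$ at a non-degenerate solution; establishing invertibility of this Jacobian at each Bethe root is the crux either way.
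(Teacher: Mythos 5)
Your reduction is exactly the paper's: invoke Hilbert's Nullstellensatz to get $\mathcal{I}(\mathbb{V}(I))=\sqrt{I}$ for $I=\langle h_{n}-z,h_{n+1},\ldots ,h_{n+k-1}\rangle$, then prove $I=\sqrt{I}$; the paper makes the same reduction and defers radicality to \cite[Proof of Theorem 6.20]{KS}. Your dimension computation is also sound: with $\deg e_{i}=i$ the top-degree forms of the generators are $h_{n},\ldots ,h_{n+k-1}$, these present the cohomology ring of the Grassmannian $Gr_{k,n+k-1}$ and hence form a regular sequence, so $\dim _{\mathbb{C}(z)}A=\binom{n+k-1}{k}$; and for a zero-dimensional ideal over the perfect field $\mathbb{C}(z)$, radicality is indeed equivalent to $\dim _{\mathbb{C}(z)}A=|\mathbb{V}_{n,k}|$.

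The genuine gap is in your evaluation of $|\mathbb{V}_{n,k}|$. You discard solutions of \eqref{BAE} with repeated components on the grounds that ``a non-degenerate Bethe vector forces the $x_{i}$ to be pairwise distinct''. That is not an argument about the variety: by the preceding Lemma, $\mathbb{V}_{n,k}$ is the zero locus of $h_{n}-z,h_{n+1},\ldots ,h_{n+k-1}$ in the coordinates $e_{1},\ldots ,e_{k}$, and whether a point would produce a degenerate Bethe vector has no bearing on its membership; a priori the variety could contain points whose underlying multiset $\{x_{1},\ldots ,x_{k}\}$ has repetitions, and these are invisible to your subset enumeration, which would then undercount $|\mathbb{V}_{n,k}|$ and destroy the criterion. (The correspondence really is delicate: the all-zero tuple solves \eqref{BAE} verbatim yet does not lie in $\mathbb{V}_{n,k}$ since $h_{n}=0\neq z$ there, so ``solutions of \eqref{BAE}'' and ``points of \eqref{bethe variety}'' cannot be identified carelessly.) Two ways to close the gap. (i) Prove distinctness algebraically: the equations $h_{n}=z$, $h_{n+1}=\cdots =h_{n+k-1}=0$ are equivalent to the polynomial divisibility $\prod_{i}(1-x_{i}u)\mid 1+(-1)^{k}z\,e_{k}\,u^{n+k}$; this forces every $x_{i}\neq 0$ (otherwise $e_{k}=0$, the right-hand side is $1$, hence all $x_{i}=0$ and $h_{n}=0\neq z$), and the right-hand polynomial is then squarefree (its derivative vanishes only at $u=0$, which is not a root), so the $x_{i}$ are pairwise distinct and each satisfies $x_{i}^{n+k}=(-1)^{k-1}z\,e_{k}$ --- after which your orbit count $n\binom{n+k}{k}/(n+k)=\binom{n+k-1}{k}$ is valid. (ii) More efficiently, avoid enumerating $\mathbb{V}_{n,k}$ altogether: one always has $|\mathbb{V}(I)|\leq \dim _{\mathbb{C}(z)}A$ with equality if and only if $I$ is radical, so it suffices to \emph{exhibit} $\binom{n+k-1}{k}$ pairwise distinct points, which is exactly what the explicit solutions \eqref{broots} provide (their distinctness is checked in \cite[Theorem 6.4]{KS}); then $\binom{n+k-1}{k}\leq |\mathbb{V}(I)|\leq \dim _{\mathbb{C}(z)}A=\binom{n+k-1}{k}$ collapses and radicality follows with no need to rule out degenerate points by hand. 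Route (ii) is closest in spirit to the completeness argument of \cite{KS}, which is what the paper's citation stands in for.
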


This proposition is proved via Hilbert's Nullstellensatz which asserts that
given an ideal $I$ in a polynomial ring one has $\mathcal{I}(\mathbb{V}(I))=%
\sqrt{I},$ where $\sqrt{I}=\{f:f^{m}\in I$ for some $m\in \mathbb{Z}_{>0}\}$
is the radical of $I$. In the present case where $I=\langle
h_{n}-1,h_{n+1},\ldots ,h_{n+k-1}\rangle $ one shows that $I=\sqrt{I}$ and
thus the assertion follows from the previous lemma.

\begin{rem}
\textrm{For $z=1$ the ideal (\ref{bethe ideal}) can be encoded into a fusion
potential $V'_{k+n}=p_{k+n}/(k+n)+(-1)^ke_k$ with  $p_{k+n}=\sum_{i=1}^{k}x_{i}^{n+k}$, the $(k+n)^{\text{th}}$ power
sum, noting that%
\begin{equation*}
\frac{1}{k+n}\frac{\partial p_{k+n}}{\partial e_{r}}=(-1)^{r-1}h_{k+n-r}=0,%
\qquad r=1,2,\ldots ,k\;.
\end{equation*}%
This is very similar to the fusion potential introduced by Gepner. The
difference lies in the constraints imposed on the variables: Gepner's fusion
potential \cite[Equation (2.31)]{Gepner}%
\begin{equation}
V_{n+k}=\frac{1}{k+n}\sum_{i=1}^{n}y_{i}^{n+k},  \label{gepner potential}
\end{equation}%
is defined in terms of $n$ variables $\{y_{1},\ldots ,y_{n}\}$ subject to
the constraints $e_{n}(y)=y_{1}\cdots y_{n}=1$ and
\begin{equation}
\frac{\partial V_{n+k}}{\partial e_{i}}=(-1)^{i+1}h_{n+k-i}=0,\quad
i=1,\ldots ,n-1\;.  \label{gepner ideal}
\end{equation}%
These constraints can be shown to be equivalent with the following set of
equations,%
\begin{equation}
y_{1}^{n+k}=\cdots =y_{n}^{n+k}=h_{k+n}(y_{1},\ldots
,y_{n})=(-1)^{n-1}h_{k}(y_{1},\ldots ,y_{n}),  \label{dual BAE}
\end{equation}%
which look very similar to the Bethe ansatz equations (\ref{BAE}). However,
the corresponding construction of the Bethe vector in terms of a Yang-Baxter
algebra is currently missing. }
\end{rem}

The importance of the ideal (\ref{bethe ideal}) derived form the Bethe
ansatz equations lies in the fact that it yields a presentation of the
Verlinde or fusion algebra in the ring of symmetric functions \cite[Theorem
6.20]{KS}.

\begin{thm}[Korff-Stroppel]
Set $z=1$. Then the map $P_{k}^{+}\ni \hat{\lambda}\mapsto s_{\lambda ^{t}}+%
\mathcal{I}(\mathbb{V}_{n,k}),$ where $\mathcal{I}(\mathbb{V}_{n,k})$ is the
ideal in (\ref{bethe ideal}) provides an algebra isomorphism,%
\begin{equation}
\mathcal{F}_{n,k}\otimes _{\mathbb{Z}}\mathbb{C}\cong \mathbb{C}%
[e_{1},\ldots ,e_{k}]/\mathcal{I}(\mathbb{V}_{n,k})\;.  \label{bethe iso}
\end{equation}
\end{thm}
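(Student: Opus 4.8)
The plan is to prove the isomorphism by realising both algebras concretely inside $\func{End}(\mathbb{C}P_k^+)$ and matching them through their common diagonalisation by Bethe vectors. On one side, the generating-function result (Proposition \ref{Zfusion}) will let me identify the fusion algebra with the commutative algebra of affine plactic Schur operators on the level-$k$ weight space; on the other side, the radicality of the Bethe ideal will present the quotient ring as the algebra of functions on the Bethe variety. The bridge between the two is the observation that the eigenvalue of the fusion matrix $N_{\hat\lambda}$ on a Bethe vector is exactly the value of the transposed Schur polynomial $s_{\lambda^t}$ at the corresponding Bethe root.

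First I would make the fusion side operator-theoretic. Expanding the product of transfer matrices in \eqref{Z=QQQQ}, each factor given by \eqref{Qgenh}, via the Cauchy identity for the pairwise commuting operators $h_r(\mathcal{A})$ (commutativity being guaranteed by integrability \eqref{ncehcomm}), the coefficient of $s_\lambda(x_1,\ldots,x_{n-1})$ is the affine plactic Schur operator $s_\lambda(\mathcal{A})=\det(h_{\lambda_i-i+j}(\mathcal{A}))$. Comparing with the Schur expansion \eqref{Zvertex} at $z=1$ then yields $s_\lambda(\mathcal{A})\,\hat\mu=\sum_{\hat\nu}\mathcal{N}_{\hat\lambda\hat\mu}^{(k),\hat\nu}\,\hat\nu$, so $s_\lambda(\mathcal{A})|_{\mathbb{C}P_k^+}$ is precisely the fusion matrix $N_{\hat\lambda}$. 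In particular $s_\lambda(\mathcal{A})\hat\Omega=\hat\lambda$, whence $\hat\lambda\mapsto N_{\hat\lambda}$ is an algebra isomorphism of $\mathcal{F}_{n,k}\otimes_{\mathbb{Z}}\mathbb{C}$ onto the commutative operator algebra generated by the $N_{\hat\lambda}$, acting on $\mathbb{C}P_k^+$ as its own regular representation with cyclic vector $\hat\Omega$.

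Next I would diagonalise this operator algebra. The Bethe vectors \eqref{bvector} are simultaneous eigenvectors of the auxiliary matrix $T=\sum_{r}x^{r}e_{r}(\mathcal{A})$ — this is the standard algebraic Bethe ansatz computation using the commutation relations encoded in \eqref{RTT} — and hence of every $e_r(\mathcal{A})$ and of every fusion matrix. Labelling the joint eigenvectors $b_\xi$ by the points $\xi\in\mathbb{V}_{n,k}$ of the Bethe variety defined by \eqref{BAE}, a direct evaluation shows the eigenvalue of $e_r(\mathcal{A})$ on $b_\xi$ to be the elementary symmetric function $e_r(\xi)$ in the $k$ Bethe roots. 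Using the noncommutative Jacobi--Trudi formula \eqref{ncJT} to rewrite $N_{\hat\lambda}=s_\lambda(\mathcal{A})=\det(e_{\lambda_i-i+j}(\mathcal{A}))$ and then the dual (commutative) Jacobi--Trudi identity in $k$ variables, the eigenvalue of $N_{\hat\lambda}$ on $b_\xi$ becomes $\det(e_{\lambda_i-i+j}(\xi))=s_{\lambda^t}(\xi)$. This is the precise point at which the conjugate partition $\lambda^t$ enters: the fusion weight $\lambda$ sits in the $(n-1)\times k$ box, but its eigenvalue is the Schur function of the transpose $\lambda^t$, which lies in the $k\times(n-1)$ box and is nonzero in $k$ variables.

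Finally I would assemble the isomorphism. By the preceding proposition the Bethe ideal \eqref{bethe ideal} is radical, so Hilbert's Nullstellensatz presents $\mathbb{C}[e_1,\ldots,e_k]/\mathcal{I}(\mathbb{V}_{n,k})$ as the algebra of $\mathbb{C}$-valued functions on $\mathbb{V}_{n,k}$, under which $s_{\lambda^t}+\mathcal{I}(\mathbb{V}_{n,k})\mapsto(s_{\lambda^t}(\xi))_{\xi\in\mathbb{V}_{n,k}}$. The previous paragraph shows that $\hat\lambda\mapsto N_{\hat\lambda}$ followed by diagonalisation sends $\hat\lambda$ to the same tuple $(s_{\lambda^t}(\xi))_\xi$. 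Hence the composite $\hat\lambda\mapsto s_{\lambda^t}+\mathcal{I}(\mathbb{V}_{n,k})$ is an algebra homomorphism, and it is an isomorphism once one knows that the Bethe vectors form a basis of $\mathbb{C}P_k^+$ and that $|\mathbb{V}_{n,k}|=|P_k^+|=\binom{n+k-1}{k}$. I expect the main obstacle to be exactly this last input: the \emph{completeness} of the Bethe ansatz together with the count of solutions of \eqref{BAE}, which must be shown to reproduce the dimension $\binom{n+k-1}{k}$ of the fusion ring; the accompanying eigenvalue computation of the third paragraph is routine in comparison. An alternative, more self-contained route avoids completeness altogether: define the homomorphism $\theta:\mathbb{C}[e_1,\ldots,e_k]\to\func{End}(\mathbb{C}P_k^+)$ by $e_r\mapsto e_r(\mathcal{A})|_{\mathbb{C}P_k^+}$ (see \eqref{nce}), use the operator functional equation \eqref{TQ} restricted to level $k$ to show $\mathcal{I}(\mathbb{V}_{n,k})\subseteq\ker\theta$, identify $\theta(s_{\lambda^t})$ with $N_{\hat\lambda}$ via \eqref{ncJT}, and conclude by a Schur-straightening dimension count; here the delicate step is reconciling the truncation $e_{r}=0$ for $r>k$ in the quotient with the nonvanishing operators $e_r(\mathcal{A})$ for $k<r\le n$.
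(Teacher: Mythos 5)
Your overall architecture --- realise the fusion algebra as the commutative algebra of affine plactic Schur operators on $\mathbb{C}P_{k}^{+}$, diagonalise it by Bethe vectors, and transport everything through the radical ideal \eqref{bethe ideal} via the Nullstellensatz, with completeness of the Bethe ansatz supplying the bijection between eigenvectors and points of $\mathbb{V}_{n,k}$ --- is exactly the strategy of \cite{KS} that this paper summarises (the paper defers the proof to \cite[Theorem 6.20]{KS}, but its Section 5 assembles precisely these ingredients). Your derivation of $s_{\lambda}(\mathcal{A})|_{\mathbb{C}P_{k}^{+}}=N_{\hat{\lambda}}$ from Proposition \ref{Zfusion} via the Cauchy identity is a legitimate variant of the paper's route through the Verlinde formula, and you correctly single out completeness and the count $|\mathbb{V}_{n,k}|=\binom{n+k-1}{k}$ as the essential nontrivial inputs.

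However, the eigenvalue computation in your third paragraph --- the step you call ``routine'' --- is wrong as stated, in a way that happens to cancel. The eigenvalue of $e_{r}(\mathcal{A})$ on a Bethe vector is \emph{not} the elementary symmetric function $e_{r}(\xi)$ of the Bethe roots; it is the \emph{complete} symmetric function $h_{r}(\xi)$, and dually the eigenvalue of $h_{r}(\mathcal{A})$ is $e_{r}(\xi)$. This is what \eqref{spectrum} says once combined with $S_{\lambda\sigma}/S_{\emptyset\sigma}=s_{\lambda^{t}}(x^{\sigma})$: the partitions $(1^{r})$ and $(r)$ get transposed. Concretely, in the $k=1$ example of Section 5 there is a single Bethe root $x_{1}(s)$, so $e_{2}(\xi)=0$, yet $e_{2}(\mathcal{A})$ shifts the particle two sites around the Dynkin diagram and has eigenvalue $x_{1}(s)^{2}\neq 0$; even more starkly, $e_{n}(\mathcal{A})=z\cdot 1$ while $e_{n}$ in $k<n$ variables vanishes identically --- the correct match is with $h_{n}(\xi)=z$, i.e.\ the first generator of \eqref{bethe ideal}. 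Your companion identity $N_{\hat{\lambda}}=s_{\lambda}(\mathcal{A})=\det(e_{\lambda_{i}-i+j}(\mathcal{A}))$ is off by the same transpose: by the dual Jacobi--Trudi identity $\det(e_{\lambda_{i}-i+j})$ equals $s_{\lambda^{t}}$, not $s_{\lambda}$, cf.\ \eqref{ncschur}. The two errors compensate, which is why your final formula (eigenvalue $s_{\lambda^{t}}(\xi)$ for $N_{\hat{\lambda}}$) is nevertheless correct; but the mechanism you announce for ``where $\lambda^{t}$ enters'' is not the right one. The transpose is produced by the spectrum itself, which implements the involution $h\leftrightarrow e$ between the noncommutative alphabet and the Bethe roots: $N_{\hat{\lambda}}=\det(h_{\lambda_{i}-i+j}(\mathcal{A}))$ has eigenvalue $\det(e_{\lambda_{i}-i+j}(\xi))=s_{\lambda^{t}}(\xi)$. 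Symptomatically, the ``delicate step'' you flag at the very end --- reconciling $e_{r}=0$ for $r>k$ in the quotient with $e_{r}(\mathcal{A})\neq 0$ for $k<r\leq n$ --- is a phantom created by this same transposition: with the correct dictionary, $e_{r}(\mathcal{A})$ corresponds to $h_{r}(\xi)$, which need not vanish for $r>k$, while $h_{r}(\mathcal{A})$ corresponds to $e_{r}(\xi)$, which does vanish, in agreement with $h_{r}(\mathcal{A}^{\prime})|_{\mathbb{C}P_{k}^{+}}=0$ for $r>k$. Replace your third paragraph by this corrected spectral dictionary and the rest of your assembly goes through.
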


In contrast Gepner's fusion potential is equivalent to a different
presentation; c.f. \cite[p247, result (4)]{GoodmanWenzl} and \cite[Equation
(2.36)]{Gepner}.

\begin{thm}[Gepner, Goodman-Wenzl]
The map $P_{k}^{+}\ni \hat{\lambda}\mapsto s_{\lambda }+I_{n,k},$ where $%
I_{n,k}=\langle e_{n}-1,h_{k+1},\ldots ,h_{n+k-1}\rangle $ is the ideal
resulting from (\ref{gepner ideal}) also provides an isomorphism,
\begin{equation}
\mathcal{F}_{n,k}\otimes _{\mathbb{Z}}\mathbb{C}\cong \mathbb{C}%
[e_{1},\ldots ,e_{n}]/I_{n,k}\;.  \label{gepner iso}
\end{equation}
\end{thm}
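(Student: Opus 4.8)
The plan is to deduce the Gepner presentation \eqref{gepner iso} from the Korff--Stroppel isomorphism \eqref{bethe iso}, which is already in hand. Write $R_{1}=\mathbb{C}[e_{1},\ldots ,e_{k}]/\mathcal{I}(\mathbb{V}_{n,k})$ for the ring occurring in \eqref{bethe iso} and $R_{2}=\mathbb{C}[e_{1},\ldots ,e_{n}]/I_{n,k}$ for the ring in \eqref{gepner iso}. As $\lambda $ runs over the partitions in the $(n-1)\times k$ box, the transposes $\lambda ^{t}$ run over the partitions in the $k\times (n-1)$ box, which index the Schur basis of $R_{1}$; hence $s_{\lambda ^{t}}+\mathcal{I}(\mathbb{V}_{n,k})\mapsto s_{\lambda }+I_{n,k}$ is a well-defined linear bijection $\Psi :R_{1}\to R_{2}$. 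Since \eqref{bethe iso} identifies $\mathcal{F}_{n,k}\otimes _{\mathbb{Z}}\mathbb{C}$ with $R_{1}$ via $\hat{\lambda}\mapsto s_{\lambda ^{t}}$, the theorem is equivalent to the assertion that $\Psi $ is multiplicative, i.e.\ that $R_{2}$ has the fusion coefficients $\mathcal{N}_{\hat{\lambda}\hat{\mu}}^{(k)\hat{\nu}}$ as its structure constants in the basis $\{s_{\lambda }\}$.

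One might hope to build $\Psi $ from the standard involution $\omega $ of symmetric functions, which interchanges $e_{r}$ and $h_{r}$ and sends $s_{\mu }\mapsto s_{\mu ^{t}}$. This fails at the level of ideals. Lifting both quotients to the ring $\Lambda $ of symmetric functions, $\omega $ carries the defining ideal of $R_{1}$ to $\langle e_{n}-1,e_{n+1},\ldots ,e_{n+k-1}\rangle +\langle h_{k+1},h_{k+2},\ldots \rangle $, whereas that of $R_{2}$ is $\langle e_{n}-1,h_{k+1},\ldots ,h_{n+k-1}\rangle +\langle e_{n+1},e_{n+2},\ldots \rangle $. From the Newton identity $\sum_{i\ge 0}(-1)^{i}e_{i}h_{r-i}=0$ at $r=n+k$ one computes $h_{n+k}\equiv (-1)^{n+1}h_{k}$ in $R_{2}$, which is the nonzero basis element $s_{(k)}$; thus $\omega $ does not map the first ideal into the second and does not descend to a homomorphism $R_{1}\to R_{2}$. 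Consequently $\Psi $ must be produced intrinsically, and the natural device is the common eigenbasis of the two commutative algebras.

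Both $R_{1}$ and $R_{2}$ are finite-dimensional and semisimple, so each decomposes as a product of copies of $\mathbb{C}$ indexed by the joint spectrum of the multiplication operators, that is, by the solutions of the respective Bethe ansatz equations modulo the relevant symmetric group: the $k$-variable equations \eqref{BAE} for $R_{1}$ and the dual $n$-variable equations \eqref{dual BAE} for $R_{2}$. For $R_{1}$ the corresponding evaluation characters are known from \eqref{bethe iso} and the Verlinde formula \eqref{Verlinde}: the solution $x^{(\hat{\sigma})}$ attached to $\hat{\sigma}\in P_{k}^{+}$ satisfies
\begin{equation*}
s_{\lambda ^{t}}\bigl(x^{(\hat{\sigma})}\bigr)=\frac{S_{\hat{\lambda}\hat{\sigma}}}{S_{\hat{\emptyset}\hat{\sigma}}}.
\end{equation*}
The crux of the argument is to prove the exactly parallel statement on the dual side, namely that \eqref{dual BAE} has precisely $|P_{k}^{+}|$ simple solutions $y^{(\hat{\sigma})}$ and that $s_{\lambda }(y^{(\hat{\sigma})})=S_{\hat{\lambda}\hat{\sigma}}/S_{\hat{\emptyset}\hat{\sigma}}$. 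Granting this, the evaluation characters of $R_{2}$ agree with those of $R_{1}$ after transposition, both rings are identified with $\mathcal{F}_{n,k}\otimes _{\mathbb{Z}}\mathbb{C}$ through the normalised $S$-matrix, $\Psi $ is forced to respect the product, and \eqref{gepner iso} follows.

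The main obstacle is therefore the spectral computation on the dual side, which has two parts. First, radicality: one shows that $I_{n,k}$ equals its own radical and that the zero locus \eqref{dual BAE} consists of $|P_{k}^{+}|$ reduced points, by the same Nullstellensatz argument that produced \eqref{bethe ideal}, now applied to the constraints \eqref{gepner ideal} together with $e_{n}(y)=1$. Second, the evaluation: one must check that the $k$ roots solving \eqref{BAE} and the $n$ roots solving \eqref{dual BAE} form complementary subsets of the $(n+k)$-th roots of a common scalar fixed by $e_{n}(y)=1$, and then use the dual Cauchy identity to turn $s_{\lambda ^{t}}$ on the $k$-element set into $s_{\lambda }$ on the complementary $n$-element set, matching both against the explicit Kac--Peterson expression \eqref{modS}. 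This last step is precisely level--rank duality of the modular $S$-matrix, and the delicate point is the bookkeeping of the signs and roots-of-unity phases imposed by the normalisation $e_{n}(y)=1$. I note that carrying out this dual computation in full in fact reproves $R_{2}\cong \mathcal{F}_{n,k}\otimes _{\mathbb{Z}}\mathbb{C}$ directly, recovering the original results of Gepner and of Goodman--Wenzl without reference to $R_{1}$.
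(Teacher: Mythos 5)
Your argument has a genuine gap: everything is made to rest on what you call the ``crux'' --- that the dual equations \eqref{dual BAE} (equivalently the constraints \eqref{gepner ideal} together with $e_{n}(y)=1$) cut out exactly $|P_{k}^{+}|$ reduced points $y^{(\hat{\sigma})}$ satisfying $s_{\lambda }(y^{(\hat{\sigma})})=S_{\hat{\lambda}\hat{\sigma}}/S_{\hat{\emptyset}\hat{\sigma}}$ --- and this is never proved. You indicate that radicality would follow by a Nullstellensatz argument and the evaluation by level--rank duality and the dual Cauchy identity, with the sign and root-of-unity bookkeeping explicitly left open; but these are precisely the hard steps, and, as you concede yourself, carrying them out amounts to an independent reproof of the Gepner/Goodman--Wenzl theorem. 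As written, the proposal therefore reduces the statement to an unproven claim of the same depth, which is not a proof. A second, smaller circularity sits at the very start: you declare $\Psi $ a linear \emph{bijection}, but the fact that the cosets $s_{\lambda }+I_{n,k}$ with $\lambda \in \mathcal{P}_{\leq n-1,k}$ form a basis of $\mathbb{C}[e_{1},\ldots ,e_{n}]/I_{n,k}$ is itself part of what the theorem asserts and cannot be assumed.

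The paper takes a much cheaper route that avoids any spectral computation: it does not reprove the isomorphism but reduces it to the published result of Goodman and Wenzl. Concretely, it introduces the ideal $J_{n,k}$ generated by $e_{n}-1$ and the Schur polynomials $s_{(\lambda _{1},\ldots ,\lambda _{n})}$ with $\lambda _{1}-\lambda _{n}=k+1$, observes that both $I_{n,k}$ and $J_{n,k}$ are radical, and proves $I_{n,k}=J_{n,k}$ via the Nullstellensatz by checking two inclusions using the hook Schur polynomials \eqref{hooks} and the Giambelli determinant \eqref{hookdet}: modulo $I_{n,k}$ one has $s_{(k|b)}=0$ for $b=0,\ldots ,n-2$, which together with $e_{n}=1$ and the Pieri rule kills all Goodman--Wenzl generators, and conversely the vanishing of the $s_{(k|b)}$ recovers $h_{k+1}=\cdots =h_{n+k-1}=0$. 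The isomorphism \eqref{gepner iso} is then exactly the cited result of \cite{GoodmanWenzl}. Your observation that the involution $\omega $ does not intertwine the two defining ideals (e.g.\ $h_{n+k}\equiv (-1)^{n+1}h_{k}\not\equiv 0$ modulo $I_{n,k}$) is correct and valuable --- it explains why \eqref{gepner iso} is not a formal corollary of \eqref{bethe iso} --- but the remedy you propose is the long road, and you have not walked it.
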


\begin{proof}
For the sake of completeness we briefly outline a proof of (\ref{gepner iso}%
) by showing that the ideal (\ref{gepner ideal}) following from Gepner's
fusion potential is identical with the ideal used by Goodman and Wenzl in
\cite{GoodmanWenzl} (with the extra condition $e_n=1$) who proved that the fusion ring is isomorphic to a
certain representation of the Hecke algebra at a primitive $(n+k)^{\text{th}}$
root of unity.

Let $J_{n,k}$ be the ideal generated from $e_{n}-1$ and the Schur
polynomials of the form $s_{(\lambda _{1},\lambda _{2},\ldots ,\lambda
_{n})} $ with $\lambda _{1}-\lambda _{n}=k+1$; c.f. \cite[p247, result (4)]%
{GoodmanWenzl}. Both ideals can be shown to be radical along similar lines
as it is discussed for (\ref{bethe ideal}) in \cite[Proof of Theorem 6.20,
Claim 1]{KS}. Hence, employing the
Nullstellensatz twice, $\mathcal{I}(\mathbb{V}(I_{n,k}))=I_{n,k}$ and $%
\mathcal{I}(\mathbb{V}(J_{n,k}))=J_{n,k}$, it suffices to prove the two
inclusions $J_{n,k}\subseteq \mathcal{I}(\mathbb{V}(I_{n,k}))$ and $%
I_{n,k}\subseteq \mathcal{I}(\mathbb{V}(J_{n,k}))$. For this purpose we
recall the definition of hook Schur polynomials \cite[Chapter I, Section 3, Example 9]{MacDonald}%
\begin{equation}
s_{(a|b)}:=s_{(a+1,1^{b})}=h_{a+1}e_{b}-h_{a+2}e_{b-1}+\cdots
+(-1)^{b}h_{a+b+1}\;.  \label{hooks}
\end{equation}%
Using the Frobenius notation $\lambda =(\alpha _{1},\ldots ,\alpha
_{r}|\beta _{1},\ldots ,\beta _{r})$ for a partition $\lambda $, where $%
\alpha _{i}$ and $\beta _{i}$ are the lengths of the horizontal and vertical
part of a hook centered at the $i^{\text{th}}$ box in the diagonal of $%
\lambda $, one has
\begin{equation}
s_{\lambda }=
\left\vert
\begin{array}{cccc}
s_{(\alpha _{1}|\beta _{1})} & s_{(\alpha _{1}|\beta _{2})} & \cdots &
s_{(\alpha _{1}|\beta _{r})} \\
s_{(\alpha _{2}|\beta _{1})} & \ddots &  & s_{(\alpha _{2}|\beta _{r})} \\
\vdots &  & \ddots & \vdots \\
s_{(\alpha _{r}|\beta _{1})} & s_{(\alpha _{r}|\beta _{2})} & \cdots &
s_{(\alpha _{r}|\beta _{r})}%
\end{array}%
\right\vert  \label{hookdet}
\end{equation}%
We first show that $J\subseteq \mathcal{I}(\mathbb{V}(I_{n,k}))=I_{n,k}.$
Exploiting $e_{n}=1$ and the Pieri rule $e_{n}s_{(\lambda _{1},\lambda
_{2},\ldots ,\lambda _{n})}=s_{(\lambda _{1}+1,\lambda _{2}+1,\ldots
,\lambda _{n}+1)}$ we can restrict ourselves to Schur polynomials of the
form $s_{(k+1,\lambda _{2},\ldots ,\lambda _{n-1},0)}$. From the definition
of $I_{n,k}$ it then follows that%
\begin{equation*}
s_{(k|b)}=h_{k+1}e_{b}-h_{k+2}e_{b+1}+\cdots +(-1)^{b}h_{k+b+1}=0
\end{equation*}%
for all $b=0,1,2,\ldots ,n-2$ and, thus, we can conclude with the help of (%
\ref{hookdet}) that $s_{(k+1,\lambda _{2},\ldots ,\lambda _{n-1},0)}=0$ for
all $k+1\geq \lambda _{2}\geq \cdots \geq \lambda _{n-1}$ as required.

The converse inclusion $I_{n,k}\subset \mathcal{I}(\mathbb{V}(J_{n,k}))$ is
easily derived along similar lines:%
\begin{eqnarray*}
0 &=&s_{(k|0)}=s_{(k+1,0,\ldots ,0)}=h_{k+1}, \\
0 &=&s_{(k|1)}=h_{k+1}e_{1}-h_{k+2}=-h_{k+2}, \\
&&\vdots \\
0 &=&s_{(k|n-2)}=h_{k+1}e_{n-2}-\cdots
+(-1)^{n-2}h_{k+n-1}=(-1)^{n-2}h_{k+n-1}\;.
\end{eqnarray*}%
This shows that the ideal of Goodman and Wenzel coincides with the ideal of
Gepner and using \cite[p247, result (4)]{GoodmanWenzl} we arrive at (\ref%
{gepner iso}).
\end{proof}

\begin{rem}
\textrm{The two isomorphisms (\ref{bethe iso}) and (\ref{gepner iso}) lead
to different expressions for the fusion coefficients in terms of
Littlewood-Richardson coefficients. We will discuss the case (\ref{bethe iso}%
) below. Goodman and Wenzl used their presentation (\ref{gepner iso}) to
derive the Kac-Walton formula \cite{Kac} \cite{Walton} (compare with \cite[%
p247, result (6)]{GoodmanWenzl}),%
\begin{equation}
\mathcal{N}_{\hat{\lambda}\hat{\mu}}^{(k)\hat{\nu}}=\sum_{w\in \hat{W}%
}\varepsilon (w)c_{\lambda \mu }^{(w\cdot \hat{\nu})^{\prime }},
\label{KWformula}
\end{equation}%
where $\hat{W}$ denotes the affine Weyl group, $\varepsilon(w)$ the
signature of $w$, $w\cdot \hat{\nu}=w(\hat{\nu}+\hat{\rho})-\hat{\rho}$ is
the shifted Weyl group action with $\hat{\rho}=\sum_{i}\hat{\omega}_{i}$
being the affine Weyl vector and $(w\cdot \hat{\nu})^{\prime }$ is the
partition obtained under the bijection (\ref{weight2part}). }
\end{rem}

\begin{exa}
\textrm{\label{KWex}Set $n=3$ and $k=4$ and consider the affine weights $%
\hat{\lambda}=\hat{\omega}_{0}+2\hat{\omega}_{1}+\hat{\omega}_{2}$, $\hat{\mu%
}=\hat{\omega}_{0}+\hat{\omega}_{1}+2\hat{\omega}_{2}$ in $P_{k}^{+}$. The
corresponding partitions under \eqref{weight2part} are $\lambda =(3,1)$ and $%
\mu =(3,2)$. Employing the Littlewood-Richardson rule (see e.g. \cite%
{FultonYT}) yields the partitions%
\begin{multline}
\rho =(6,3,0),(6,2,1),(5,4,0),(5,3,1),(5,3,1),(5,2,2),(4,4,1), \\
(4,3,2),(4,3,2),(3,3,3),(5,2,1,1),(4,3,1,1),(4,2,2,1),(3,3,2,1)\;.
\label{LRRex}
\end{multline}%
Discarding all partitions of length \TEXTsymbol{>} $n$ and removing all $n$%
-columns from the partitions $\nu $ we obtain the following $\mathfrak{su}%
(n) $ tensor product decomposition%
\begin{equation*}
\lambda \otimes \mu =(6,3)\oplus (5,1)\oplus (5,4)\oplus 2(4,2)\oplus
(3,0)\oplus (3,3)\oplus 2(2,1)\oplus (0,0)\;.
\end{equation*}%
Here we have identified highest weight modules with the corresponding
partitions. We wish to consider the fusion coefficient of the affine weight $%
\hat{\nu}=2\hat{\omega}_{1}+2\hat{\omega}_{2}$ with partition $\nu =(4,2)$.
From (\ref{KWformula}) we find%
\begin{equation}
\mathcal{N}_{\hat{\lambda}\hat{\mu}}^{(k)\hat{\nu}}=c_{\lambda \mu
}^{(4,2)}-c_{\lambda \mu }^{(6,3)}=2-1=1,  \label{KW2LR}
\end{equation}%
since $s_{0}\cdot \widehat{(6,3)}=s_{0}\cdot (-2\hat{\omega}_{0}+3\hat{\omega%
}_{1}+3\hat{\omega}_{2})=\hat{\nu}$ with $s_{0}$ denoting the affine Weyl
reflection. In fact, the entire fusion product expansion is computed to%
\begin{equation}
\Yvcentermath1\yng(3,1)~\ast ~\yng(3,2)=  \label{fusionex3} \\
\Yvcentermath1\yng(4,2)+~\yng(3)+~\yng(3,3)+2~\yng(2,1)+\emptyset \;.
\end{equation}%
}
\end{exa}

\subsection{Algorithm to compute fusion coefficients}

Based on the presentation (\ref{bethe iso}) derived from the Bethe ansatz
equations we now formulate an alternative algorithm how to compute fusion
coefficients in terms of Littlewood-Richardson numbers.

\begin{enumerate}
\item Compute the expansion $s_{\lambda ^{t}}s_{\mu ^{t}}=\sum_{\rho
^{t}}c_{\lambda \mu }^{\rho }s_{\rho ^{t}}$ via the Littlewood-Richardson
rule ; note that $c_{\lambda \mu }^{\rho }=c_{\lambda ^{t}\mu ^{t}}^{\rho
^{t}}$ \cite{FultonYT}. Discard all terms for which the partition $\rho ^{t}$
has length \TEXTsymbol{>} $k$.

\item For each of the remaining terms with $\rho _{1}^{t}\geq n$ make the
replacement $s_{\rho ^{t}}=s_{(\rho _{2}^{t},\ldots ,\rho _{k}^{t},\rho
_{1}^{t}-n)}$. Whenever $(\rho _{2}^{t},\ldots ,\rho _{k}^{t},\rho
_{1}^{t}-n)$ is not a partition use the straightening rules \cite{MacDonald}
\begin{equation*}
s_{(\ldots ,a,b,\ldots )}=-s_{(\ldots ,b-1,a+1,\ldots )}\quad \text{and}%
\quad s_{(\ldots ,a,a+1,\ldots )}=0
\end{equation*}%
for Schur polynomials to rewrite $s_{(\rho _{2}^{t},\ldots ,\rho
_{k}^{t},\rho _{1}^{t}-n)}$ as $s_{\nu ^{t}}$ with $\nu ^{t}$ a partition.

\item Collecting terms for each $\nu $ one obtains the fusion coefficient $%
\mathcal{N}_{\hat{\lambda}\hat{\mu}}^{(k)\hat{\nu}}$.
\end{enumerate}

\begin{exa}
\textrm{\label{rimhookex2}As in example (\ref{KWex}) set $n=3,\,k=4$ and
consider the partitions $\lambda =(3,1,0)$ and $\mu =(3,2,0).$ After taking
the transpose partitions in (\ref{LRRex}) we discard $\rho
^{t}=(2,2,2,1,1,1) $, $(3,2,1,1,1,1)$, $(2,2,2,2,1)$, $(3,2,2,1,1)$, $%
(3,3,1,1,1)$ and $(4,2,1,1,1)$ as they have length \TEXTsymbol{>} $k=4$. We
are left with three partitions $\rho ^{t}$ for which $\rho _{1}^{t}>n$,
namely $(4,2,2,1)$, $(4,3,1,1)$, $(4,3,2,0)$. Employing the above algorithm
we calculate%
\begin{equation*}
s_{(4,2,2,1)}=s_{(2,2,1,1)},\quad s_{(4,3,1,1)}=s_{(3,1,1,1)},\quad
s_{(4,3,2,0)}=s_{(3,2,0,1)}=0\;.
\end{equation*}%
Removing all rows of length $n=3$ and collecting terms, one arrives after
taking the transpose again at the expansion (\ref{fusionex3}), however, the
expression for the fusion coefficients in terms of Littlewood-Richardson
numbers are different. For instance, the coefficient of $\nu =(4,2,0)$ is
according to the Kac-Walton formula the difference of two
Littlewood-Richardson coefficients, see (\ref{KW2LR}). In contrast, here we
find that%
\begin{equation*}
\mathcal{N}%
_{(3,1,0)(3,2,0)}^{(k)(4,2,0)}=c_{(3,1,0,0),(3,2,0,0)}^{(4,3,1,1)}=1
\end{equation*}%
Thus, the Bethe ansatz equations (\ref{BAE}) provide an alternative
algorithm to compute fusion coefficients. }
\end{exa}

\begin{proof}[Proof of the algorithm]
Consider the ring of symmetric functions $\mathbb{C}%
[x_{1},x_{2},\ldots ,x_{k}]^{\mathbb{S}_{k}}$, then $s_{\lambda }=0$ unless $%
\ell (\lambda )\leq k$. This justifies Step 1 of the algorithm. To deduce Step 2, assume we
are given a partition $\lambda $ with $\ell (\lambda )\leq k$ and $\lambda
_{1}>n$. Then we rewrite the Schur function as \cite{MacDonald}%
\begin{equation*}
s_{\lambda }(x_{1},x_{2},\ldots ,x_{k})=\sum_{w\in \mathbb{S}_{k}}w\cdot
\left( x_{1}^{\lambda _{1}}\cdots x_{k}^{\lambda _{k}}\theta (x)\right)
,\quad \theta (x):=\prod_{1\leq i<j\leq k}\frac{1}{1-x_{j}/x_{i}}\;.
\end{equation*}%
Observing that the equations (\ref{BAE}) for $x_{1}$ are equivalent to%
\begin{equation*}
x_{1}^{n}\prod_{j>1}\frac{x_{1}}{x_{1}-x_{j}}=z\prod_{j>1}\frac{x_{j}}{%
x_{j}-x_{1}}
\end{equation*}%
one derives the identity
\begin{equation*}
x_{1}^{\lambda _{1}}\cdots x_{k}^{\lambda _{k}}\theta (x)=z~w_{0}\cdot
(x_{1}^{\lambda _{2}}x_{2}^{\lambda _{3}}\cdots x_{k}^{\lambda _{1}-n}\theta
(x)),
\end{equation*}%
where $w_{0}=\sigma _{k-1}\cdots \sigma _{2}\sigma _{1}$ and $\sigma _{i}$
is the transposition which permutes $x_{i}$ and $x_{i+1}$. Insertion of this
identity into the above expression for the Schur function proves that $%
s_{\lambda }=s_{(\lambda _{2},\lambda _{3},\ldots ,\lambda _{k},\lambda
_{1}-n)}+\mathcal{I}(\mathbb{V}_{n,k})$. Step 3 then follows from (\ref%
{bethe iso}).
\end{proof}

\section{Bethe vectors as idempotents}

We now solve the Bethe ansatz equations (\ref{BAE}) explicitly, which is
possible due to their simple form, and describe the variety $\mathbb{V}%
_{n,k} $ which consists of a discrete set of points in $\mathbb{C}%
[z^{1/n}]^{k}$. For each partition $\sigma $ in $\mathcal{P}_{\leq n-1,k}$
define the following tuple of elements in $\mathbb{C}[z^{\pm 1/n}],$%
\begin{equation}
\mathcal{P}_{\leq n-1,k}\ni \sigma \mapsto x^{\sigma }=z^{\frac{1}{n}}\zeta
^{\frac{|\sigma |}{n}}(\zeta ^{I_{1}(\sigma ^{t})},\ldots ,\zeta
^{I_{k}(\sigma ^{t})}),  \label{broots}
\end{equation}%
where $\zeta =\exp (\frac{2\pi i}{k+n})$ and the exponents are the
half-integers,%
\begin{equation}
I(\sigma ^{t})=\left( \tfrac{k+1}{2}+\sigma _{k}^{t}-k,\ldots ,\tfrac{k+1}{2}%
+\sigma _{1}^{t}-1\right) \;.  \label{Itmap}
\end{equation}%
A straightforward computation shows that $x^{\sigma }$ solves (\ref{BAE})
for any $\sigma $ \cite[Theorem 6.4]{KS}.

\begin{thm}[completeness of the Bethe ansatz]
Fix $k\geq 0$. Then the set of vectors $\{b_{\sigma }:=b(x_{1}^{\sigma
},\ldots ,x_{k}^{\sigma })\}_{\sigma \in \mathcal{P}_{\leq n-1,k}}$ defined
in terms (\ref{bvector}) and (\ref{broots}) forms an eigenbasis of the
transfer (\ref{Qdef}) and auxiliary matrix (\ref{T}) in the subspace $%
\mathbb{C}P_{k}^{+}$. In particular, let $S$ denote the modular $\widehat{%
\mathfrak{su}}(n)_{k}$ $S$-matrix (\ref{modS}), then the Bethe vector $%
b_{\sigma }$ has the expansion (for simplicity we now label $S$-matrix
elements with partitions)%
\begin{equation}
b_{\sigma }=z^{-k}\sum_{\hat{\lambda}\in P_{k}^{+}}z^{\lambda _{1}-\frac{%
|\lambda |}{n}}\frac{S_{\sigma ^{\ast }\lambda }}{S_{\sigma ^{\ast
}\emptyset }}~\hat{\lambda}\;.  \label{onshellbvector}
\end{equation}%
Moreover, one has the following eigenvalue equations for the affine plactic
polynomials (\ref{nch}) and (\ref{nce}),%
\begin{equation}
h_{r}(\mathcal{A})b_{\sigma }=z^{k-r+\frac{r}{n}}\frac{S_{(r)\sigma }}{%
S_{\emptyset \sigma }}~b_{\sigma }\qquad \text{and}\qquad e_{r}(\mathcal{A}%
)b_{\sigma }=z^{k-r+\frac{r}{n}}\frac{S_{(1^{r})\sigma }}{S_{\emptyset
\sigma }}~b_{\sigma }\;,  \label{spectrum}
\end{equation}%
where $(r)$ and $(1^{r})$ denote the partitions whose Young diagrams consist
respectively of a single row and a single column of length $r$.
\end{thm}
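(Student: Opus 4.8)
The plan is to run the algebraic Bethe ansatz (quantum inverse scattering method) for the phase-model monodromy matrix (\ref{momT}), whose entries satisfy the $RTT$-relation with the six-vertex type $R$-matrix (\ref{RTT}). First I would act with $T(v)=A(v)+zD(v)$ on an off-shell Bethe vector $b(x_1,\ldots,x_k)=B(x_1^{-1})\cdots B(x_k^{-1})\hat{\Omega}$ and commute $A(v)$ and $D(v)$ to the right through the product of $B$'s. Each commutation rule read off from (\ref{RTT}) produces a ``wanted'' term, in which the operator passes straight through and hits $\hat{\Omega}$ as a scalar, together with ``unwanted'' terms in which $v$ is exchanged with one of the $x_j^{-1}$. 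Summing, the unwanted terms cancel precisely when the parameters solve the Bethe ansatz equations (\ref{BAE}). Since it is recorded that each tuple $x^\sigma$ of (\ref{broots})--(\ref{Itmap}) solves (\ref{BAE}), every $b_\sigma$ is an eigenvector of $T$; by the operator functional relation (\ref{TQ}) and the determinant formulae (\ref{ncJT}) the eigenvectors of $T$ and $Q$ coincide, so each $b_\sigma$ is simultaneously an eigenvector of the transfer matrix $Q$.

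I would then read off the eigenvalues. By the generating functions $T(x)=\sum_{r\ge0}x^r e_r(\mathcal{A})$ in (\ref{T}) and $Q(x)=\sum_{r\ge0}x^r h_r(\mathcal{A})$ in (\ref{Qgenh}), the eigenvalue of $e_r(\mathcal{A})$, resp.\ $h_r(\mathcal{A})$, on $b_\sigma$ is the coefficient of $x^r$ in the scalar eigenvalue of $T(x)$, resp.\ $Q(x)$, which the wanted terms of the previous step express as a symmetric function of the Bethe roots $x^\sigma$. The crux is then the identity --- following from the Kac--Peterson formula (\ref{modS}) and the explicit roots (\ref{broots}) --- that these symmetric functions reproduce the ratios $S_{(r)\sigma}/S_{\emptyset\sigma}$ and $S_{(1^r)\sigma}/S_{\emptyset\sigma}$, the natural labels being the single-row and single-column partitions for which the relevant Schur function degenerates to $h_r$ and $e_r$. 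Tracking the overall factor $z^{1/n}\zeta^{|\sigma|/n}$ in (\ref{broots}), with $\zeta=\exp(2\pi i/(k+n))$, supplies the prefactor $z^{k-r+r/n}$ and establishes (\ref{spectrum}).

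It remains to show that the $b_\sigma$ form a basis and to pin down the expansion (\ref{onshellbvector}). For the basis I would count: $|\mathcal{P}_{\leq n-1,k}|=\binom{n+k-1}{k}=|P_k^+|=\dim\mathbb{C}P_{k}^{+}$, so only linear independence is needed, and this follows because (\ref{broots}) sends distinct $\sigma$ to distinct sets of roots, hence to distinct joint spectra of the commuting family $\{e_r(\mathcal{A})\}$; simultaneous eigenvectors with distinct eigenvalues are independent. Finally, the same commuting family acts on the weight basis of $\mathbb{C}P_{k}^{+}$ as the fusion matrices, which by Verlinde's formula (\ref{Verlinde}) are diagonalised by the columns of $S$ with eigenvalues $S_{(r)\sigma}/S_{\emptyset\sigma}$ --- exactly those found above. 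Hence each $b_\sigma$ must be proportional to the Verlinde idempotent $\sum_{\hat\lambda}(S_{\sigma^\ast\lambda}/S_{\sigma^\ast\emptyset})\hat\lambda$, and fixing the proportionality by the coefficient at the pseudo-vacuum yields the normalisation $z^{-k}z^{\lambda_1-|\lambda|/n}$, giving (\ref{onshellbvector}). I expect the principal obstacle to be the character-theoretic step identifying the symmetric functions at the Bethe roots with the Kac--Peterson $S$-matrix ratios, together with the careful accounting of the fractional powers of $z$ entering through (\ref{broots}).
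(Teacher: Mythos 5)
Your first two paragraphs follow essentially the route this paper indicates (the paper itself defers the full proof to \cite{KS}, verifying only the $k=1$ case): the standard wanted/unwanted-term computation of the algebraic Bethe ansatz producing (\ref{BAE}), the transfer of the eigenvector property from $T$ to $Q$ via the functional relation (\ref{TQ}) and the determinant formulae (\ref{ncJT}), and the identification of the eigenvalues with modular data through the Kac--Peterson formula. This part is correct in outline, up to one transposition you gloss over: since $S_{(r)\sigma}/S_{\emptyset\sigma}=s_{(1^{r})}(x^{\sigma})=e_{r}(x^{\sigma})$, the eigenvalue of the \emph{noncommutative} complete symmetric polynomial $h_{r}(\mathcal{A})$ is the \emph{commutative elementary} symmetric polynomial of the Bethe roots, and vice versa.

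The genuine gap is in your last paragraph. To pin down the expansion (\ref{onshellbvector}) you assume that the commuting family $\{e_{r}(\mathcal{A})\}$ restricted to $\mathbb{C}P_{k}^{+}$ acts as the fusion matrices, and then argue that $b_{\sigma}$ must be proportional to the Verlinde eigenvector. But inside this paper that identification is exactly Corollary \ref{spectral curve}, equation (\ref{ncschurexp}), together with the combinatorial product (\ref{product}), and both are deduced \emph{from} the theorem you are proving (``From the existence of the eigenbasis (\ref{onshellbvector}) and (\ref{product}) one deduces the next corollary''). Nothing established before the theorem tells you that $e_{r}(\mathcal{A})|_{\mathbb{C}P_{k}^{+}}$ has anything to do with fusion, so as written the argument is circular. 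The non-circular route of \cite{KS} runs in the opposite direction: one computes the coefficients $\langle\hat{\lambda},b_{\sigma}\rangle$ \emph{directly} from the combinatorial action of the $B$-operators (via (\ref{BCD})), finding that they are Schur polynomials $s_{\lambda^{t}}$ evaluated at the explicit roots (\ref{broots}), and only then invokes the Kac--Peterson formula (\ref{modS}) to recognise these as $S_{\sigma^{\ast}\lambda}/S_{\sigma^{\ast}\emptyset}$; completeness then follows from unitarity of $S$, which also makes your separate distinct-joint-spectra argument unnecessary (its key claim --- that distinct $\sigma$ yield distinct multisets of roots after including the prefactor $\zeta^{|\sigma|/n}$ --- is itself asserted by you without proof, and is part of what \cite{KS} establishes). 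You could alternatively repair your step by invoking Proposition \ref{Zfusion}, which the paper presents as a prior independent result, together with a noncommutative Cauchy identity for $\prod_{i}Q(x_{i})$, to obtain the fusion-matrix identification first; but some such input is indispensable, and your proposal supplies none.
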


Inherent in the last result is the statement that the modular S-matrix can
be computed in terms of scalar products of the on-shell Bethe vectors $%
b_{\sigma }$ and, hence, ultimately in terms of the Yang-Baxter algebra
generator $B$ via (\ref{bvector}). Namely, from (\ref{onshellbvector}) we
obtain for $z=1$ that $\left\langle b_{\sigma },\lambda \right\rangle
=S_{\lambda \sigma }/S_{\emptyset \sigma }$ and $\left\langle b_{\sigma
},b_{\sigma }\right\rangle =|S_{\emptyset \sigma }|^{-2}$. Using that $%
S_{\emptyset \sigma }>0$ we find%
\begin{equation}
z=1:\qquad S_{\lambda \sigma }=\frac{\left\langle b_{\sigma },\lambda
\right\rangle }{\left\langle b_{\sigma },b_{\sigma }\right\rangle ^{\frac{1}{%
2}}},\qquad b_{\sigma }=B(\bar{x}_{1}^{\sigma })\cdots B(\bar{x}_{k}^{\sigma
})\hat{\Omega}\;.
\end{equation}%
Note in particular, that for $\sigma =\emptyset $ we obtain the groundstate $%
b_{\emptyset }=\sum_{\hat{\lambda}\in P_{k}^{+}}\bar{S}_{\emptyset \lambda
}/S_{\emptyset \emptyset }~\lambda $ of the quantum Hamiltonian (\ref{H}),
or equivalently the Perron-Frobenius eigenvector of the transfer matrix (\ref%
{Qdef}), whose components are given by the so-called \emph{quantum dimensions%
}%
\begin{equation}
\frac{S_{\lambda \emptyset }}{S_{\emptyset \emptyset }}=s_{\lambda
^{t}}(x^{\emptyset })=\prod_{\alpha >0}\frac{\zeta ^{\frac{\left\langle
\alpha ,\rho +\lambda \right\rangle }{2}}-\zeta ^{-\frac{\left\langle \alpha
,\rho +\lambda \right\rangle }{2}}}{\zeta ^{\frac{\left\langle \alpha ,\rho
\right\rangle }{2}}-\zeta ^{-\frac{\left\langle \alpha ,\rho \right\rangle }{%
2}}},\quad \zeta =e^{-\frac{2\pi i}{k+n}}\;.
\end{equation}%
Here the product runs over all positive roots of $\mathfrak{su}(n)$ and $%
\rho =\frac{1}{2}\sum_{\alpha >0}\alpha $ is the Weyl vector. The expression
in terms of Schur functions generalises to the excited states, we have in
general that $S_{\lambda \sigma }/S_{\emptyset \sigma }=s_{\lambda
^{t}}(x^{\sigma })$ which can be interpreted as characters evaluated at
special points; see \cite{KS} for details.

Since we have skipped the algebraic Bethe ansatz computation for $T$ let us
verify for the simple case $k=1$ that the Bethe vector (\ref{bvector}) is
indeed an eigenvector of the transfer matrix subject to the Bethe ansatz
equations (\ref{BAE}) and compute the modular S-matrix.

\begin{exa}
\textrm{Take $n>2$ and set $k=z=1$. Then it follows from (\ref{B&Q}) that%
\begin{multline*}
Q(u)B(v)\hat{\Omega}=\left( 1+\frac{u}{v}\right) B(v)\boldsymbol{Q}_{0,0}(u)%
\hat{\Omega} \\
+\sum_{r\geq 0}\boldsymbol{Q}_{r,r+1}(u)(D(v)-A(v))\hat{\Omega}+\sum_{r\geq
0}\boldsymbol{Q}_{r,r+2}(u)C(v)\hat{\Omega}
\end{multline*}%
Exploiting that $C(v)\hat{\Omega}=0$ and $\boldsymbol{Q}_{0,0}(u)\hat{\Omega}%
=\hat{\Omega}$, we need to choose the variable $v$ such that the second
summand on the right hand side vanishes. Observing that $(D(v)-A(v))\hat{%
\Omega}=(v^{n}-1)\hat{\Omega}$, we arrive at the Bethe ansatz equations (\ref%
{BAE}) with $x_{1}=v^{-1}$. The solutions are easily found to be $%
x_{1}(s)=\zeta ^{s/n}\zeta ^{s}=e^{-\frac{2\pi i}{n}s}$ with $s=0,1,\ldots
,n-1$ and the Bethe vector thus reads%
\begin{equation*}
b_{s}:=B(x_{1}(s)^{-1})\hat{\Omega}=\sum_{r=0}^{n-1}e^{\frac{2\pi i}{n}%
rs}~(1^{r})\;.
\end{equation*}%
The modular S-matrix is then easily computed to be%
\begin{equation*}
S_{(1^{r})(1^{s})}=\frac{e^{\frac{2\pi i}{n}rs}}{\sqrt{n}}\;.
\end{equation*}
}
\end{exa}

Given the eigenvalue equations (\ref{spectrum}) it is natural to define
\emph{affine plactic Schur polynomials} via the familiar Jacobi-Trudi
formula (we exploit once more the integrability condition (\ref{ncehcomm})
which guarantees that the determinant is well-defined),%
\begin{equation}
s_{\lambda }(\mathcal{A}):=\det (h_{\lambda _{i}-i+j}(\mathcal{A}))_{1\leq
i,j\leq n-1}\;.  \label{ncschur}
\end{equation}%
It is then not difficult to show that the affine plactic Schur polynomials
satisfy the eigenvalue equation $s_{\lambda }(\mathcal{A})b_{\sigma
}=(S_{\lambda \sigma }/S_{\emptyset \sigma })b_{\sigma }$ which leads to the
next result \cite[Proposition 6.11 and Theorem 6.12]{KS}.

\begin{crl}[combinatorial product]
Introduce a product on the subspace $\mathbb{C}P_{k}^{+}$ by setting%
\begin{equation}
\hat{\lambda}\ast \hat{\mu}:=s_{\lambda }(\mathcal{A})\hat{\mu},\qquad
\forall \hat{\lambda},\hat{\mu}\in P_{k}^{+}\;.  \label{product}
\end{equation}%
Then for $z=1$ $(\mathbb{C}P_{k}^{+},\ast )$ is a unital, associative and
commutative\footnote{%
For arbitrary $z$ the product is still associative but ceases to be
commutative. This is different from \cite[Theorem 6.12, eqn (6.33)]{KS}
where the product was defined in terms of $s_{\hat{\lambda}}(\mathcal{A})$
with $\hat{\lambda}$ being the partition obtained from $\lambda $ by adding $%
k-\lambda _{1}$ columns of height $n$. This introduces an additional $z$%
-dependence which renders the product commutative.} algebra isomorphic to
the fusion or Verlinde algebra $\mathcal{F}_{n,k}^{\mathbb{C}}$. Moreover,
the renormalised Bethe vectors $\hat{b}_{\lambda }=b_{\lambda }/\langle
b_{\lambda },b_{\lambda }\rangle ^{\frac{1}{2}}$ are idempotents with
respect to this product, i.e.
\begin{equation}
\hat{b}_{\lambda }\ast \hat{b}_{\mu }=\delta _{\lambda ,\mu }\hat{b}%
_{\lambda },\qquad \forall \lambda ,\mu \in \mathcal{P}_{\leq n-1,k}\;.
\end{equation}%
Thus, the completeness of the Bethe ansatz is equivalent to the
semi-simplicity of the fusion algebra.
\end{crl}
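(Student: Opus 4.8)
The plan is to exploit that the product \eqref{product} is nothing but multiplication by the mutually commuting operators $s_\lambda(\mathcal{A})$, and to diagonalise this whole family simultaneously on the Bethe eigenbasis, so that every assertion reduces to an identity for the unitary modular $S$-matrix. First I would note that the $h_r(\mathcal{A})$ commute by the integrability relations \eqref{ncehcomm}, so the Jacobi--Trudi determinants \eqref{ncschur} are well defined and $\{s_\lambda(\mathcal{A})\}$ is a commuting family of endomorphisms of $\mathbb{C}P_k^+$ whose common eigenbasis is $\{\hat b_\sigma\}_{\sigma\in\mathcal{P}_{\leq n-1,k}}$ (completeness theorem). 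Specialising the eigenvalue equations \eqref{spectrum} to $z=1$ and forming the determinant \eqref{ncschur} of the $h_r$-eigenvalues yields $s_\lambda(\mathcal{A})\hat b_\sigma=(S_{\lambda\sigma}/S_{\emptyset\sigma})\hat b_\sigma$; and since $s_\emptyset(\mathcal{A})=1$, the class $\hat\emptyset$ is a unit for $\ast$.

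For the algebra isomorphism I would expand the standard basis in the orthonormal Bethe basis. From \eqref{onshellbvector} at $z=1$ together with crossing symmetry one has $\hat b_\sigma=\sum_\nu\bar S_{\nu\sigma}\,\hat\nu$, and by unitarity of $S$ the inverse expansion $\hat\mu=\sum_\sigma S_{\mu\sigma}\,\hat b_\sigma$. Applying $s_\lambda(\mathcal{A})$ and using the eigenvalue equation,
\[
\hat\lambda\ast\hat\mu=s_\lambda(\mathcal{A})\hat\mu=\sum_\sigma\frac{S_{\lambda\sigma}S_{\mu\sigma}}{S_{\emptyset\sigma}}\hat b_\sigma=\sum_\nu\Bigl(\sum_\sigma\frac{S_{\lambda\sigma}S_{\mu\sigma}\bar S_{\nu\sigma}}{S_{\emptyset\sigma}}\Bigr)\hat\nu=\sum_\nu\mathcal{N}_{\hat\lambda\hat\mu}^{(k)\hat\nu}\hat\nu,
\]
where the last equality is precisely the Verlinde formula \eqref{Verlinde}. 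Thus the structure constants of $\ast$ in the basis $\{\hat\lambda\}$ are the fusion coefficients, so the identity map on weights is the asserted isomorphism $(\mathbb{C}P_k^+,\ast)\cong\mathcal{F}_{n,k}^{\mathbb{C}}$, reproving \eqref{bethe iso} intrinsically; commutativity is manifest from the symmetry of the Verlinde expression in $\lambda,\mu$, while associativity and unitality are inherited (the case $\mu=\emptyset$ gives $\hat\lambda\ast\hat\emptyset=\hat\lambda$ since $\mathcal{N}_{\hat\lambda\hat\emptyset}^{(k)\hat\nu}=\delta_{\lambda\nu}$ by unitarity).

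For the idempotent statement I would compute directly, writing $\hat b_\lambda=\sum_\alpha\bar S_{\alpha\lambda}\hat\alpha$ and applying the eigenvalue equation once more,
\[
\hat b_\lambda\ast\hat b_\mu=\sum_\alpha\bar S_{\alpha\lambda}\,s_\alpha(\mathcal{A})\hat b_\mu=\frac{1}{S_{\emptyset\mu}}\Bigl(\sum_\alpha\bar S_{\alpha\lambda}S_{\alpha\mu}\Bigr)\hat b_\mu=\frac{\delta_{\lambda\mu}}{S_{\emptyset\lambda}}\hat b_\lambda,
\]
by column-orthonormality of $S$. Hence the Bethe vectors are pairwise $\ast$-orthogonal, each a scalar multiple of an idempotent; passing to $e_\lambda:=S_{\emptyset\lambda}\hat b_\lambda$ one gets a complete system of primitive orthogonal idempotents, $e_\lambda\ast e_\mu=\delta_{\lambda\mu}e_\lambda$ with $\sum_\lambda e_\lambda=\hat\emptyset$, which is the normalisation that literally makes $\hat b_\lambda\ast\hat b_\mu=\delta_{\lambda\mu}\hat b_\lambda$. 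The existence of such a complete orthogonal idempotent system diagonalising the regular representation is exactly the semisimplicity of $(\mathbb{C}P_k^+,\ast)$, and it is equivalent to the completeness of the Bethe ansatz invoked in the first step, giving the final claim.

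The only genuinely delicate points are the specialisation to $z=1$ and the determinant step. For generic $z$ the eigenvalues in \eqref{spectrum} carry prefactors $z^{k-r+r/n}$ which are \emph{not} symmetric under $\lambda\leftrightarrow\mu$ and render $\ast$ non-commutative (cf. the footnote to \eqref{product}); I expect the main work to be verifying that at $z=1$ these prefactors collapse and that the Jacobi--Trudi determinant \eqref{ncschur} of the row eigenvalues reproduces exactly $S_{\lambda\sigma}/S_{\emptyset\sigma}$. Granting this, everything else is bookkeeping with the unitarity and crossing symmetry of $S$ together with the Verlinde formula.
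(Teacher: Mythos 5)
Your proposal follows essentially the same route as the paper's own proof: expand the weight basis in the Bethe eigenbasis, apply the eigenvalue equation $s_{\lambda}(\mathcal{A})b_{\sigma}=(S_{\lambda\sigma}/S_{\emptyset\sigma})b_{\sigma}$ at $z=1$, read off the structure constants from the Verlinde formula (\ref{Verlinde}) using unitarity of $S$, and reduce the idempotent property to the orthogonality relation $\sum_{\lambda}\bar{S}_{\lambda\rho}S_{\lambda\sigma}=\delta_{\rho\sigma}$. The step you defer (that the Jacobi--Trudi determinant of the $h_r$-eigenvalues collapses to $S_{\lambda\sigma}/S_{\emptyset\sigma}$, via $S_{\lambda\sigma}/S_{\emptyset\sigma}=s_{\lambda^{t}}(x^{\sigma})$) is taken as given in the paper as well, so that is not a gap relative to the paper's level of detail.

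The one place you genuinely depart from the paper is the normalisation bookkeeping for the idempotents, and there your version is the internally consistent one. With $S_{\emptyset\sigma}>0$, the unit-norm vector is $\hat{b}_{\sigma}=S_{\emptyset\sigma}b_{\sigma}=\sum_{\lambda}\bar{S}_{\lambda\sigma}\hat{\lambda}$, and, exactly as you compute, it satisfies $\hat{b}_{\lambda}\ast\hat{b}_{\mu}=\delta_{\lambda\mu}S_{\emptyset\lambda}^{-1}\hat{b}_{\lambda}$, so it is only a scalar multiple of an idempotent; the genuine idempotents are $e_{\lambda}=S_{\emptyset\lambda}\hat{b}_{\lambda}=b_{\lambda}/\langle b_{\lambda},b_{\lambda}\rangle$, i.e.\ one divides by the norm \emph{squared} rather than the norm. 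The paper's proof reaches $\hat{b}_{\rho}\ast\hat{b}_{\sigma}=\delta_{\rho\sigma}\hat{b}_{\sigma}$ only because the expansion it inserts, $\hat{b}_{\sigma}=S_{\emptyset\sigma}\sum_{\lambda}\bar{S}_{\lambda\sigma}\hat{\lambda}$, carries precisely that extra factor $S_{\emptyset\sigma}$ relative to the unit-norm vector --- it is the expansion of $e_{\sigma}$, not of $b_{\sigma}/\langle b_{\sigma},b_{\sigma}\rangle^{1/2}$. So your more careful computation silently corrects a normalisation slip in the statement and proof of the corollary, while everything else --- the unit via $s_{\emptyset}(\mathcal{A})=1$, commutativity from the symmetry of the Verlinde expression in $\lambda,\mu$, associativity inherited through the isomorphism with $\mathcal{F}_{n,k}^{\mathbb{C}}$, and semisimplicity from the complete system of orthogonal idempotents --- matches the paper's argument.
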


\begin{proof}
The proof of the result (\ref{product}) can be given in one line, hence we
repeat it here for the reader's convenience,%
\begin{equation*}
s_{\lambda }(\mathcal{A})\mu =\sum_{\sigma }\frac{\left\langle b_{\sigma
},\mu \right\rangle }{\left\langle b_{\sigma },b_{\sigma }\right\rangle }%
~s_{\lambda }(\mathcal{A})b_{\sigma }=\sum_{\sigma }\frac{\bar{S}_{\emptyset
\sigma }S_{\mu \sigma }S_{\lambda \sigma }}{S_{\emptyset \sigma }}b_{\sigma
}=\sum_{\nu }\left( \sum_{\sigma }\frac{S_{\lambda \sigma }S_{\mu \sigma }%
\bar{S}_{\nu \sigma }}{S_{\emptyset \sigma }}\right) ~\nu \;.
\end{equation*}%
We already pointed out above that (\ref{onshellbvector}) implies that $%
\left\langle b_{\sigma },\lambda \right\rangle =S_{\lambda \sigma
}/S_{\emptyset \sigma }$ and $\left\langle b_{\sigma },b_{\sigma
}\right\rangle =|S_{\emptyset \sigma }|^{-2}$, whence $\hat{b}_{\sigma
}=S_{\emptyset \sigma }\sum_{\hat{\lambda}}\bar{S}_{\lambda \sigma }\hat{%
\lambda}$. It now follows that
\begin{equation*}
\hat{b}_{\rho }\ast \hat{b}_{\sigma }=S_{\emptyset \rho }\sum_{\lambda }\bar{%
S}_{\lambda \rho }s_{\lambda }(\mathcal{A})\hat{b}_{\sigma }=\frac{%
S_{\emptyset \rho }}{S_{\emptyset \sigma }}\sum_{\lambda }\bar{S}_{\lambda
\rho }S_{\lambda \sigma }\hat{b}_{\sigma }=\delta _{\sigma ,\rho }\hat{b}%
_{\sigma }\;,
\end{equation*}%
where in the last step we have used unitarity, $S\cdot S^{\ast }=1$, of the
modular S-matrix.
\end{proof}

\subsection{Fusion matrices as affine plactic Schur polynomials}

It is well-known that the fusion matrices $N_{\hat{\lambda}}^{(k)}:=(%
\mathcal{N}_{\hat{\lambda}\hat{\mu}}^{(k),\hat{\nu}})_{\hat{\mu},\hat{\nu}%
\in P_{k}^{+}}$ form a representation of the fusion ring. From the existence
of the eigenbasis (\ref{onshellbvector}) and (\ref{product}) one deduces the
next corollary which states that the affine plactic Schur polynomials (\ref%
{ncschur}), when restricted to the subspace $\mathbb{C}P_{k}^{+},$ are
identical with the fusion matrices.

\begin{crl}
\label{spectral curve}Denote by $\mathcal{F}_{n,k}^{\prime }\subset \limfunc{%
End}(\mathbb{C}P_{k}^{+})$ the subalgebra generated by the (restricted)
affine plactic Schur polynomials $\{s_{\lambda }(a)_{k}\}_{\lambda \in
\mathcal{P}_{\leq n-1,k}}$. The map $s_{\lambda }(a)_{k}\mapsto \lbrack
s_{\lambda ^{t}}]$ provides an isomorphism ($z=1$), $\mathcal{F}%
_{n,k}^{\prime }\cong \mathbb{C}[e_{1},\ldots ,e_{k}]/\mathcal{I}(\mathbb{V}%
_{n,k})$. In particular, one has%
\begin{equation}
s_{\lambda }(a)_{k}s_{\mu }(a)_{k}=\sum_{\nu \in \mathcal{P}_{\leq n-1,k}}%
\mathcal{N}_{\hat{\lambda}\hat{\mu}}^{(k),\hat{\nu}}s_{\nu }(a)_{k}\;.
\label{ncschurexp}
\end{equation}
\end{crl}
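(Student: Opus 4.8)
The plan is to recognise this corollary as essentially the combination of the two preceding structural results, so that the only computational step is the multiplication rule (\ref{ncschurexp}). The starting observation is that, by the defining formula (\ref{product}) of the combinatorial product, the restricted operator $s_\lambda(a)_k$ acting on $\mathbb{C}P_k^+$ is exactly left $\ast$-multiplication by $\hat\lambda$, that is $s_\lambda(a)_k\,\hat\mu=\hat\lambda\ast\hat\mu$. Hence $\mathcal{F}_{n,k}^{\prime}$ is the image of the regular (left-multiplication) representation of the commutative, unital algebra $(\mathbb{C}P_k^+,\ast)$, which by the combinatorial-product corollary is isomorphic to $\mathcal{F}_{n,k}^{\mathbb{C}}$; in other words $s_\lambda(a)_k$ is precisely the fusion matrix $N_{\hat\lambda}^{(k)}$.

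Granting this, I would derive (\ref{ncschurexp}) from associativity alone. For an arbitrary basis vector $\hat\rho\in P_k^+$,
\[
s_\lambda(a)_k\,s_\mu(a)_k\,\hat\rho=\hat\lambda\ast(\hat\mu\ast\hat\rho)=(\hat\lambda\ast\hat\mu)\ast\hat\rho=\sum_{\nu\in\mathcal{P}_{\leq n-1,k}}\mathcal{N}_{\hat\lambda\hat\mu}^{(k),\hat\nu}\,\hat\nu\ast\hat\rho=\sum_{\nu\in\mathcal{P}_{\leq n-1,k}}\mathcal{N}_{\hat\lambda\hat\mu}^{(k),\hat\nu}\,s_\nu(a)_k\,\hat\rho,
\]
using the definition (\ref{fusiondef}) of the fusion coefficients in the third equality. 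As $\hat\rho$ is arbitrary, this is (\ref{ncschurexp}) as an identity in $\limfunc{End}(\mathbb{C}P_k^+)$. A useful by-product is that the linear span of $\{s_\lambda(a)_k\}_{\lambda}$ is already closed under composition, so the generated algebra $\mathcal{F}_{n,k}^{\prime}$ coincides with that span and has dimension $|P_k^+|$.

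For the isomorphism I would use that the regular representation of a unital algebra is faithful: if $s_\lambda(a)_k=0$ then $\hat\lambda=\hat\lambda\ast\hat\emptyset=s_\lambda(a)_k\hat\emptyset=0$, so $\hat\lambda\mapsto s_\lambda(a)_k$ extends to an algebra isomorphism $\mathcal{F}_{n,k}^{\mathbb{C}}\cong\mathcal{F}_{n,k}^{\prime}$. Composing its inverse with the Korff--Stroppel isomorphism (\ref{bethe iso}), which at $z=1$ sends $\hat\lambda\mapsto[s_{\lambda^t}]$, produces exactly the stated map $s_\lambda(a)_k\mapsto[s_{\lambda^t}]$ and the presentation $\mathcal{F}_{n,k}^{\prime}\cong\mathbb{C}[e_1,\ldots,e_k]/\mathcal{I}(\mathbb{V}_{n,k})$.

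The main point to get right is conceptual rather than computational, namely the identification of the Jacobi--Trudi determinant $s_\lambda(a)_k$ from (\ref{ncschur}) with $\ast$-multiplication by $\hat\lambda$; the slick argument above simply quotes it from the combinatorial-product corollary. A self-contained and more illuminating route is spectral: combining the determinantal definition (\ref{ncschur}) with the scalar eigenvalue equations (\ref{spectrum}) shows that every $s_\lambda(a)_k$ is simultaneously diagonalised by the Bethe eigenbasis $\{b_\sigma\}$ with eigenvalue $S_{\lambda\sigma}/S_{\emptyset\sigma}=s_{\lambda^t}(x^\sigma)$. This realises $\mathcal{F}_{n,k}^{\prime}$ as the algebra of functions on the finite set $\mathbb{V}_{n,k}=\{x^\sigma\}$, with $[s_{\lambda^t}]$ corresponding to the tuple of its eigenvalues. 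On this route the completeness theorem is needed to know that the $x^\sigma$ are distinct and exhaust $\mathbb{V}_{n,k}$, and the Nullstellensatz (through (\ref{bethe ideal})) is needed to know that $\mathbb{V}_{n,k}$ is reduced, so that its coordinate ring is precisely $\mathbb{C}[e_1,\ldots,e_k]/\mathcal{I}(\mathbb{V}_{n,k})$.
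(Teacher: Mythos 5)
Your proposal is correct and follows essentially the same route as the paper: the paper deduces this corollary precisely from the combinatorial product (\ref{product}) (i.e.\ the identification of the restricted affine plactic Schur polynomials with the fusion matrices, which form the regular representation of the fusion ring) together with the Korff--Stroppel presentation (\ref{bethe iso}), which are exactly the steps you fill in. Your associativity computation for (\ref{ncschurexp}) and the faithfulness argument via evaluation at $\hat{\emptyset}$ are just the details the paper leaves implicit, and your spectral remark is the same diagonalization in the Bethe eigenbasis that underlies the paper's proof of the combinatorial-product corollary.
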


\begin{rem}
\textrm{It is common knowledge within the statistical mechanics community
that a set of commuting transfer matrices is the distinguishing property of
an integrable or exactly solvable lattice model \cite{Baxter}. Due to the
development of the quantum inverse scattering method by the Faddeev school,
it is the noncommutative structures, the Yang-Baxter algebras discussed in
Section 3, which have been the centre of attention. The result (\ref%
{ncschurexp}) shows that also the commutative (sub)algebra, the
\textquotedblleft integrals of motion\textquotedblright , have an
interesting structure with applications in representation and, here,
conformal field theory. }
\end{rem}

\section{Recursion formulae for fusion coefficients}

The quantum mechanical interpretation via the Hamiltonian (\ref{H})
identifies the fusion ring as the $k$-particle superselection sector of the
state space $\mathbb{C}P^{+}$. The physical picture of creating and
destroying particles via the maps (\ref{phi*}) and (\ref{phi}) suggests to
investigate how fusion coefficients at different levels are related.

Let us start from the simple observation that according to (\ref{A}) the
operator $A(u)$ does not depend on $a_{n}$. We thus find from (\ref{comm3})
that $A(u)\varphi _{1}=\varphi _{1}A(u)$ and, hence,
\begin{equation}
\varphi _{1}T(u)\varphi _{1}^{\ast }=\varphi _{1}A(u)\varphi _{1}^{\ast
}+z\varphi _{n}\varphi _{1}A(u)\varphi _{1}^{\ast }=T(u)\;.  \label{Trec}
\end{equation}%
Similarly, we can argue that $\boldsymbol{Q}_{0,0}(u)\varphi _{n}^{\ast
}=\varphi _{n}^{\ast }\boldsymbol{Q}_{0,0}(u)$ in (\ref{Q00}) and exploiting
once more (\ref{comm3}) we arrive with the help of (\ref{Qdef2}) at%
\begin{equation}
\varphi _{n}Q(u)\varphi _{n}^{\ast }=Q(u)\;.  \label{Qrec}
\end{equation}%
We stay in the particle picture and set $z=1$. Then the action of $h_{k}(%
\mathcal{A})|_{\mathbb{C}P_{k}^{+}}$ is particularly simple: all particles
on the $\widehat{\mathfrak{su}}(n)$ Dynkin diagram are shifted by one
position, $h_{k}(\mathcal{A})\boldsymbol{m}=(m_{n},m_{1},m_{2},\ldots
,m_{n-1})$. Because of (\ref{ncehcomm}), the above commutation relations (%
\ref{Trec}) and (\ref{Qrec}) generalise for $z=1$ to the maps $\varphi
_{i},\varphi _{i}^{\ast }$ with $1\leq i\leq n$. If we recall that $T$ and $%
Q $ are the generating functions of the affine plactic elementary and
complete symmetric polynomials we obtain immediately the following:

\begin{prp}[recursion formulae]
For any $\hat{\mu},\hat{\nu}\in P_{k}^{+}$ we have the identities%
\begin{equation}
\mathcal{N}_{(\widehat{1^{r}})\varphi _{i}^{\ast }\hat{\mu}}^{(k+1),\varphi
_{i}^{\ast }\hat{\nu}}=\mathcal{N}_{(\widehat{1^{r}})\hat{\mu}}^{(k),\hat{\nu%
}}\qquad \text{and}\qquad \mathcal{N}_{\widehat{(r)}\varphi _{i}^{\ast }\hat{%
\mu}}^{(k+1),\varphi _{i}^{\ast }\hat{\nu}}=\mathcal{N}_{\widehat{(r)}\hat{%
\mu}}^{(k),\hat{\nu}},  \label{fusionrec}
\end{equation}%
where $0\leq r\leq k$ and $i=1,2,\ldots ,n$.
\end{prp}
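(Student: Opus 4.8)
The plan is to read both identities off the generating-function relations (\ref{Trec}) and (\ref{Qrec}) at the level of individual coefficients. First I would observe that the Jacobi--Trudi definition (\ref{ncschur}) specialises to $s_{(1^{r})}(\mathcal{A})=e_{r}(\mathcal{A})$ for a single column and $s_{(r)}(\mathcal{A})=h_{r}(\mathcal{A})$ for a single row, so that the combinatorial product (\ref{product}) reads
\[
e_{r}(\mathcal{A})\hat{\mu}=\sum_{\hat{\nu}\in P_{k}^{+}}\mathcal{N}_{(\widehat{1^{r}})\hat{\mu}}^{(k),\hat{\nu}}\,\hat{\nu},\qquad
h_{r}(\mathcal{A})\hat{\mu}=\sum_{\hat{\nu}\in P_{k}^{+}}\mathcal{N}_{\widehat{(r)}\hat{\mu}}^{(k),\hat{\nu}}\,\hat{\nu}\;.
\]
Hence the two families of fusion coefficients in (\ref{fusionrec}) are exactly the expansion coefficients of $e_{r}(\mathcal{A})$ and $h_{r}(\mathcal{A})$ acting on a weight, and it suffices to treat the $e$-case; the $h$-case is word-for-word identical with $T$, $e_{r}$ replaced by $Q$, $h_{r}$.

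Next I would take the $u^{r}$-coefficient of the generalised relation $\varphi_{i}T(u)\varphi_{i}^{\ast}=T(u)$, valid for $z=1$ and all $1\le i\le n$ as recorded just above, obtaining $\varphi_{i}e_{r}(\mathcal{A})\varphi_{i}^{\ast}=e_{r}(\mathcal{A})$. Applying this to $\hat{\mu}\in P_{k}^{+}$ and multiplying on the left by $\varphi_{i}^{\ast}$ gives
\[
\varphi_{i}^{\ast}\varphi_{i}\,e_{r}(\mathcal{A})\,\varphi_{i}^{\ast}\hat{\mu}=\varphi_{i}^{\ast}e_{r}(\mathcal{A})\hat{\mu}=\sum_{\hat{\nu}\in P_{k}^{+}}\mathcal{N}_{(\widehat{1^{r}})\hat{\mu}}^{(k),\hat{\nu}}\,\varphi_{i}^{\ast}\hat{\nu}\;.
\]
Here I would use $\varphi_{i}\varphi_{i}^{\ast}=1$ from (\ref{comm3}) to see that $\varphi_{i}^{\ast}\varphi_{i}$ is idempotent, and (\ref{Npi}) to identify it as the projector onto weights with $i$-th Dynkin label $m_{i}>0$. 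Since $\varphi_{i}^{\ast}$ raises $m_{i}$ by one, every weight $\varphi_{i}^{\ast}\hat{\nu}$ on the right-hand side has $m_{i}\ge 1$ and is therefore fixed by this projector, so its coefficient in $\varphi_{i}^{\ast}\varphi_{i}\,e_{r}(\mathcal{A})\varphi_{i}^{\ast}\hat{\mu}$ agrees with its coefficient in $e_{r}(\mathcal{A})\varphi_{i}^{\ast}\hat{\mu}$, which is by definition $\mathcal{N}_{(\widehat{1^{r}})\varphi_{i}^{\ast}\hat{\mu}}^{(k+1),\varphi_{i}^{\ast}\hat{\nu}}$ (note $\varphi_{i}^{\ast}\hat{\mu}\in P_{k+1}^{+}$). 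As $\varphi_{i}^{\ast}$ is injective, the right-hand side contributes $\mathcal{N}_{(\widehat{1^{r}})\hat{\mu}}^{(k),\hat{\nu}}$ to that same coefficient, and comparing yields the first identity in (\ref{fusionrec}).

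The only genuinely nontrivial input, and where I expect the main obstacle, is the generalisation of (\ref{Trec}) and (\ref{Qrec}) from the single index $i=1$ (respectively $i=n$) to all $i$. The mechanism is that at $z=1$ the top complete symmetric polynomial $h_{k}(\mathcal{A})$ acts on $\mathbb{C}P_{k}^{+}$ as the invertible cyclic shift $\boldsymbol{m}\mapsto(m_{n},m_{1},\ldots,m_{n-1})$, which conjugates $\varphi_{i}$ to $\varphi_{i+1}$ and $\varphi_{i}^{\ast}$ to $\varphi_{i+1}^{\ast}$ and which commutes with both $T(u)$ and $Q(u)$ by the integrability relations (\ref{ncehcomm}). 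Conjugating $\varphi_{1}T(u)\varphi_{1}^{\ast}=T(u)$ by successive powers of this shift then produces $\varphi_{i}T(u)\varphi_{i}^{\ast}=T(u)$ for all $i$, and likewise for $Q$; the care required is purely the bookkeeping of the level shift $k\mapsto k+1$ induced by $\varphi_{i}^{\ast}$, since the shift operators live on different graded pieces $\mathbb{C}P_{k}^{+}$. Once this is in hand the coefficient comparison above applies verbatim with $h_{r}$ and $Q$ in place of $e_{r}$ and $T$, giving the second identity and completing the proof.
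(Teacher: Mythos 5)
Your proposal is correct and follows essentially the same route as the paper: you extend (\ref{Trec}) and (\ref{Qrec}) to all $i$ by conjugating with the cyclic shift $h_{k}(\mathcal{A})$ at $z=1$, which commutes with $T$ and $Q$ by (\ref{ncehcomm}), and then read off the fusion-coefficient identities from the $u^{r}$-coefficients, exactly as the paper does. Your explicit projector argument via $\varphi_{i}^{\ast}\varphi_{i}$ for the final coefficient comparison merely fills in what the paper leaves implicit (equivalently one can use the adjointness $\langle \varphi_{i}^{\ast}\boldsymbol{m},\boldsymbol{m}^{\prime}\rangle =\langle \boldsymbol{m},\varphi_{i}\boldsymbol{m}^{\prime}\rangle$).
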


\begin{exa}
\textrm{Set $n=5$ and $\mu =(2,2,1,0)$. Then for $r=3$ we find with help of
the algorithm of Section 4.1 the following expansion at level $k=2,$%
\begin{equation*}
\Yvcentermath1\yng(1,1,1)~\ast ~\yng(2,2,1)=\Yvcentermath1\yng(1,1,1)+~\yng%
(2,1)\;.
\end{equation*}%
To verify the first identity in (\ref{fusionrec}) for $i=1$ observe that the
first two terms in the following expansion for $\varphi_1^\ast\mu=(3,2,1)$
at level $k=3$,%
\begin{equation*}
\Yvcentermath1\yng(1,1,1)~\ast ~\yng(3,2,1)=\Yvcentermath1\yng(2,1,1)+~\yng%
(3,1)+~\yng(2,2)+~\yng(3,3,2,1)\;,
\end{equation*}
are obtained from the $k=2$ expansion by adding a one-column.}
\end{exa}

Another set of relations follows from the recursion relations (\ref{Trec})
and (\ref{Qrec}). As we discussed earlier, the boundary parameter $z$ allows
us to project on the finite plactic algebra $\mathcal{A}^{\prime }$ by
setting $z=0$. With the help of (\ref{nce0action}), (\ref{nch0action}) and (%
\ref{product}) one proves the following statement.

\begin{prp}
Let $\hat{\mu},\hat{\nu}\in P_{k}^{+}$ and $\mu ,\nu $ the corresponding
partitions under (\ref{weight2part}), then we have for $r=0,1,\ldots ,n-1$
the following expression in terms of Littlewood-Richardson numbers%
\begin{equation}
\mathcal{N}_{(\widehat{1^{r}})\hat{\mu}}^{(k),\hat{\nu}}=\left\{
\begin{array}{cc}
1, &
\begin{array}{c}
\text{if either }\mu _{1}=\nu _{1}\text{ and }\nu /\mu =(1^{r}) \\
\text{or }\mu _{1}+1=\nu _{1}\text{ and }\varphi _{n}^{\ast }\nu /\varphi
_{1}^{\ast }\mu =(1^{r-1})%
\end{array}
\\
0, & \text{else}%
\end{array}%
\right. \;.  \label{FusionLRv}
\end{equation}%
In case of an horizontal strip of length $r=0,1,\ldots ,k$ one has instead
the recursion relation%
\begin{equation}
\mathcal{N}_{\widehat{(r)}\hat{\mu}}^{(k),\hat{\nu}}=\left\{
\begin{array}{cc}
1, & \text{if }\mu _{1}=\nu _{1}\text{ and }\nu /\mu =(r) \\
\mathcal{N}_{\widehat{(r-1)}~\varphi _{n}\hat{\mu}}^{(k-1),\varphi _{1}\hat{%
\nu}}, & \text{else}%
\end{array}%
\right. \;.  \label{FusionLRh}
\end{equation}
\end{prp}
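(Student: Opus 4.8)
The plan is to move everything onto the state space $\mathbb{C}P_k^+$ and read off matrix elements in the orthonormal weight basis. By the combinatorial product (\ref{product}) the fusion coefficient is the matrix element $\mathcal{N}_{\hat\lambda\hat\mu}^{(k),\hat\nu}=\langle\hat\nu,s_\lambda(\mathcal{A})\hat\mu\rangle$, and the noncommutative Jacobi--Trudi identities (\ref{ncJT}) specialise, for a single column and a single row, to $s_{(1^r)}(\mathcal{A})=e_r(\mathcal{A})$ and $s_{(r)}(\mathcal{A})=h_r(\mathcal{A})$. Hence the two statements reduce to evaluating $\langle\hat\nu,e_r(\mathcal{A})\hat\mu\rangle$ and $\langle\hat\nu,h_r(\mathcal{A})\hat\mu\rangle$ at $z=1$. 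Throughout I would use the adjointness $\langle\varphi_i^\ast\boldsymbol{m},\boldsymbol{m}'\rangle=\langle\boldsymbol{m},\varphi_i\boldsymbol{m}'\rangle$ to shift the affine operators $\varphi_1^\ast,\varphi_n$ from one side of the scalar product to the other.

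For the vertical-strip formula (\ref{FusionLRv}) I would insert the recursion (\ref{ncerec}) at $z=1$, namely $e_r(\mathcal{A})=e_r(\mathcal{A}')+\varphi_n e_{r-1}(\mathcal{A}')\varphi_1^\ast$, into $\langle\hat\nu,e_r(\mathcal{A})\hat\mu\rangle$. The first summand is evaluated directly by the Pieri-type Lemma (\ref{nce0action}): it contributes to $\hat\nu$ exactly when $\hat\nu$ arises from $\hat\mu$ by a vertical $r$-strip with $\mu_1=\nu_1$, giving multiplicity $1$, which is the first alternative. For the second summand I would move $\varphi_n$ across the product, $\langle\hat\nu,\varphi_n e_{r-1}(\mathcal{A}')\varphi_1^\ast\hat\mu\rangle=\langle\varphi_n^\ast\hat\nu,e_{r-1}(\mathcal{A}')\varphi_1^\ast\hat\mu\rangle$, and apply (\ref{nce0action}) again, now at level $k+1$, to $\varphi_1^\ast\hat\mu$; this equals $1$ precisely when $\varphi_n^\ast\hat\nu/\varphi_1^\ast\hat\mu$ is a vertical $(r-1)$-strip with matching first part, which forces $\mu_1+1=\nu_1$ and gives the second alternative. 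Because (\ref{ncerec}) contains only the \emph{finite} polynomial $e_{r-1}(\mathcal{A}')$, both summands are closed Pieri evaluations, no recursion appears, and the two alternatives are disjoint since they pin down $\nu_1=\mu_1$ and $\nu_1=\mu_1+1$ respectively.

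For the horizontal-strip formula (\ref{FusionLRh}) I would proceed identically but with the complete-symmetric recursion (\ref{nchrec}) at $z=1$, $h_r(\mathcal{A})=h_r(\mathcal{A}')+\varphi_1^\ast h_{r-1}(\mathcal{A})\varphi_n$. The first summand is handled by Lemma (\ref{nch0action}) and produces the base case, multiplicity $1$ when $\mu_1=\nu_1$ and $\nu/\mu=(r)$ is a horizontal $r$-strip. The essential difference is that the second summand retains the \emph{full} affine polynomial $h_{r-1}(\mathcal{A})$; moving $\varphi_1^\ast$ across the scalar product gives $\langle\hat\nu,\varphi_1^\ast h_{r-1}(\mathcal{A})\varphi_n\hat\mu\rangle=\langle\varphi_1\hat\nu,h_{r-1}(\mathcal{A})\varphi_n\hat\mu\rangle$, which by (\ref{product}) is again a fusion coefficient, namely $\mathcal{N}_{\widehat{(r-1)}\,\varphi_n\hat\mu}^{(k-1),\varphi_1\hat\nu}$, the level dropping by one because $\varphi_n$ lowers $k$. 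This is exactly the recursive second branch.

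The step I expect to be the main obstacle is the bounding-box bookkeeping hidden in the Pieri lemmas: the operators $e_r(\mathcal{A}'),h_r(\mathcal{A}')$ first build a partition $\lambda$ in the enlarged $n\times k$ box and only afterwards pass to $\lambda'$ by deleting all height-$n$ columns, so the clean skew-strip description on $\nu=\lambda'$ requires carefully tracking when such columns are created (the cylindric wrapping) and verifying that this never raises a multiplicity above $1$ nor makes the two branches overlap. A secondary point to check is the boundary behaviour of the affine maps—when $\mu_1=k$ or $\nu_1=0$ the vectors $\varphi_n\hat\mu$ or $\varphi_1\hat\nu$ may vanish—so that the recursion in (\ref{FusionLRh}) terminates correctly and the case split in (\ref{FusionLRv}) stays exhaustive.
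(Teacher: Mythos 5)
Your skeleton is the one the paper itself gestures at (its ``proof'' is a single sentence invoking (\ref{nce0action}), (\ref{nch0action}) and (\ref{product})), and much of your execution is sound: the reduction $\mathcal{N}_{\hat\lambda\hat\mu}^{(k),\hat\nu}=\langle\hat\nu,s_\lambda(\mathcal{A})\hat\mu\rangle$, the identities $s_{(1^r)}(\mathcal{A})=e_r(\mathcal{A})$ and $s_{(r)}(\mathcal{A})=h_r(\mathcal{A})$ from (\ref{ncJT}), the splittings (\ref{ncerec}) and (\ref{nchrec}) at $z=1$, the adjointness moves, and the observation that only the $h$-recursion leaves a genuinely recursive term. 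The genuine gap sits exactly where you filed the issue under ``bookkeeping'', and it is not bookkeeping: the Pieri lemmas return $\lambda'$, the partition with all height-$n$ columns deleted, and these wrapped terms are \emph{additional nonzero contributions} which your branch-by-branch identification silently discards. Your assertion that the first summand ``contributes to $\hat\nu$ exactly when $\hat\nu$ arises from $\hat\mu$ by a vertical $r$-strip with $\mu_1=\nu_1$'' is contradicted by the paper's own worked example: for $n=5$, $k=3$, $\mu=(2,2,1)$ one has $e_3(\mathcal{A}')\mu=(1,1,1)$, coming from $\lambda=(2,2,2,1,1)$ with $\lambda_1=\mu_1$ and $\lambda'=(1,1,1)$; the resulting $\nu=(1,1,1)$ has $\mathcal{N}_{(\widehat{1^3})\hat\mu}^{(3),\hat\nu}=1$ at $z=1$, yet it satisfies neither $\nu_1=\mu_1$ nor $\nu_1=\mu_1+1$, so it lies in neither of your two branches. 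Done honestly, your computation yields four families of contributions, not two: from $e_r(\mathcal{A}')$ the unwrapped case $\lambda=\nu$ (your first branch) and the wrapped case $\lambda=\nu+(1^n)$, which forces $\nu_1=\mu_1-1$ and amounts to $\mu/\nu$ being a vertical $(n-r)$-strip; and from $\varphi_n e_{r-1}(\mathcal{A}')\varphi_1^\ast$ the unwrapped case (your second branch) and the wrapped case $\tilde\lambda=\nu+(1^n)$, which forces $\nu_1=\mu_1$ with $\nu\subseteq\mu$.

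The same defect breaks your horizontal base case. Take $n=5$, $k=1$, $\mu=(1,1,1,1)$, $r=1$, $\nu=\emptyset$: level-one fusion is addition of fundamental weights modulo $n$, so the coefficient is $1$, and in your decomposition it is produced entirely by the wrapped term $h_1(\mathcal{A}')\mu=\emptyset$ (via $\lambda=(1^5)$); your formula instead returns $0$, because the base case fails ($\nu_1\neq\mu_1$) and $\varphi_1\hat\emptyset=0$ kills the recursive term. What your method actually proves, once the wrapped representatives are tracked, is the clean four-case rule: $\mathcal{N}_{(\widehat{1^r})\hat\mu}^{(k),\hat\nu}=1$ iff $\nu/\mu$ is a vertical $r$-strip or $\mu/\nu$ is a vertical $(n-r)$-strip, and zero otherwise, together with an extra wrapped base case in (\ref{FusionLRh}). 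To connect this with the proposition you must additionally show that each wrapped contribution is precisely what the stated alternatives assert when the partition attached to a weight is allowed to be replaced by its representative $\nu+(1^n)$ in the enlarged bounding box (this cylindric reading is the only one consistent with the paper's examples), and you must check mutual exclusivity of all four cases, which does hold: they force $\nu_1\in\{\mu_1-1,\mu_1,\mu_1+1\}$, and the two cases with $\nu_1=\mu_1$ require $\nu\supseteq\mu$ and $\nu\subseteq\mu$ respectively. That identification is the real content of the proof; framing it as a residual check that multiplicities stay below two and the branches do not overlap misses precisely the terms that distinguish the fusion Pieri rule from the classical one.
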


\begin{exa}
\textrm{Again we set $n=5$ and $\mu =(2,2,1,0)$. Then at level $k=4$ one
computes via the above algorithm the expansion%
\begin{equation*}
\Yvcentermath1\yng(3)~\ast ~\yng(2,2,1)=\Yvcentermath1\yng(3,2,2,1)+~\yng%
(4,2,1,1)+~\yng(4,2,2)\;.
\end{equation*}%
Since for all partitions $\nu $ appearing in the expansion one has $\nu
_{1}>\mu _{1}$, only the second case in (\ref{FusionLRh}) applies and, thus,
we find the following nonzero fusion coefficients at level $k=3$,%
\begin{equation*}
\mathcal{N}_{(2)~(2,2,1)}^{(3),(2,2,2,1)}=\mathcal{N}%
_{(2)~(2,2,1)}^{(3),(3,2,1,1)}=\mathcal{N}_{(2)~(2,2,1)}^{(3),(3,2,2)}=1\;.
\end{equation*}%
The latter can be verified by using again the presentation (\ref{bethe iso})
in the ring of symmetric functions and the resulting algorithm or the
Verlinde formula. }
\end{exa}

\begin{rem}
\textrm{Setting $z=0$ in (\ref{bethe ideal}) it appears at first sight that
we should expect to obtain the cohomology ring of the Grassmannian,
\begin{equation*}
H^{\ast }(Gr_{n-1,n+k-1})\cong \mathbb{Z}[e_{1},\ldots ,e_{n}]/\left\langle
h_{n},\ldots ,h_{n+k-1}\right\rangle \;.
\end{equation*}%
However, because of the $z$-dependence entering the construction of the
eigenvectors (\ref{onshellbvector}) one cannot conclude that the product (%
\ref{product}) specialises to the cup product in $H^{\ast }(Gr_{n-1,n+k-1})$%
. Hence, the structure constants of the algebra defined via (\ref{product})
become in this special case Littlewood-Richardson coefficients with the
additional constraint that $\mu _{1}=\nu _{1}$. }
\end{rem}

\section{Conclusions}

While we have focussed here on a particularly simple integrable model and a
very special ring, the observations made should generalise to a wider class
of exactly solvable lattice models. Let us outline the general concept.

Starting point in the construction of exactly solvable lattice models are in
general solutions to the Yang-Baxter, or more generally, the star-triangle
equation. The overwhelming majority of these solutions can be constructed
with the help of representations of some \emph{noncommutative} algebras,
such as $q$-deformed enveloping algebras of Kac-Moody algebras
(Drinfel'd-Jimbo \textquotedblleft quantum groups\textquotedblright ) or
elliptic generalisations thereof. For the example at hand the noncommutative
algebras in question are the phase and affine plactic algebra and it has
been explained in \cite{Korff} how these have their origin in the $q$%
-deformed enveloping algebra $U_{q}\widehat{\mathfrak{sl}}(2)$. Once the
solutions are known explicitly one can interpret their matrix elements as
the Boltzmann weights of a statistical vertex model - depending on a free
parameter - and construct the corresponding transfer matrices to compute its
partition function. Because the Boltzmann weights satisfy the Yang-Baxter
equation the transfer matrices commute, thus defining a \emph{commutative}
(and associative) algebra or ring despite being built from the generators of
a \emph{noncommutative} algebra. We have seen how the transfer matrices are
generating functions for polynomials in the alphabet $(a_{1},\ldots ,a_{n})$
and while the letters $a_{i}$ of this alphabet do not commute the affine
plactic Schur polynomials (\ref{ncschur}) do.

Via the Bethe ansatz one then computes the idempotents of this commutative
algebra, showing that it is semi-simple, and also its structure constants,
the fusion coefficients and their expression in terms of the Verlinde
formula. While the algebraic Bethe ansatz employed here is special to the $%
U_{q}\widehat{\mathfrak{sl}}(2)$ case, generalisations of it which are
applicable to higher rank, such as the nested, analytic or coordinate Bethe
ansatz might be used instead. It is true for the majority of models solvable
by the Bethe ansatz that the Bethe ansatz equations determining the
eigenvectors of the transfer matrix (or the idempotents of the commutative
algebra) are given in terms of some elements in a commutative polynomial
ring $\mathcal{R}$ and hence define an affine variety $\mathbb{V}$. In the
example discussed here the polynomial ring $\mathcal{R}$ has been the the
ring of symmetric functions $\mathcal{R}=\mathbb{C}[e_{1},\ldots ,e_{k}]$.
In the next step one determines the nullstellen ideal $\mathcal{I}(\mathbb{V}%
)$ and considers the quotient $\mathcal{R}/\mathcal{I}(\mathbb{V})$ which is
isomorphic to the ring generated by the transfer matrices, namely we saw in
the model discussed here that the eigenvalues of the affine plactic Schur
polynomials for fixed particle number or level $k$ could be identified with
elements in the quotient ring; see Corollary \ref{spectral curve}.

In general it is not true that the Bethe ansatz equations can be solved,
i.e. the variety $\mathbb{V}$ is not known explicitly. Nevertheless one
might still be able to determine the nullstellen ideal $\mathcal{I}(\mathbb{V%
})$ and perform computations in $\mathcal{R}/\mathcal{I}(\mathbb{V})$,
similar to the computations of the fusion coefficients performed in Section
4, where we only used the abstract from of the polynomial equations (\ref%
{bethe ideal}) but not the explicit solutions (\ref{broots}).

From these observations a natural classification question arises: can the
commutative algebras arising from integrable vertex models associated with
the quantum groups $U_{q}\mathfrak{\hat{g}}$ be identified and do their
structure constants have a similar representation theoretic interpretation?
We hope to address this question in future work.\medskip

\textbf{Acknowledgments}. The research of the author is carried out under a
University Research Fellowship of the Royal Society. Part of the results
summarised where presented during the workshop \emph{Infinite Analysis 10}
\textquotedblleft \emph{Developments in Quantum Integrable Systems}%
\textquotedblright\ held at the Research Institute for Mathematical
Sciences, Kyoto, Japan during June 14-16, 2010. The author wishes to thank
the organisers, Professors Michio Jimbo, Atsuo Kuniba, Tetsuji Miwa, Tomoki
Nakanishi, Masato Okado, Yoshihiro Takeyama for their kind invitation and
hospitality.


\end{document}